\numberwithin{equation}{section}
\newenvironment{proof}{{\bf Proof:  }}{\hfill\rule{2mm}{2mm}}
\newtheorem{proposition}{Proposition}[section]
\newtheorem{definition}{Definition}[section]
\newtheorem{remark}{Remark}[section]
\newtheorem{theorem}{Theorem}[section]
\newtheorem{lemma}{Lemma}[section]
\newtheorem{fact}{Fact}[section]
\newenvironment{theorem*}{{\bf Theorem}}{}
\newcommand{\R}{\ensuremath{\mathbb R}}
\newcommand{\N}{\ensuremath{\mathbb N}}
\newcommand{\A}{\ensuremath{\mathcal A}}
\newcommand{\D}{\ensuremath{\mathcal D}}
\newcommand{\calP}{\ensuremath{\mathcal P}}
\newcommand{\calR}{\ensuremath{\mathcal R}}
\newcommand{\scrA}{\ensuremath{\mathscr A}}
\newcommand{\expct}[1]{\ensuremath{\text{{\bf E}$\left[#1\right]$}}}
\newcommand{\expctsub}[2]{\ensuremath{\text{{\bf E}$_{#1}\left[#2\right]$}}}
\newcommand{\CP}{\ensuremath{\mathsf{CP}}}
\newcommand{\CD}{\ensuremath{\mathsf{CD}}}
\newcommand{\LP}{\ensuremath{\mathsf{LP}}}
\newcommand{\txts}{\textstyle}
\newcommand{\hubert}[1]{{\footnotesize\color{red}[Hubert: #1]}}
\newcommand{\ignore}[1]{}
\begin{document}

\title{A Primal-Dual Continuous LP Method on the Multi-choice Multi-best Secretary Problem}
\author{T-H. Hubert Chan}
\author{Fei Chen}
\date{}
\affil{The University of Hong Kong}
\clearpage\maketitle

\begin{abstract}
The $J$-choice $K$-best secretary problem, also known as the $(J,K)$-secretary problem, is a generalization of the classical secretary problem. An algorithm for the $(J,K)$-secretary problem is allowed to make $J$ choices and the payoff to be maximized is the expected number of items chosen among the $K$ best items. 

Previous works analyzed the case when the total number $n$ of items is finite,
and considered what happens when $n$ grows.  However, for general $J$ and $K$,
the optimal solution for finite $n$ is difficult to analyze.  Instead,
we prove a formal connection between the finite model and the infinite model,
where there are countably infinite number of items, each attached with a random arrival time drawn independently and uniformly from $[0,1]$.

We use primal-dual continuous linear programming techniques to analyze a class of infinite algorithms, which are general enough to capture the asymptotic behavior of the finite model with large number of items.
Our techniques allow us to prove that the optimal solution can be achieved by
the $(J,K)$-Threshold Algorithm, which has a nice ``rational description'' for 
the case $K = 1$.

%

\end{abstract}

\thispagestyle{empty}

\newpage

\pagestyle{plain}
\setcounter{page}{1}

\section{Introduction} \label{sec:intro}

The classical secretary problem has been popularized
in the 1950s, and since then various variants and solutions
for the problem have been studied. Freeman~\cite{Freeman83}
and Ferguson~\cite{Ferguson89} both have written survey
papers, which describe the history of the problem and contain
many references.

The most recent related paper to our work
is by Buchbinder, Jain and Singh~\cite{Buchbinder09},
who considered the
$J$-choice $K$-best secretary problem (also known as the $(J,K)$-secretary problem);
the $(J,1)$-case is sometimes simply referred to as the $J$-case.
The input to the problem is $n$ items (whose merits
are given by a total ordering) that arrive in a uniformly random permutation.
An \emph{algorithm} can observe the relative
merit of items arrived so far, and must decide irrevocably
if an item is chosen (or selected) when it arrives.
 In the $(J,K)$-case, the algorithm is allowed to make $J$ choices
 and the objective \emph{payoff} to be maximized is the expected number of items chosen
 among the best $K$ items, where expectation is over the random arrival permutation.
 The simple $(1,1)$-case is the classical secretary problem.  As observed
 in~\cite{Buchbinder09}, the $(K,K)$-case is equivalent to a variant
 considered by Kleinberg~\cite{Kleinberg05}.

For finite $n$ items, they gave a linear programming formulation,
which completely characterizes the $(J,K)$-secretary problem (and also other variants).
However, the limiting behavior for large $n$ is often the interesting
aspect of the problem.  For instance, the asymptotic optimal payoff for
the $(1,1)$-case is $\frac{1}{e}$ (where $e$ is the natural number), which is achieved by the simple algorithm
of discarding the first $\frac{n}{e}$ items, and after that choosing the first
\emph{potential}, where an item is a potential
if it is the best item arrived so far.
The authors in~\cite{Buchbinder09} were able to extend this method to
the $(2,1)$-case and showed that the optimal payoff
$\frac{1}{e} + \frac{1}{e^{1.5}}$ can be achieved by a similar
method involving the $\frac{n}{e}$-th and $\frac{n}{e^{1.5}}$-th items.
However, for more general cases, they did not give a way to 
analyze the asymptotic behavior,
nor how to derive a ``simple'' algorithm from an optimal LP solution.  For instance,
they have only claimed that the asymptotic optimal payoff for the $(1,2)$-case is about 0.572284
(which we later show is actually about 0.573567 whose value
we also verify by the finite LP for large $n$).

Other input models have been considered to analyze the problem for large $n$.
For instance, Bruss introduced the \emph{continuous model}~\cite{Bruss84}, in which
instead of a random arrival permutation, each item picks an arrival
time independently and uniformly at random from $[0,1]$.  
This formulation allows a \emph{threshold} algorithm, which does not need
to know $n$ in advance: the algorithm discards all items arriving before time $\frac{1}{e}$
and after time $\frac{1}{e}$ chooses the first potential.
However, this input model
is equivalent to the one before in terms of the optimal payoff,
and hence the asymptotic optimal payoff $\frac{1}{e}$ is approached only when
$n$ tends to infinity; for finite $n$, the optimal payoff is always
strictly larger than $\frac{1}{e}$. Would it not be neat to have
an abstract ``infinite'' instance of the problem whose optimal
payoff is the asymptotic optimal payoff for large $n$?

Immorlica et al.~\cite{Immorlica06} extended the continuous
model to the \emph{infinite model} where the number of items is countably infinite.
They considered multiple employers competing for the best item under the
infinite model, but there was no formal treatment for the connection
with the finite case.

\noindent \textbf{Our Contribution and Results.} We use the infinite model as
a tool to analyze the $(J,K)$-secretary problem for large $n$.  In particular,
we have made the following technical contributions.

\noindent (1) We give a formal treatment of the infinite model
and define a special class $\scrA$ of \emph{piecewise continuous} 
infinite algorithms, which
include the $(J,K)$-\emph{Threshold Algorithm}. For the $(J,1)$-case,
the algorithm is characterized by $J$ thresholds $0 < t_J \leq \cdots \leq t_1 \leq 1$
and items are selected according to the simple rule:
\begin{compactitem}
\item[(i)] At every threshold time $t_j$, 1 additional quota for selection is made
available to the algorithm; the initial quota is 0.
\item[(ii)] When an arriving item is a potential (i.e., the best among all arrived items)
and the number of available quotas is non-zero, then the algorithm selects the item and uses one available quota.
\end{compactitem}
We define the general $(J,K)$-Threshold
Algorithm in Section~\ref{sec:pre}.

\noindent (2) We show that the $(J,K)$-secretary problem restricted to our class $\scrA$
of infinite algorithms has a \emph{continuous linear programming} formulation (see Tyndall~\cite{Tyndall65} and Levinson~\cite{Levinson66}),
which extends the LP formulation for finite $n$ in~\cite{Buchbinder09}.  Moreover,
we show that the optimal payoff of the finite LP approaches that of the continuous LP
for large $n$.  
Furthermore,
if an optimal infinite algorithm also satisfies some
\emph{monotone} property (which we show
indeed is the case for Threshold Algorithms),
then the algorithm
can be applied
to any finite case with the same payoff.  Hence, we have established
a formal connection between the infinite model (under algorithm class $\scrA$) and the finite model.

\noindent (3) By considering duality and complementary slackness properties
of continuous LP,  we give a \emph{clean} primal-dual method
to find the optimal thresholds for the $(J,K)$-Threshold
Algorithm, and at the same time prove that the algorithm
is optimal by exhibiting a feasible dual that satisfies
complementary slackness.  In particular, we discover that the optimal strategy
for the $(J,1)$-case has a very nice structure, that has a representation using
rational numbers.  We believe it would be too tedious to directly
analyze the limiting behavior of the finite model for reaching the same
conclusion.

%
%


\begin{theorem}[Main Theorem]
\label{th:main0}
There is a procedure to find appropriate thresholds for which
the $(J,K)$-Threshold Algorithm (defined in Section~\ref{sec:pre}
together with thresholds $(\tau_{j,k})_{j \in [J], k \in [K]}$)
is optimal in the infinite model (under algorithm class $\scrA$), and the algorithm can be applied
to the finite model with $n$ items to achieve optimal asymptotic payoff for large $n$. The optimal payoff is $J-\sum_{j=1}^{J} (1-\tau_{j,1})^K$.
\end{theorem}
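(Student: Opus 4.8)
The plan is to assemble the pieces promised in contributions (1)--(3) of the introduction into one primal--dual argument. First I would instantiate, for the given $J$ and $K$, the continuous linear program $\CP$ that measures the payoff of algorithms in the class $\scrA$ on the infinite model, together with its Tyndall--Levinson dual $\CD$; this $\CP$ is the continuous analogue of the finite LP of~\cite{Buchbinder09}, its primal unknowns being the piecewise continuous ``selection densities'' of an algorithm in $\scrA$ as functions of arrival time. The $(J,K)$-Threshold Algorithm of Section~\ref{sec:pre}, equipped with a tuple of thresholds $(\tau_{j,k})_{j\in[J],\,k\in[K]}$, realizes one particular feasible primal point of $\CP$; write $\mathrm{val}(\tau)$ for its objective value.

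Second, I would extract the \emph{procedure} claimed in the theorem from complementary slackness for the pair $(\CP,\CD)$. Positing that the threshold point is optimal, complementary slackness forces the dual variables to vanish exactly on the time intervals where the Threshold Algorithm is active, and to make the dual constraints tight there; transcribing this yields a triangular system of equations whose unknowns are the dual functions together with the breakpoints $(\tau_{j,k})$, with boundary conditions pinned at the $\tau_{j,k}$ themselves. This system admits a unique solution, obtained by a back-substitution recursion over the blocks indexed by $(j,k)$; carrying out this recursion \emph{is} the procedure, and for $K=1$ it degenerates to the ``rational description'' mentioned above. The step I expect to be the genuine obstacle comes next: one must verify that the dual point manufactured this way is in fact \emph{feasible} for $\CD$. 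Complementary slackness pins the dual down only on the algorithm's active region; one must still exhibit it on the complementary region and check that the dual inequalities hold throughout $[0,1]$, and for this I would need monotonicity/convexity control on the explicit dual expressions. Once global dual feasibility is confirmed, weak duality for continuous LP, together with the fact that $\mathrm{val}(\tau)$ equals the dual objective value, certifies that the $(J,K)$-Threshold Algorithm is optimal within $\scrA$ in the infinite model.

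Third, to transfer this to the finite model I would invoke the two structural facts already promised in contribution (2): the optimal value of the finite LP of~\cite{Buchbinder09} converges to the optimum of $\CP$ as $n\to\infty$, and any \emph{monotone} optimal infinite algorithm -- which the Threshold Algorithm is shown to be -- can be run verbatim on an $n$-item instance with the same payoff. Chaining these shows the $(J,K)$-Threshold Algorithm attains the optimal asymptotic payoff for large $n$. Finally, for the payoff formula I would evaluate the objective of $\CP$ at the threshold primal point: using that a fixed item arriving at time $t$ is among the top $k$ items seen so far with a weight governed by the appropriate power of $1-t$, the objective decomposes over the $J$ quotas, and a short computation (telescoping over the intervals between consecutive breakpoints) shows the $j$-th quota contributes $1-(1-\tau_{j,1})^{K}$, hence a total of $J-\sum_{j=1}^{J}(1-\tau_{j,1})^{K}$. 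As a consistency check I would confirm that this agrees with the dual objective value, which it must by the optimality established above.
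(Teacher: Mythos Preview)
Your plan is essentially the paper's own argument: instantiate $\CP(J,K)$ and its dual $\CD(J,K)$, write down the complementary-slackness system (the paper's equations~\eqref{eq:slack_jk}), solve it by a double recursion over $(j,k)$ to produce the thresholds and dual functions, verify global dual feasibility, and then transfer to finite $n$ via monotonicity of the Threshold Algorithm together with the $\LP_n\to\CP$ convergence. The dual-feasibility verification is indeed the crux; the paper handles it by carrying along the inductive invariant $r_{j|K}(x)>r_{j-1|K}(x)$ on $(\tau_{j,1},1)$ and the pointwise inequalities $\alpha_k(x)>\alpha_{k+1}(x)$, $(x\alpha_k(x))'>0$.

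Two small corrections. First, a slip: complementary slackness forces $q_{j|k}$ to vanish on the region where the algorithm is \emph{inactive} (namely $[0,\tau_{j,k}]$), not where it is active; the dual constraints become equalities on the active region $[\tau_{j,k},1]$. Second, the payoff formula is much easier to read off the \emph{dual} side than the primal: evaluating~\eqref{eq:slack_jk} at $x=\tau_{j,1}$ gives the telescoping identity $\int_{\tau_{j,1}}^{1}\sum_k q_{j|k}-\int_{\tau_{j-1,1}}^{1}\sum_k q_{(j-1)|k}=\tau_{j,1}\alpha_1(\tau_{j,1})=1-(1-\tau_{j,1})^K$, so the dual objective is $J-\sum_j(1-\tau_{j,1})^K$ directly. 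Your proposed route through the primal would require closed forms for the $p_{j|k}$, which are not simple; the ``per-quota contribution $1-(1-\tau_{j,1})^K$'' decomposition is a dual artifact, not an obvious primal probability.
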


For $K=1$ with thresholds denoted by $t_j = \tau_{j,1}$ for $j\in[J]$, 
the optimal solution has a particularly nice structure (see Table~\ref{table:J}).
Moreover, there is a method to generate the $J$ optimal thresholds in $O(J^3)$ time, which we explicitly describe in Section~\ref{sec:generate}.


\begin{theorem}[$(J,1)$-Secretary Problem] \label{th:main1}
There is a procedure to construct an increasing sequence $\{\theta_j\}_{j \geq 1}$
of rational numbers such that for any $J \geq 1$,
the optimal $J$-Threshold Algorithm
uses thresholds $\{t_j := \frac{1}{e^{\theta_j}} |\, 1 \leq j \leq J\}$ (that
can be computed in $O(J^3)$ time)
and has payoff $\sum_{j=1}^J t_j$.
\end{theorem}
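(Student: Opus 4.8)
The plan is to specialize the primal--dual machinery behind Theorem~\ref{th:main0} to the case $K=1$ and extract an explicit recursion for the thresholds. The payoff formula requires nothing new: substituting $K=1$ into the expression $J-\sum_{j=1}^{J}(1-\tau_{j,1})^{K}$ from the Main Theorem gives $J-\sum_{j=1}^{J}(1-t_j)=\sum_{j=1}^{J}t_j$, so the entire remaining content is the \emph{structure} of the optimal thresholds. First I would write down the complementary-slackness conditions that the primal--dual argument attaches to an optimal $(J,1)$-Threshold Algorithm. In the infinite model the natural coordinate is $\theta_j:=\ln(1/t_j)$, because the arrivals of \emph{potentials} in a time interval $[t,1]$ are governed by the logarithmic measure $\int \frac{ds}{s}$; the probability that the algorithm ever selects a best item, written as a function of the thresholds, decomposes into a finite sum of Poisson-type terms of the form $t_j\cdot\frac{(\theta_j-\theta_i)^{m}}{m!}$. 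Consequently the stationarity equations --- equivalently, the equations forcing the dual variables to be feasible and to satisfy complementary slackness --- become \emph{polynomial} relations in $\theta_1<\theta_2<\cdots<\theta_J$ with rational coefficients, instead of transcendental relations in the $t_j$ themselves.

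The key structural claim, which drives everything, is that this polynomial system is triangular and each successive equation is affine in the newest unknown: once $\theta_1,\ldots,\theta_{j-1}$ are fixed, the equation determining $\theta_j$ has the form $A_j\,\theta_j=B_j$, where $A_j$ and $B_j$ are explicit polynomials in $\theta_1,\ldots,\theta_{j-1}$ with rational coefficients. The base case is the classical $\theta_1=1$, and the $(2,1)$-case of \cite{Buchbinder09} is recovered as $\theta_2=\tfrac32$. By induction every $\theta_j$ is rational; in particular the equation for $\theta_j$ does not depend on $J$, which is exactly why a single increasing sequence $\{\theta_j\}_{j\ge1}$ serves all the $(J,1)$-problems simultaneously. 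That the sequence is increasing ($\theta_1\le\cdots\le\theta_J$, in fact strictly) follows either from the sign pattern of the recursion $A_j\theta_j=B_j$ or, equivalently, from feasibility of the primal constraints that enforce the threshold ordering $0<t_J\le\cdots\le t_1$. Exhibiting the matching feasible dual that satisfies complementary slackness --- the primal--dual construction promised in contribution~(3) --- certifies optimality, and this is precisely what licenses the appeal to Theorem~\ref{th:main0} and the reuse of its payoff formula $\sum_{j=1}^{J}t_j$.

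For the running time: evaluating $A_j$ and $B_j$ from $\theta_1,\ldots,\theta_{j-1}$ amounts to summing $O(j)$ terms, each obtainable from previously tabulated partial products and sums in $O(j)$ rational-arithmetic operations (or $O(1)$ with careful incremental bookkeeping), hence $O(j^2)$ per step and $O(J^3)$ in total; the detailed accounting and the pseudocode are deferred to Section~\ref{sec:generate}. The main obstacle I anticipate is the structural claim of the second paragraph: one must pin down \emph{which} primal constraints are tight at the optimum --- combinatorially, in which time interval each of the $J$ quotas is consumed --- and argue that this tight pattern is itself forced, so that the complementary-slackness system is well defined and collapses to the clean triangular, affine form under the substitution $t_j=e^{-\theta_j}$. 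Once that is in hand, the rationality, monotonicity, and complexity statements are all routine bookkeeping on rational arithmetic.
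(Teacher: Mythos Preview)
Your proposal follows the same primal--dual route as the paper and correctly identifies the structure of the answer, but the paper discharges your ``main obstacle'' more directly than via primal stationarity in Poisson-type coordinates. Rather than expanding the payoff in terms like $t_j\,(\theta_j-\theta_i)^m/m!$ and taking first-order conditions, the paper works entirely with the dual: the complementary-slackness equations $q_j(x)+\tfrac{1}{x}\int_x^1[q_j(y)-q_{j-1}(y)]\,dy=1$ on $[t_j,1]$, together with the extra boundary condition $q_j(t_j)=0$, are differentiated to give the explicit recursion $q_j(x)=1+\ln x+\int_x^1 q_{j-1}(y)/y\,dy$. A short induction then shows that between successive thresholds each $q_j$ is a polynomial in $\ln x$ with rational coefficients, so $d_j:=\int_{t_j}^1 q_j(y)/y\,dy$ is rational; for $x\le t_j$ one has $q_{j+1}(x)=1+\ln x+d_j$, and setting this to zero yields $\theta_{j+1}=1+d_j$ outright. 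This is exactly your triangular, affine-in-the-newest-unknown structure, but it falls out of the dual ODE without first having to guess the tight pattern of primal constraints --- the condition $q_j(t_j)=0$ is simply \emph{imposed} as the boundary condition that makes the dual feasible, and the paper verifies inside the same induction that it forces $\theta_{j+1}>\theta_j$ and $q_{j+1}\ge 0$. The $O(J^3)$ procedure of Section~\ref{sec:generate} then just maintains the rational coefficients of these $\ln x$-polynomials across the $j$ pieces, matching your complexity accounting.
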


\begin{table}[H]
\centering
\begin{tabular}{|c|c|c|}
\hline
J & Payoff & $\theta_J$ \\
\hline
1 & 0.367879 & $1$ \\
\hline
2 & 0.591010 & $\frac{3}{2} = 1.5$ \\
\hline
3 & 0.732103 & $\frac{47}{24} \approx 1.958333$ \\
\hline
4 & 0.823121 & $\frac{2761}{1152} \approx 2.396701$ \\
\hline
5 & 0.882550 & $\frac{4162637}{1474560} \approx 2.822969$ \\
\hline
6 & 0.921675 & $\frac{380537052235603}{117413668454400} \approx  3.240994$ \\
\hline
7 & 0.947588 & $\frac{705040594914523588948186792543}{193003573558876719588311040000} \approx 3.652992$ \\
\hline
8 & 0.964831 & $\frac{302500210177484374840641189918370275991590974715547528765249}{74500758812993473612938854416966977838930799571763200000000} \approx 4.060364$ \\
\hline
\end{tabular}
\caption{Optimal Payoffs for $J$-secretary problem}
\label{table:J}
\end{table}

For general $K \geq 2$, 
the optimal solution does not have a nice structure, but
the continuous LP still allows us to compute the exact solution for some cases.
To describe the optimal solution, we use part of the principal
branch of the \emph{Lambert W function}~\cite{Lambert1758} $W : [-\frac{1}{e}, 0] \rightarrow [-1,0]$,
where $z = W(z)e^{W(z)}$ for all $z \in [-\frac{1}{e}, 0]$.

\begin{theorem}[$(J,2)$-Secretary Problem for $J =1,2$] \label{th:12_22}
Define the thresholds:
$\tau_{1,2} = \frac{2}{3}$; $\tau_{1,1} = - W(-\frac{2}{3e}) \approx 0.346982$;

\noindent $\tau_{2,2} \approx 0.517291$ is the solution of: $x \ln x + \ln x - (2+3\ln\frac{2}{3}) x + 1 - \ln\frac{2}{3}=0$;
$\tau_{2,1} = - W(-e^{-c/2}) \approx 0.227788$, where $c := -(\ln \tau_{1,1})^2 + 2 \ln\frac{2}{3}\ln \tau_{1,1} + (\ln \tau_{2,2})^2 - 2\ln\frac{2}{3}\ln \tau_{2,2} - 2 \tau_{2,2} + 4 - 2\ln\frac{2}{3}$.

Then, these thresholds can be used to achieve the following optimal payoffs:
\begin{compactitem}
\item[(a)] $(1,2)$-case: $2 \tau_{1,1} - \tau_{1,1}^2 \approx 0.573567$.

\item[(b)] $(2,2)$-case: $(2 \tau_{1,1} - \tau_{1,1}^2) + (2 \tau_{2,1} - \tau_{2,1}^2) \approx 0.977256$.
\end{compactitem}
\end{theorem}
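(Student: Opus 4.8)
The plan is to specialize the primal--dual procedure behind Theorem~\ref{th:main0} to $K=2$ and $J\in\{1,2\}$, solve the resulting (small, explicit) system of equations, and certify optimality by exhibiting the matching feasible dual. First I would instantiate the continuous LP for the $(J,2)$-secretary problem over the class $\scrA$, together with its dual. By Theorem~\ref{th:main0} the primal optimum is attained by a $(J,2)$-Threshold Algorithm with thresholds $(\tau_{j,k})_{j \in [J], k \in \{1,2\}}$, and its value equals $J-\sum_{j=1}^J(1-\tau_{j,1})^2=\sum_{j=1}^J(2\tau_{j,1}-\tau_{j,1}^2)$; so parts (a) and (b) follow once we pin down $\tau_{1,1}$ (for $J=1$) and $\tau_{1,1},\tau_{2,1}$ (for $J=2$) and substitute.

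Next I would apply complementary slackness. The primal point corresponding to the Threshold Algorithm is tight on a specific, time-dependent family of constraints; this forces the dual variables, on each subinterval of $[0,1]$ cut out by the thresholds, to satisfy a small linear system of ODEs together with continuity (matching) conditions at the $\tau_{j,k}$, while the slack primal constraints force the associated dual variables to vanish, with the active set switching precisely at the thresholds. For $K=2$ these ODEs are elementary, and their solutions are piecewise combinations of $1$, $t$, $\ln t$, $(\ln t)^2$. Imposing that the resulting dual be feasible --- continuous, nonnegative, and satisfying the remaining dual inequalities with equality exactly where complementary slackness demands --- yields the equations that determine the thresholds.

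Then I would solve this system. For $J=1$, the matching condition at $\tau_{1,2}$ together with the boundary behavior at $t=1$ collapses to $\tau_{1,2}=\frac{2}{3}$ and the relation $\ln\tau_{1,2}=\ln\tau_{1,1}+1-\tau_{1,1}$, i.e. $\tau_{1,1}e^{1-\tau_{1,1}}=\frac{2}{3}$, which by the identity $z=W(z)e^{W(z)}$ is exactly $\tau_{1,1}=-W(-\frac{2}{3e})$; substituting into $2\tau_{1,1}-\tau_{1,1}^2$ gives $\approx 0.573567$. For $J=2$, the outer pair $\tau_{1,1},\tau_{1,2}$ turns out to coincide with the $(1,2)$ optimum (the extra quota only introduces thresholds below them), the matching/feasibility condition at $\tau_{2,2}$ becomes the transcendental equation $x\ln x+\ln x-(2+3\ln\frac{2}{3})x+1-\ln\frac{2}{3}=0$, and evaluating/integrating the dual relations across the two relevant subintervals and the boundary, with the already-known $\tau_{1,1},\tau_{1,2},\tau_{2,2}$ substituted, produces $\tau_{2,1}=-W(-e^{-c/2})$ with $c$ the displayed expression; then $\sum_{j=1}^2(2\tau_{j,1}-\tau_{j,1}^2)\approx 0.977256$. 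Finally I would verify that the constructed dual is genuinely feasible --- nonnegativity of every dual variable, validity of the inequalities not forced to equality, and the ordering $0<\tau_{2,2}\le\tau_{1,2}$, $0<\tau_{2,1}\le\tau_{1,1}$, $\tau_{j,1}\le\tau_{j,2}$ (numerically $\tau_{2,1}\approx 0.2278<\tau_{1,1}\approx 0.3470<\tau_{1,2}=\frac{2}{3}$ and $\tau_{2,2}\approx 0.5173$) --- so that strong duality for continuous LP certifies optimality in the infinite model, and the monotone property of Threshold Algorithms transfers the payoff to the finite model.

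The main obstacle is the $(2,2)$ case: one must identify the correct active-constraint pattern (which primal constraints are tight on which subinterval, and exactly where the pattern switches), carry out the integration of the dual ODEs to obtain the constant $c$ in closed form, and then check that the dual stays nonnegative throughout --- this last check is what actually pins the thresholds down, and is the reason the Lambert W expressions and the transcendental equation for $\tau_{2,2}$ are unavoidable.
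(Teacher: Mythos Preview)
Your proposal is correct and follows essentially the same approach as the paper: specialize the primal--dual machinery of Section~\ref{sec:k-best} (equations~(\ref{eq:slack_jk}) and the surrounding complementary-slackness argument) to $K=2$ and $J\in\{1,2\}$, solve the resulting piecewise ODEs for the dual functions explicitly, and read off the thresholds from the conditions $q_{j|k}(\tau_{j,k})=0$; the payoff formula $J-\sum_j(1-\tau_{j,1})^K$ from Theorem~\ref{th:main0} then gives the stated values. Your observation that the $j=1$ thresholds for $(2,2)$ coincide with those for $(1,2)$ is exactly the recursive structure the paper exploits (the equations for $q_{1|k}$ do not involve $q_{2|k}$), and the feasibility verification you outline is handled in the paper by the general Lemma~\ref{lemma:dual_feas} and the inductive argument in Lemma~\ref{lemma:slack_jk}.
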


We give a complete proof for Theorem~\ref{th:main1} in Section~\ref{sec:optalg}, and give the proofs for Theorem~\ref{th:main0} and Theorem~\ref{th:12_22} in Section~\ref{sec:k-best}.

\section{Preliminaries} \label{sec:pre}

We use the infinite model as a tool to analyze the 
secretary problem when the number $n$ of items is large.
We shall describe the properties of our ``infinite'' algorithms,
which can still be applied to finite instances
to obtain conventional algorithms.
We consider countably infinite number of items, whose ranks
are indexed by the set $\N$ of positive integers,
where lower rank means better merit.  Hence, the
item with rank 1 is the best item.  The arrival
time of each item is a real number drawn
independently and uniformly at random from $[0,1]$ 
(where the probability that two items
arrive at the same time is 0);
the (random) function $\rho: \N \rightarrow [0,1]$ gives
the arrival time of each item, 
where $\rho(i)$ is the arrival time of the item with rank $i$.
For a positive integer $m$, we denote $[m] := \{1,2,\ldots, m\}$.

\noindent \textbf{Input Sample Space.}  An algorithm can
observe the arrival times $\Sigma$ of items
and their relative merit, which can be given
by a total ordering $\prec$ on $\Sigma$.
Given $\rho : \N \rightarrow [0,1]$, we have the set
$\Sigma_\rho := \{\rho(i)|\, i\in\N \}$,
and a total ordering $\prec_\rho$ on $\Sigma_\rho$ defined by
$\rho(i) \prec_\rho \rho(j)$ if and only if $i<j$.
The sample space is 
$\Omega := \{(\Sigma_\rho, \prec_\rho) |\, \rho : \N \rightarrow [0,1]\}$,
with a probability distribution induced by
the randomness of $\rho$; we say each $\omega = (\Sigma, \prec) \in \Omega$
is an \emph{arrival sequence}. We sometimes use time $t \in \Sigma$
to mean the item arriving at time $t$, for instance
we might say ``the algorithm selects $t \in \Sigma$.''

\begin{fact}[Every Non-Zero Interval Contains Infinite Number of Items] \label{fact:fill}
For every interval $\mathcal{I} \subseteq [0,1]$ with non-zero length, the probability that there exist infinitely many items arriving in $\mathcal{I}$ is 1.
\end{fact}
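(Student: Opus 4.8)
The plan is to reduce the statement to a standard zero--one law. Fix an interval $\mathcal{I} \subseteq [0,1]$ of positive length $\ell > 0$, and for each rank $i \in \N$ let $E_i$ be the event $\{\rho(i) \in \mathcal{I}\}$. Since $\rho(i)$ is uniform on $[0,1]$, we have $\prob{E_i} = \ell > 0$ for every $i$, and since the arrival times $\{\rho(i)\}_{i \in \N}$ are mutually independent (by the definition of the sample space $\Omega$), so are the events $\{E_i\}_{i \in \N}$. Consequently $\sum_{i \in \N} \prob{E_i} = \sum_{i \in \N} \ell = \infty$, so the second Borel--Cantelli lemma gives $\prob{E_i \text{ holds for infinitely many } i} = 1$. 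The event on the left is exactly the event that infinitely many items arrive in $\mathcal{I}$, which is the claim.

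If one prefers to avoid invoking Borel--Cantelli, the same conclusion follows from an elementary tail estimate. For fixed $m$ and $N$, the event that at most $m$ of the items with ranks $1, \ldots, N$ land in $\mathcal{I}$ has probability $\sum_{k=0}^{m} \binom{N}{k} \ell^{k} (1-\ell)^{N-k}$, i.e.\ the lower tail of a binomial random variable with mean $N\ell$, which tends to $0$ as $N \to \infty$. Since the event ``at most $m$ items in total arrive in $\mathcal{I}$'' is contained in this event for every $N$, it has probability $0$; taking a union over $m \in \N$ shows that the event ``only finitely many items arrive in $\mathcal{I}$'' has probability $0$, hence its complement has probability $1$.

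There is essentially no obstacle here; the only points needing (minor) care are that each $E_i$ has strictly positive probability, which is guaranteed by $\mathcal{I}$ having nonzero length, and that the $E_i$ are genuinely independent, which is inherited from the independence of the arrival times. This fact will be used freely in the sequel to guarantee, for instance, that threshold-based selection rules are well defined on a probability-$1$ subset of $\Omega$.
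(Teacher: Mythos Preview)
Your argument is correct: the events $E_i=\{\rho(i)\in\mathcal{I}\}$ are independent with common probability $\ell>0$, so the second Borel--Cantelli lemma immediately gives the result; the alternative binomial-tail argument is also fine. The paper itself does not supply a proof of this statement --- it is recorded as a Fact and used without justification --- so there is nothing to compare against; your write-up simply fills in the omitted (and entirely standard) details.
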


\noindent \textbf{Infinite Algorithm.} When an item arrives,
an algorithm must decide immediately whether
to select that item.  Moreover, an algorithm
does not know the absolute ranks of the items, 
but can observe only the relative merit of the items seen so far.
These properties are captured
for our infinite algorithms as follows.  Given $\omega = (\Sigma, \prec) \in \Omega$,
and $t \in [0,1]$, let $\Sigma^{[t]} := \{x \in \Sigma|\, x \leq t\}$
be the arrival times up to time $t$,
with the total ordering inherited from $\prec$, which
strictly speaking can be denoted by $\prec_{\Sigma^{[t]}}$.
However, for notational convenience, we write 
$\omega^{[t]} = (\Sigma^{[t]}, \prec)$, dropping the subscript for $\prec$.
We denote $\Omega^{[t]} := \{\omega^{[t]} |\, \omega \in \Omega\}$.
Also we denote $\omega^{(t)} := (\Sigma^{[t]}\setminus \{t\}, \prec)$ and $\Omega^{(t)} := \{\omega^{(t)} | \omega\in\Omega \}$.

An infinite algorithm $\A$ is an ensemble of functions
$\{A^{[t]}: \Omega^{[t]} \rightarrow \{0,1\} |\, t \in [0,1]\}$,
where for time $t$, the value 1 means an item is chosen at time $t$ and 0 otherwise.  Observe that if no item arrives at time $t$, an algorithm
cannot select an item at that time; this means
for all $\omega = (\Sigma, \prec) \in \Omega$, if $t \notin \Sigma$,
then $\A^{[t]}(\omega^{[t]}) = 0$.
Any $J$-choice algorithm $\A$ must also satisfy that
for any $\omega \in \Omega$,
there can be at most $J$ values of $t$ such that 
$\A^{[t]}(\omega^{[t]}) = 1$.  
As we see later, it will be helpful to imagine that 
there are $J$ quotas $Q_J, Q_{J-1}, \ldots, Q_1$
available for making selections, where a quota with larger
index is used first. For instance $Q_J$ is used first and $Q_1$ is used last;
we use this ``reverse'' order to be compatible
with the description in Theorem~\ref{th:main1}.

We can assume an infinite algorithm can be applied
when the number $n$ of items is finite,
because this is equivalent to the arrival sequence
in which all items with ranks at most $n$ arrive before
all items with ranks larger than $n$ (although this happens
with zero probability).

An algorithm $\A: \Omega \rightarrow \{0,1, \ldots, K\}$ can
also be interpreted as a function, which returns the
number of items selected among the $K$ best items.  Since
we wish to maximize the expected payoff of an algorithm
where randomness comes from $\Omega$, 
we can consider
only deterministic algorithms without loss of generality.

\begin{definition} [Outcome and Payoff]
Let $\A$ be an (infinite) algorithm. For $\omega \in \Omega$, 
the \emph{outcome} $\A(\omega)$ is the number of items selected
among the $K$ best items. The \emph{payoff} of $\A$ is defined as $P(\A) := \expctsub{\omega}{\A(\omega)}$.
\end{definition}

The reason we consider the infinite model is that
for any $0 < t \leq 1$, the sample space $\Omega^{(t)}$ observed
before $t$
has the same structure as $\Omega$ in the sense described in the
following Proposition~\ref{prop:iso_omega_t}.
This allows us to analyze the recursive behavior of any infinite algorithm.

\begin{proposition}[Isomorphism between $\Omega^{(t)}$ and $\Omega$] 
\label{prop:iso_omega_t}
For any $0 < t \leq 1$, the sample space $\Omega^{(t)}$ (with
distribution inherited from $\Omega$) rescaled to $[0,1]$ (by dividing
each arrival time by $t$) has the same distribution as $\Omega$.
\end{proposition}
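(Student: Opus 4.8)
The plan is to exhibit an explicit measure-preserving bijection (up to measure-zero sets) between $\Omega^{(t)}$, rescaled by the map $x \mapsto x/t$, and $\Omega$, and to verify it carries the distribution inherited from $\Omega$ to the distribution on $\Omega$. First I would fix $0 < t \le 1$ and recall that an element of $\Omega^{(t)}$ has the form $\omega^{(t)} = (\Sigma^{[t]} \setminus \{t\}, \prec)$ for some $\omega = (\Sigma, \prec) \in \Omega$; since $\rho$ is drawn so that no two items share an arrival time and $\Pr[\,t \in \Sigma\,] = 0$, almost surely $\Sigma^{[t]} \setminus \{t\} = \Sigma \cap [0,t)$. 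Define the rescaling $\phi_t$ sending an arrival sequence $(\Sigma', \prec')$ with $\Sigma' \subseteq [0,t)$ to $(\{x/t : x \in \Sigma'\}, \prec')$ on $[0,1)$ (the order relation is preserved verbatim, since dividing by $t>0$ is strictly increasing and hence does not disturb the comparison of arrival times, and in any case the relative merit ordering $\prec$ is a separate datum). I would then argue that $\phi_t$ maps (almost all of) $\Omega^{(t)}$ into $\Omega$: by Fact~\ref{fact:fill} the restricted sequence almost surely contains infinitely many items, so after rescaling it is a countably infinite subset of $[0,1]$ with a total order of the right type, i.e. an element of $\Omega$.

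The core step is the distributional identity. Here I would use the standard fact that, conditioned on the event $E_m$ that exactly $m$ of the first $N$ items (for each finite $N$) land in $[0,t)$ — or more cleanly, working directly with the infinite model — the arrival times of the items that fall in $[0,t)$ are, conditionally, i.i.d. uniform on $[0,t)$, and their induced relative order is uniformly random and independent of the set of arrival times. Concretely: for each rank $i$, $\rho(i)$ is uniform on $[0,1]$, so conditioned on $\rho(i) < t$ it is uniform on $[0,t)$, independently across $i$; rescaling by $1/t$ turns this into uniform on $[0,1)$. Thus the rescaled arrival times of the sub-collection landing before $t$ have exactly the law of a fresh family of i.i.d. uniforms on $[0,1]$, and the relative-merit order $\prec_\rho$ restricted to this sub-collection — which is just the restriction of the global rank order to a subset — has the same distribution as the rank order of a brand-new infinite collection, because relabelling the ranks of the surviving items in increasing order is a bijection and does not change the (uniform) distribution on orderings. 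Pushing this through, the law of $\phi_t(\omega^{(t)})$ is precisely the law of $\omega$ on $\Omega$.

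I would organize the write-up so that the only technical content is the conditional-independence computation; the measurability of $\phi_t$ and of its inverse $\phi_t^{-1}: (\Sigma, \prec) \mapsto (\{tx : x \in \Sigma\}, \prec)$ is routine and can be asserted. It is also worth remarking explicitly that $\phi_t$ is a bijection on the relevant full-measure subsets, with inverse given by multiplication by $t$, so there is no loss of information and the "same distribution" claim is a genuine isomorphism of probability spaces, not merely equality of some pushforward marginals.

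The main obstacle, I expect, is stating the conditional-independence fact at the right level of rigor without getting bogged down: in the infinite model one cannot condition on the zero-probability event "the first $N$ items" in a naive way, so the cleanest route is to appeal to the marginal description rank-by-rank (each $\rho(i)$ independent uniform, so conditioning on $\rho(i) \in [0,t)$ is harmless and factorizes) together with the exchangeability that makes the induced order on any infinite sub-collection uniform. If one instead wants a self-contained argument, I would verify the identity on a generating $\pi$-system of events — say, events specifying, for finitely many ranks, rational intervals containing their (rescaled) arrival times and a partial order among them — compute both sides as products of interval lengths times a combinatorial factor, check they agree, and invoke a $\pi$-$\lambda$ / monotone-class argument to conclude equality of the full distributions. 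Everything else is bookkeeping.
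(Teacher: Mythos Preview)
Your proposal is correct and follows essentially the same approach as the paper: both argue that conditioning each $\rho(i)$ on landing in $[0,t)$ yields an independent uniform on $[0,t)$, so rescaling by $1/t$ recovers the original distribution. The paper's proof is a three-sentence sketch that does not explicitly discuss the rescaling map, the relabelling of the surviving ranks, or any measure-theoretic care; your version fills in exactly those details (the explicit bijection $\phi_t$, the rank-relabelling step, and the optional $\pi$--$\lambda$ argument), but the underlying idea is identical.
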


\begin{proof}
Recall that the probability distribution over $\Omega$ is induced by the randomness of all the infinite arrival times, each of which is a random number drawn independently and uniformly from $[0,1]$. Similarly, the probability distribution over $\Omega^{(t)}$ is induced by the randomness of arrival times before $t$, the number of which is infinite by Fact~\ref{fact:fill}. Moreover, each arrival time in $[0,t)$ is drawn independently and uniformly from $[0,t)$, which after rescaling is independently and uniformly distributed in $[0,1]$.
\end{proof}

\begin{definition} [Potential]
Let $\omega = (\Sigma, \prec) \in \Omega$
be an arrival sequence.  For $k \geq 1$, each $t \in \Sigma$,
the item arriving at $t$ or time $t$ is a $k$-\emph{potential}
if the item is the $k$-th best item (with respect to
$\prec$) among those arrived by $t$; we sometimes refer to a $1$-potential
simply as a potential.  We say an item is a $k_{\geq}$-potential
(pronounced as ``at least $k$-potential'')  if it is
a $k'$-potential for some $k' \leq k$.
%
\end{definition}

\begin{proposition}[Distribution of Potentials] \label{prop:finite_can}
For every $k \geq 1$ and $t > 0$, with probability 1, the following conditions hold.
\begin{compactitem}
\item[1.] There exists a potential in $[0,t)$.
\item[2.] There are finitely many $k$-potentials in $[t, 1]$.
\end{compactitem}
\end{proposition}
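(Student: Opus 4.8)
The plan is to establish both parts simultaneously on a single probability-$1$ event and then argue deterministically. By Fact~\ref{fact:fill} (using $t>0$), with probability $1$ there are infinitely many items arriving in $[0,t)$; condition on this event. Fix such an arrival sequence $\omega=(\Sigma,\prec)$, let $R:=\{i\in\N:\rho(i)\in[0,t)\}$ be the (infinite) set of ranks of items arriving strictly before $t$, and let $r_1<r_2<\cdots$ enumerate $R$ in increasing order, so that $r_1$ is the best item arriving in $[0,t)$, $r_2$ the second best, and so on.

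For Part~1, I would simply observe that the item of rank $r_1=\min R$ is a potential at its own arrival time $\rho(r_1)\in[0,t)$: any item $j$ arriving no later than $\rho(r_1)$ also arrives before $t$, so $j\in R$ and hence $j\ge r_1$; thus item $r_1$ is the rank-minimal (i.e.\ best) item among those arrived by time $\rho(r_1)$, which is exactly what it means to be a potential. Hence a potential arrives in $[0,t)$.

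Part~2 is the substantive part. The key claim is that every item which is a $k$-potential and arrives in $[t,1]$ has rank at most $r_k$. Indeed, suppose item $i$ arrives at time $s=\rho(i)\ge t$ and has rank $i>r_k$. The $k$ items of ranks $r_1<r_2<\cdots<r_k$ are all better than $i$, and each of their arrival times lies in $[0,t)\subseteq[0,s]$, so all of them have arrived by time $s$. Therefore $i$ is beaten by at least $k$ items among those arrived by $s$, so it is at best the $(k+1)$-st best such item and cannot be a $k$-potential. Consequently the ranks of all $k$-potentials arriving in $[t,1]$ lie in the finite set $\{1,\ldots,r_k\}$; since distinct $k$-potentials are distinct items with distinct ranks, there are at most $r_k$ of them, and in particular finitely many. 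This completes the argument on the conditioned event, hence with probability $1$.

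The only place probability is used is in guaranteeing that $R$ is infinite (so that $r_k$ is well defined), which is precisely Fact~\ref{fact:fill}; everything after that is an elementary counting argument, so I do not anticipate a real obstacle. The one point I would state carefully is the reduction to the right deterministic quantity, namely that the threshold rank $r_k$ depends only on the items arriving in $[0,t)$ and that no item arriving at time $\ge t$ can coincide (in rank) with any $r_j$ — though for the finiteness conclusion the crude bound $r_k$ already suffices, so even this refinement is inessential.
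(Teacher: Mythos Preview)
Your proposal is correct and follows essentially the same approach as the paper: invoke Fact~\ref{fact:fill} to obtain (infinitely many) items in $[0,t)$, use the best such item to witness a potential in $[0,t)$, and use the $k$ best such items to bound the ranks of all $k$-potentials in $[t,1]$ by a fixed finite number. The only cosmetic differences are that the paper, for Part~1, starts from an arbitrary item in $[0,t)$ and traces back to a potential (rather than directly taking the minimum-rank item in $[0,t)$), and for Part~2 takes any $k$ items in $[0,t)$ rather than the $k$ best; your unified enumeration of $R$ is slightly cleaner but the substance is identical.
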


\begin{proof}
From Fact~\ref{fact:fill},
with probability 1, there exists an item arriving in $[0,t)$.
This implies that there exists $i \in \N$ such that
the item with rank $i$ arrives at $\rho(i) \in [0,t)$.
If $\rho(i)$ is not a potential, then a non-empty subset $S$ of items
with ranks in $\{1,\ldots,i-1\}$ must have arrived before $\rho(i)$. Since $S$ is finite, the item among them with smallest arrival time is a potential. Thus, there is a potential in $[0,t)$ with probability 1. 

Similarly, from Fact~\ref{fact:fill}, there exist
$k$ items in $[0,t)$ and let $r$ be the maximum
rank among those $k$.  Every $k$-potential in $[t, 1]$ must have a
rank in $\{1,\ldots,r-1\}$; that is, there are finite number of  $k$-potentials in $[t, 1]$ with probability 1.
\end{proof}

\ignore{
Observe that since there are finite potentials in $[\epsilon,1]$ for every $\epsilon>0$, the best item (the one with rank $1$) 
is the last potential arriving in $[0,1]$.
Since we care only about whether an algorithm selects
among the best $K$ items, we can assume without loss
of generality that an algorithm selects only $K_{\geq}$-potentials.}
Generalizing the simple $(J,1)$-Threshold Algorithm in the introduction,
we define the general version for the $(J,K)$-case.

\noindent \textbf{$(J,K)$-Threshold Algorithm.}
The algorithm takes $JK$ thresholds $(\tau_{j,k})_{j\in[J],k\in[K]}$  such that
(i) for all $k \in [K]$, $0< \tau_{J,k} \leq \tau_{J-1,k} \leq \cdots \leq \tau_{1,k} \leq 1$; and (ii) for all $j \in [J]$, $0 <\tau_{j,1} \leq \tau_{j,2}\leq \cdots \leq \tau_{j,K} \leq 1$.

Items are selected according to the following rules.
\begin{compactitem}
\item[(a)] For each $j \in [J]$, at time $\tau_{j,1}$ a quota $Q_j$ is
made available to the algorithm.
\item[(b)] For each $j \in [J]$ and $k \in [K]$, after time $\tau_{j,k}$,
the algorithm can select a $k_\geq$-potential by using up an available
quota $Q_{j'}$, for some $j' \geq j$. (We require
that the available quota $Q_{j'}$ with the largest $j'$ is used.)
Selection is done greedily, i.e., the algorithm will select an arriving item
whenever it is possible according to the above rule.
\end{compactitem}

\begin{remark}
We can imagine that each quota $Q_j$ has different maturity
times. For instance, at time $\tau_{j,1}$, the quota can only be
used for selecting 1-potential.  Hence, condition (ii)
means that there are $K$ progressive maturity times,
where after time $\tau_{j,k}$, quota $Q_j$ can be used
for selecting $k_{\geq}$-potentials.  Condition (i) means
that quotas with larger indices mature to the next stage earlier.
Note that the first quota is released
at time $\tau_{J,1} > 0$. By Proposition~\ref{prop:finite_can},
with probability 1, there are only a finite number of $K_\geq$-potentials
arriving after $\tau_{J,1}$, and hence the algorithm is well-defined.
We could still give a formal treatment if for all $\epsilon$, there exists a potential
arriving in $(\tau_{J,1}, \tau_{J,1} + \epsilon)$; however,
we omit the details as this case happens with probability 0.
\end{remark}


\noindent\textbf{Piecewise Continuous.} Given an algorithm $\A$, $j \in [J]$
and $k \in [K]$, define the function $p^\A_{j|k}:[0,1]\rightarrow[0,1]$ such that $p^\A_{j|k}(x)$ is the probability that $\A$ selects time $x$ using quota $Q_j$ given that $x$ is a $k$-potential. Let $p^\A=(p^\A_{j|k})_{j \in [J], k \in [K]}$ be the collection of functions for $\A$. We say 
$\A$ is \emph{piecewise continuous} if every $p^\A_{j|k}$ is piecewise continuous.  We denote by $\scrA$ the class of
piecewise continuous algorithms; as we shall see, this class of
algorithms is general enough to capture the asymptotic behavior
for finite models with large $n$ number of items.

\begin{proposition}
The $(J,K)$-Threshold Algorithm is piecewise continuous.
\end{proposition}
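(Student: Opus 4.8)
The plan is to verify, directly from the definition, that for the $(J,K)$-Threshold Algorithm $\A$ every coordinate $p^\A_{j|k}:[0,1]\to[0,1]$ is piecewise continuous; the Proposition then follows.

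The first step is a structural observation about which quota is spent when. By condition~(i) the quota $Q_{j'}$ of larger index $j'$ becomes available earlier (at $\tau_{j',1}$) and reaches each maturity stage $k'$ earlier (at $\tau_{j',k'}$) than any $Q_{j''}$ with $j''<j'$; together with the rule that the available quota of largest index is always used, one checks that at every time $t$ the set of quotas already spent is a top segment $\{Q_J,Q_{J-1},\ldots,Q_{m+1}\}$ for some $0\le m\le J$. Moreover, using conditions~(i)--(ii) and the fact that a $k$-potential is in particular a $k_\geq$-potential, a quota $Q_{j'}$ is eligible to take a $k$-potential arriving at time $x$ precisely when $x>\tau_{j',k}$. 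Combining these, if $x$ is a $k$-potential then $\A$ selects it using $Q_j$ if and only if $x>\tau_{j,k}$ and exactly $J-j$ quotas have been spent strictly before $x$ (so that $Q_j$ is the next one in line). Writing $q_r(x)$ for the probability that exactly $r$ quotas are spent in $[0,x)$ conditioned on $x$ being a $k$-potential, this yields the clean formula $p^\A_{j|k}(x)=\mathbf{1}\{x>\tau_{j,k}\}\cdot q_{J-j}(x)$.

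The indicator factor is piecewise continuous, with at most one jump, at $\tau_{j,k}$, so it remains to show each $q_r$ is piecewise continuous. Here I would cut $[0,1]$ at the at most $JK$ distinct threshold values; on each resulting open subinterval no quota changes its availability or maturity stage, so the rule governing which potentials are grabbable and by which quota is constant. Within such a subinterval I claim $q_r$ is continuous: conditioning further on the arrival sequence $\omega^{(x)}$ seen before $x$, the number of spent quotas can differ between times $x<x'$ of the subinterval only if some potential arriving in $[x,x')$ triggers an additional selection; by Proposition~\ref{prop:finite_can} almost every arrival sequence has only finitely many potentials in any $[\epsilon,1]$, so the probability of such an event tends to $0$ as the window shrinks (dominated convergence against the absolutely continuous law of the arrival times). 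To make the conditioning on the measure-zero event ``$x$ is a $k$-potential'' legitimate, I would either route through Proposition~\ref{prop:iso_omega_t}, which expresses $q_r$ recursively via the analogous quantities on a rescaled copy of $\Omega$, or replace the event by the limit over vanishing windows about $x$ and check that the resulting ratio converges uniformly on each subinterval.

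I expect the main obstacle to be exactly this last analytic point: establishing a uniform-on-each-subinterval continuity estimate for $q_r$ while correctly handling the probability-zero conditioning event. The rest --- the top-segment structure, the reduction to the scalar quantities $q_r$, and pinning the discontinuities to the thresholds --- is combinatorial bookkeeping forced by the monotonicity conditions on the thresholds; the genuinely analytic content is the statement that, away from the thresholds, a small change in $x$ perturbs the conditional selection probabilities only slightly, which should follow from the continuity of the uniform arrival-time distribution together with the almost-sure local finiteness of potentials guaranteed by Proposition~\ref{prop:finite_can}.
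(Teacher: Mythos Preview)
Your proposal is correct. The paper's own proof is merely a sketch: it works out one case explicitly, showing $p_{J|1}(x)=\tau_{J,1}/x$ for $x>\tau_{J,1}$ and zero before, and then asserts that ``other $p_{j|k}$'s can be analyzed similarly.'' Your argument via the decomposition $p^\A_{j|k}(x)=\mathbf{1}\{x>\tau_{j,k}\}\cdot q_{J-j}(x)$ is the clean general statement behind that sketch, and the top-segment structure of spent quotas is exactly right.

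The one place you make life harder than necessary is the ``main obstacle'' you flag. By Lemma~\ref{lemma:ind} the event $V_x^k$ is independent of $\Omega^{(x)}$, so $q_r(x)$ is simply the \emph{unconditional} probability that exactly $r$ quotas have been spent before time $x$; no delicate conditioning or limiting argument is needed. With that in hand, $q_r$ is actually continuous on all of $(0,1]$ (a threshold only changes eligibility going forward, and the chance that an item arrives at a single instant is zero), so the sole discontinuity of $p^\A_{j|k}$ is the jump of the indicator at $\tau_{j,k}$.
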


\begin{proof}
The argument is straightforward but the full proof is tedious.
As a special case, consider the first threshold $\tau_{J,1}$ and the function
$p_{J|1}(x)$ giving the probability
that a potential arriving at time $x$
is selected by using quota $Q_J$, which is 0 for $x < \tau_{J,1}$,
and there is a discontinuity at $\tau_{J,1}$. At time
$x > \tau_{J,1}$,
the probability $p_{J|1}(x)$ is the same as that for the event of
the best item before time $x$ arriving before $\tau_{J,1}$,
and so $p_{J|1}(x) = \frac{\tau_{J,1}}{x}$ for $x > \tau_{J,1}$.
Other $p_{j|k}$'s can be analyzed similarly.
\end{proof}


\ignore{
For $0 < t \leq 1$ and $\omega = (\Sigma, \prec) \in \Omega^{[t]}$,
we denote $\Gamma_K(\omega):= (\Sigma_K, \prec)$, where $\Sigma_K$ is the set of $K_{\geq}$-potentials in $\Sigma$ with respect
to the total ordering $\prec$.
We further define two properties of an algorithm.

\begin{definition} [Economical]
An infinite algorithm $\A$ is $K$-\emph{economical} if it selects
only $K_{\geq}$-potentials and its behavior only depends on the arrival times
of $K_{\geq}$-potentials seen so far. Formally, the algorithm
can be described by an ensemble of functions 
$\{\A^{[t]}: \Gamma_K(\Omega^{[t]}) \rightarrow \{0,1\} |\, t \in [0,1]\}$
such that for each $\omega \in \Omega$ and $K_{\geq}$-potential $t$ in $\omega$,
the choice made at $t$ is given by $\A^{[t]}(\Gamma_K(\omega^{[t]}))$.
%
%
\end{definition}
}

\noindent \textbf{Monotone.}
An algorithm $\A$ is \emph{monotone}
if for all $t \in [0,1]$, for any positive integer $n$, for any $\omega = (\Sigma, \prec) \in \Omega^{[t]}$ such that all items with ranks larger than $n$ are removed
to produce $\omega_n = (\Sigma_n, \prec)$, where the item at time $t$ is not removed, then it holds that $\A^{[t]}(\omega_n)$ is well-defined and is at least
$\A^{[t]}(\omega)$;
in other words, if in an arrival sequence the algorithm selects some $t$ with rank $i$, then if all items with ranks greater than $n$ (for some $n > i$) are removed, the algorithm must also select $t$.
The next proposition immediately follows.

\begin{proposition}[Monotone Algorithm Applicable to Finite Model]
\label{prop:monotone}
If a monotone infinite algorithm $\A$ has payoff $x$, then
for all positive integers $n$, the payoff of applying $\A$ to the finite model
with $n$ items is at least $x$.
\end{proposition}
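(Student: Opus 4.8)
The plan is to run $\A$ on the infinite model and on the finite model using the \emph{same} underlying randomness, and compare the two outcomes pointwise. Concretely, sample an arrival sequence $\omega=(\Sigma,\prec)\in\Omega$ and let $\omega_n$ be the finite arrival sequence obtained from $\omega$ by deleting every item whose rank exceeds $n$ (keeping the ordering $\prec$ inherited on the surviving items). Since in both the infinite and the finite model each item draws its arrival time independently and uniformly from $[0,1]$, under this coupling the law of $\omega_n$ is exactly the law of the $n$-item finite model; this matches the convention stated earlier, namely that an infinite algorithm is ``applied to a finite instance'' by running it on an arrival sequence in which the first $n$ ranks precede all others. Hence the payoff of $\A$ on the finite model with $n$ items equals $\expctsub{\omega}{\A(\omega_n)}$, and $\A(\omega_n)$ is a legitimate outcome: $\A$ being monotone guarantees each $\A^{[t]}(\omega_n^{[t]})$ is well-defined, and $\A$ makes at most $J$ selections on every arrival sequence.

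Next I would fix $\omega$ and show $\A(\omega_n)\ge\A(\omega)$, assuming $n\ge K$ (the only regime of interest; for $n<K$ the notion of ``the $K$ best items'' in an $n$-item instance degenerates). Each of the globally $K$ best items has rank $i\le K\le n$, so it survives the truncation and retains its rank in $\omega_n$. Suppose $\A$, run on $\omega$, selects such an item, arriving at time $t$, i.e.\ $\A^{[t]}(\omega^{[t]})=1$. Every item deleted in passing from $\omega^{[t]}$ to $\omega_n^{[t]}$ has rank $>n>i$, and the item at $t$ itself is not deleted, so the monotonicity hypothesis yields $\A^{[t]}(\omega_n^{[t]})\ge\A^{[t]}(\omega^{[t]})=1$; that is, $\A$ also selects that item when run on $\omega_n$. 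Thus the set of $K$-best items selected by $\A$ on $\omega$ is contained in the set selected by $\A$ on $\omega_n$, giving $\A(\omega_n)\ge\A(\omega)$ for every $\omega$.

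Finally, taking expectations over $\omega$ gives the payoff of $\A$ on the finite model $=\expctsub{\omega}{\A(\omega_n)}\ge\expctsub{\omega}{\A(\omega)}=P(\A)=x$, which is the claim. I do not expect a genuine obstacle: the statement follows directly from the definition of monotone. The one point that needs care is the measure-theoretic bookkeeping — confirming that the deletion map $\omega\mapsto\omega_n$ pushes the infinite-model distribution forward exactly onto the finite-model distribution (so that ``the payoff of applying $\A$ to the finite model'' is correctly identified with $\expctsub{\omega}{\A(\omega_n)}$), and that truncation never causes a violation of the selection constraints (at most $J$ choices, only within the sequence). Both are immediate from the setup.
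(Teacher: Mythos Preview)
Your proposal is correct and is exactly the argument the paper has in mind: in the paper the proposition is stated with the remark ``The next proposition immediately follows'' right after the definition of monotone, with no further proof. Your coupling of $\omega$ and $\omega_n$ and the pointwise comparison $\A(\omega_n)\ge\A(\omega)$ via the monotonicity clause is precisely the intended one-line justification, spelled out in full.
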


\begin{proposition}
The $(J,K)$-Threshold Algorithm is monotone.
\end{proposition}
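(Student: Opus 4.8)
The plan is to unwind the definition of monotonicity and show that removing items of rank greater than $n$ (while keeping the item at time $t$, which has rank $i \le n$ since it was selected) cannot cause the algorithm to fail to select $t$. The key observation is that the selection rule for the Threshold Algorithm depends only on (a) the arrival times $\tau_{j,k}$, which are fixed, and (b) the relative ranks of arrived items, specifically which items are $k_\ge$-potentials and which quotas have already been consumed. Removing high-rank items can only \emph{improve} an arrived item's relative standing: if $x$ is a $k$-potential in $\omega$, then in $\omega_n$ it is a $k'$-potential for some $k' \le k$, hence still a $k_\ge$-potential. So the set of ``eligible'' selection events only grows when we pass from $\omega$ to $\omega_n$.

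First I would set up the situation: fix $t$, fix $n$, and suppose $\A^{[t]}(\omega) = 1$, i.e., the Threshold Algorithm selects the item at time $t$ (say of rank $i \le n$) in the arrival sequence $\omega$, using some quota $Q_{j'}$ with $j' \ge j$, where $\tau_{j,k} < t$ and $t$ is a $k_\ge$-potential in $\omega$. I then need to track, by induction on the arrival order of items in $\omega_n$ up to time $t$, what happens in the run on $\omega_n$. The cleanest framing is to prove a stronger invariant by induction over the (finitely many, by Proposition~\ref{prop:finite_can}) $K_\ge$-potential arrival events in $(\tau_{J,1}, t]$: at each such event, the set of quotas still available in the run on $\omega_n$ is a superset of (or "dominates," in the appropriate sense) the set available in the run on $\omega$. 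Since removing items only turns non-potentials into potentials and lowers the potential-index of existing potentials, every selection the algorithm makes on $\omega$ before time $t$ is still a legal selection on $\omega_n$ (the item is still a $k_\ge$-potential for the relevant $k$, and by the greedy rule with the largest-index-quota-first tie-break, the quota accounting stays aligned or more favorable). Hence when we reach time $t$, in the run on $\omega_n$ there is still an available quota $Q_{j'}$ with $j' \ge j$, the item at $t$ is still a $k_\ge$-potential (as $k' \le k$), $\tau_{j,k} < t$ still holds, and the greedy rule forces the algorithm to select $t$; thus $\A^{[t]}(\omega_n) = 1 \ge \A^{[t]}(\omega)$. If $\A^{[t]}(\omega) = 0$ there is nothing to prove.

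The main obstacle is making the induction-invariant precise enough to be airtight: the quota-accounting with the ``largest available index first'' tie-break and the progressive maturity times $\tau_{j,1} \le \cdots \le \tau_{j,K}$ means one must verify not just that \emph{some} quota is available in the $\omega_n$ run whenever one is available in the $\omega$ run, but that a quota of sufficiently large index (and sufficiently matured, i.e.\ $\tau_{j,k} < $ current time) is available. The right statement is that the multiset of available quota indices in the $\omega_n$ run, restricted to those matured enough to select a $k_\ge$-potential at the current time, always majorizes the corresponding multiset in the $\omega$ run. Since extra potentials in $\omega_n$ only trigger extra selections (consuming quotas that would otherwise remain, which is fine) or selections of items that were lower-index potentials (hence allowed earlier), and the maturity times are identical in both runs, this majorization is preserved step by step. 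I would present this invariant, check the base case (before $\tau_{J,1}$, both runs have made no selections), do the inductive step by case analysis on what the next arriving item is in each run, and conclude. The argument is essentially bookkeeping; as the authors note for a sibling proposition, it is straightforward but tedious, so I would state the invariant carefully and then verify the inductive step without belaboring every subcase.
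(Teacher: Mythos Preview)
Your plan is correct, but you miss the simplification that makes the paper's proof a one-liner: an item of rank $i \le n$ is a $k$-potential in $\omega$ \emph{if and only if} it is a $k$-potential in $\omega_n$. The potential index of such an item equals one plus the number of items of rank $<i$ arriving before it, and every item of rank $<i \le n$ survives the deletion. So surviving items have \emph{exactly} the same potential index in both sequences, not merely a weakly smaller one as you assume.

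With this observation your ``main obstacle'' evaporates. There are no ``extra potentials in $\omega_n$,'' no items whose index strictly drops, and no non-potentials turning into potentials; the run on $\omega_n$ sees precisely the surviving items with unchanged potential status, while the run on $\omega$ sees those same items plus some extras (rank $>n$) that may consume additional quotas. Hence at every time the set of available quotas in the $\omega_n$ run contains that of the $\omega$ run (identical release times $\tau_{j,1}$, and at most as many consumptions), so the item at $t$---which has the same potential index $k$ in both runs---is selected in $\omega_n$ whenever it is in $\omega$. Your majorization invariant would go through, but it is fighting phantom cases; the paper's sharper observation reduces the whole argument to this one remark.
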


\begin{proof}
This follows from the observation that 
an item $x$ having rank $i$ is a $k$-potential
in some (infinite) arrival sequence \emph{iff}
it is a $k$-potential after all items with ranks greater than $n$ (for some $n > i$) are removed.
\end{proof}

\noindent \textbf{Main Approach.}  In Section~\ref{sec:lp},
we consider the continuous LP, which gives a connection
to the (finite) LP.  Suppose $\mathcal{P}^*$ is the optimal
payoff for the infinite model under algorithm class $\scrA$,
and for each positive integer $n$, $\mathcal{P}_n^*$ is the optimal payoff
for the finite model with $n$ items.
From Proposition~\ref{prop:p_cp} in Section~\ref{sec:lp},
we can conclude that $\limsup_{n \rightarrow \infty} \mathcal{P}_n^* \leq \CP^*$,
where $\CP^*$ is the optimal value of some continuous LP formulation
of the secretary problem.

In Section~\ref{sec:optalg}, we show that the optimal payoff
for the infinite algorithm can be achieved by the Threshold Algorithm,
which is monotone.  In particular,  we show that $\mathcal{P}^* = \CP^*$ and by Proposition~\ref{prop:monotone}
this implies that for all $n$, $\mathcal{P}^* \leq \mathcal{P}_n^*$.
Hence, it follows that $\lim_{n \rightarrow \infty} \mathcal{P}_n^* = \mathcal{P}^*$,
and the Threshold Algorithm achieves the asymptotic optimal payoff for the finite
model for large $n$, as stated in Theorem~\ref{th:main0}.



\ignore{
Our main result is that there exists a $K$-economical
and monotone algorithm that is optimal for the $(J,K)$-secretary problem. Moreover, the maximum payoff in the finite model with $n$ items tends to the maximum payoff in the infinite model as $n$ goes to infinity. 
\hubert{Result true for general $(J,K)$?}

\begin{theorem} [Existence of Economical and Monotone Optimal Algorithm] \label{th:opt}
There exists an optimal algorithm that is economical and monotone for the $J$-choice secretary problem in the infinite model.
\end{theorem}

\begin{theorem} [Equivalence of the Infinite and Finite Models] \label{th:equiv}
Let $\calP = \max_{\A\in\scrA} \{P(\A)\}$ be the maximum payoff for the $J$-choice secretary problem in the infinite model. Let $\calP^n$ be the maximum payoff for the $J$-choice secretary problem in the finite model with $n$ items. Then, $\lim_{n\rightarrow \infty} \calP^n = \calP$.
\end{theorem}
}

\section{The Continuous Linear Programming} \label{sec:lp}

For the finite model with random permutation, 
Buchbinder et al.~\cite{Buchbinder09} showed that there exists a 
linear programming $\LP_n(J,K)$ such that there is a one-to-one correspondence between an algorithm for the $(J,K)$-secretary problem with $n$ items and a feasible solution of the LP; the payoff of the algorithm is exactly the objective of $\LP_n(J,K)$. 
Therefore, the optimal value of the $\LP_n(J,K)$ gives the maximum payoff of the $(J,K)$-secretary problem with $n$ items.  
We rewrite their LP in a convenient form; recall
that the quotas are used in the order $Q_J, Q_{J-1}, \ldots, Q_1$.
The variable $z_{j|k}(i)$ represents that the probability
that the $i$-th item is selected using quota $Q_j$ given that it is
a $k$-potential.
\begin{align*}
\txts \LP_n(J,K) \qquad \max \qquad & v_n(z) = \txts\sum_{j=1}^J \sum_{k=1}^{K} \sum_{i=1}^n \frac{1}{n} \sum_{\ell=k}^{K} \frac{\binom{n-i}{\ell-k} \binom{i-1}{k-1}}{\binom{n-1}{\ell-1}} z_{j|k}(i) \\
\text{s.t.} \qquad & z_{j|k}(i) \leq \txts\sum_{m=1}^{i-1} \frac{1}{m} \sum_{\ell=1}^{K} [z_{(j+1)|\ell}(m) - z_{j|\ell}(m)], \\
& \qquad \qquad \forall i\in[n], k\in[K], 1 \leq j < J \\
 & z_{J|k}(i) \leq 1 - \txts\sum_{m=1}^{i-1} \frac{1}{m}\sum_{\ell=1}^{K} z_{J|\ell}(m), \qquad \forall i\in[n], k\in[K] \\
& z_{j|k}(i) \geq 0, \qquad \forall i\in[n], k\in[K], j\in[J].
\end{align*}

For the $(J,K)$-secretary problem in the infinite model, we construct a continuous linear programming such that every piecewise continuous algorithm
corresponds to a feasible solution, whose objective value is the payoff of the algorithm.
Hence, the optimal LP gives an upper bound for the maximum payoff $\calP^*$;
we later show that the Threshold Algorithm can achieve the optimal LP value.

For each $j\in [J]$ and $k\in [K]$, let $p_{j|k} (x)$ be a function of $x$ that is piecewise continuous in $[0,1]$. In the rest of this paper, we use  ``$\forall x$'' to denote ``for almost all $x$'',
which means for all but a measure zero set. Define $\CP(J,K)$ as follows.
\begin{align*}
\CP(J,K) \qquad \max \qquad & w(p) = \txts\sum_{j=1}^{J} \sum_{k=1}^{K} \int_{0}^{1} \left(\sum_{\ell=k}^{K} \binom{\ell-1}{k-1} (1-x)^{\ell-k} \right) x^{k-1} p_{j|k} (x) dx \\
\text{s.t.} \qquad & p_{j|k}(x) \leq \txts\int_{0}^{x} \frac{1}{y} \sum_{\ell=1}^{K} [p_{(j+1)|\ell}(y) - p_{j|\ell} (y)] d y, \\
& \qquad\qquad {\forall} x\in[0,1], k\in[K], 1 \leq j < J \\
& p_{J|k}(x) \leq 1 - \txts\int_{0}^{x} \frac{1}{y} \sum_{\ell=1}^{K} p_{J|\ell}(y) d y, \qquad {\forall} x\in[0,1], k\in[K] \\
& p_{j|k} (x) \geq 0, \qquad {\forall} x\in[0,1], k\in[K], j\in[J].
\end{align*}

Fix an algorithm $\A\in\scrA$. 
For each $x$ and $j$ and $k$, the events $E_x^j$, $Z_x^j$, $V_x^k$ and $W_x^k$ are defined as follows. Let $E_x^j$ be the event that time $x$ is selected using quota $Q_j$. 
Let $Z_x^j$ be the event that quota $Q_j$ has already been used before time $x$, i.e.,
all quotas $Q_{j'}$ for $j' \geq j$ have been used.
Let $V_x^k$ be the event that time $x$ is a $k$-potential. Let $W_x^k$ be the event that time $x$ is the $k$-th best item overall.
Note that $Z_x^{j}$ implies $Z_x^{j+1}$, and $Z_x^{j+1} \wedge \overline{Z_x^{j}}$ is the event that
quota $Q_j$ is the next quota available to be used at time $x$,
for $1 \leq j < J$.
Also observe that $E_x^j$ implies $Z_x^{j+1} \wedge \overline{Z_x^{j}}$. 


\begin{lemma}[Independence between Potential and Past History] \label{lemma:ind}
For $0< x \leq 1$, and positive integer $k$, the event $V_x^k$
that $x$ is a $k$-potential
is independent of the
arrival sequence observed before time $x$.  In particular, this implies that
for any $K > 1$, the event that $x$ is a
$K_\geq$-potential is also independent of the arrival sequence observed before time $x$.
\end{lemma}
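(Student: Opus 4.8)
The plan is to show that conditioning on the entire arrival sequence observed before time $x$ does not change the probability that $x$ is a $k$-potential; more precisely, that $\prob{V_x^k}$ is the same whether or not we condition on $\omega^{(x)} \in \Omega^{(x)}$. The key observation is that whether the item arriving at $x$ is the $k$-th best among all items arrived by time $x$ depends only on the \emph{relative ranks} of the items arrived in $[0,x]$, and this relative ordering is symmetric: given that some finite or infinite set of items has arrived in $[0,x]$, every ordering of their merits is equally likely, and in particular, conditioned on the number of arrivals in $[0,x)$ being $m$, the item at $x$ is equally likely to be any of the $m+1$ ranks among the arrived items. First I would make this precise by working with the finite truncation: restrict attention to items with ranks $\le N$, for which the arrival times are i.i.d.\ uniform on $[0,1]$; among those arriving in $[0,x]$, the conditional distribution of their relative merit order is uniformly random and independent of their actual arrival times (a standard exchangeability fact). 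Hence conditioned on there being exactly $r$ items of rank $\le N$ in $[0,x]$, the probability that $x$ is the $k$-th best among them is exactly $\frac{1}{r}$ if $r \ge k$ and $0$ otherwise, regardless of the arrival times (and relative order) of the items in $[0,x)$.

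Next I would pass to the limit $N \to \infty$. By Fact~\ref{fact:fill}, with probability $1$ there are infinitely many items in $[0,x)$, so the ``$r \ge k$'' condition is eventually (in $N$) satisfied almost surely; and since the relative-merit order of the finitely-many rank-$\le N$ items in $[0,x]$ stabilizes appropriately, the event $V_x^k$ is determined by this limiting structure. The upshot is that $\prob{V_x^k \mid \omega^{(x)}}$ does not depend on $\omega^{(x)}$, which is exactly the claimed independence. I would phrase the argument so as to avoid delicate measure-theoretic issues: it suffices to show that for every event $B$ in the $\sigma$-algebra generated by $\omega^{(x)}$, $\prob{V_x^k \cap B} = \prob{V_x^k}\prob{B}$, and this reduces to the finite-$N$ exchangeability statement above plus a limiting argument.

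Finally, the statement about $K_\ge$-potentials follows immediately: $\{x \text{ is a } K_\ge\text{-potential}\} = \bigcup_{k=1}^{K} V_x^k$ is a disjoint union of events each independent of $\omega^{(x)}$, hence the union is too (its probability is $\sum_{k=1}^K \prob{V_x^k \mid \omega^{(x)}} = \sum_{k=1}^K \prob{V_x^k}$, independent of $\omega^{(x)}$). The main obstacle I anticipate is the limiting argument: one has to be careful that ``the relative order of the arrived items'' is a well-defined random object in the infinite model and that conditioning on $\omega^{(x)}$ (which already encodes the relative order of all items in $[0,x)$) is compatible with the exchangeability used in the finite truncation. I expect the cleanest route is to condition on the \emph{set} of arrival times in $[0,x)$ together with their relative order, observe that the rank of the item at $x$ relative to this set is then still uniform over the $r+1$ possible positions (by symmetry of the i.i.d.\ uniform construction restricted to rank-$\le N$ items and taking $N\to\infty$), and note that ``$x$ is a $k$-potential'' corresponds to exactly one of those positions.
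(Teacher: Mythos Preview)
Your finite-$N$ exchangeability argument is correct, and you rightly flag the passage to the limit as the delicate step. The paper, however, takes a quite different and much shorter route that avoids any limiting argument. Instead of truncating to ranks $\le N$, it invokes Proposition~\ref{prop:iso_omega_t} to identify the unconditional law of $\omega^{(x)}$ with that of i.i.d.\ uniform-$[0,x)$ arrivals indexed by the ordered set $(\N,<)$. Conditioning on $V_x^k$ simply removes relative rank $k$ from the items in $[0,x)$, so the conditional law of $\omega^{(x)}$ becomes that of i.i.d.\ uniform-$[0,x)$ arrivals indexed by $(\N\setminus\{k\},<)$. Since these two ordered index sets are order-isomorphic, the two laws on $(\Sigma,\prec)$ coincide, and independence follows in one line.

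Your approach would reach the same conclusion, but only after the limit you worry about; note in particular that your proposed reduction to $\prob{V_x^k\cap B}=\prob{V_x^k}\,\prob{B}$ is awkward here because in the infinite model no item arrives at exactly $x$ almost surely, so both sides vanish and the identity is vacuous. What one really wants (and what the paper establishes directly) is that the \emph{conditional} law of $\omega^{(x)}$ given $V_x^k$ coincides with its unconditional law; the bijection $\N\cong\N\setminus\{k\}$ delivers this without any approximation. For the $K_\geq$-potential conclusion, you and the paper use the same disjoint-union argument.
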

\begin{proof}
By Proposition~\ref{prop:iso_omega_t} the arrival sequence observed before time $x$ can be generated by sampling a random arrival time for each integer in $\N$ independently and uniformly in $[0,x)$. We distinguish two cases: (1) without knowledge of $x$, this sequence is generated for all integers in $\N$; (2) given that $x$ is a $k$-potential for some $k\in[K]$, this sequence is generated for all integers in $\N\setminus\{k \}$. Since the total ordering on a sequence observed before $x$ is inherited from $\N$ and there is a bijection between $\N$ and $\N\setminus\{k \}$, the sequences generated in the two cases have the same distribution.
Hence, the event $V_x^k$ is independent of $\Omega^{(x)}$. Since the $V_x^k$'s for $k\in[K]$ are disjoint, the event that $x$ is a $K_\geq$-potential and $\Omega^{(x)}$ are independent.
\end{proof}

\ignore{
\begin{proof}

By Proposition~\ref{prop:iso_omega_t} the arrival sequence observed before time $x$ can be generated by sampling a random arrival time for each integer in $\N$ independently and uniformly in $[0,x)$. We distinguish two cases: (1) without knowledge of $x$, this sequence is generated for all integers in $\N$; (2) given that $x$ is a $k$-potential for some $k\in[K]$, this sequence is generated for all integers in $\N\setminus\{k \}$. Since the total ordering on a sequence observed before $x$ is inherited from $\N$ and there is a bijection between $\N$ and $\N\setminus\{k \}$, the sequences generated in the two cases have the same distribution.
Hence, the event $V_x^k$ is independent of $\Omega^{(x)}$. Since the $V_x^k$'s for $k\in[K]$ are disjoint, the event that $x$ is a $K_\geq$-potential and $\Omega^{(x)}$ are independent.
\end{proof}
}

\begin{lemma} \label{lemma:k-trick}
For all $j\in[J]$ and $x\in[0,1]$, we have $\Pr(Z_x^j) = \int_{0}^{x} \frac{1}{y}\sum_{\ell=1}^{K} p_{j|\ell}(y) d y$.
\end{lemma}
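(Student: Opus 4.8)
The plan is to compute $\Pr(Z_x^j)$ by integrating over the time $y < x$ at which quota $Q_j$ gets used up, i.e.\ at which the event $Z_x^j$ first becomes true. The key observation is that $Z_x^j$ occurs exactly when some item arriving at a time $y \in [0,x)$ is selected using quota $Q_j$ (this is the event $E_y^j$): by the ``reverse order'' rule for using quotas, quota $Q_j$ is consumed precisely by the selection that triggers $E_y^j$, and once consumed it stays consumed. Hence $Z_x^j = \bigsqcup_{y \in [0,x)} E_y^j$ as a disjoint union over the (almost surely unique, by Proposition~\ref{prop:finite_can} and Fact~\ref{fact:fill}) trigger time, so $\Pr(Z_x^j)$ is the integral over $y \in [0,x)$ of the ``density'' of $E_y^j$.

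To turn this into the claimed formula I would first condition on the potential-type of the item at time $y$. Since an item can only be selected if it is a $K_\ge$-potential, $E_y^j = \bigsqcup_{\ell=1}^K (E_y^j \wedge V_y^\ell)$. Now $\Pr(E_y^j \mid V_y^\ell) = p_{j|\ell}(y)$ by definition of the functions $p^\A_{j|\ell}$, so I need the ``density'' with which $V_y^\ell$ occurs. The standard secretary fact is that the probability that the item arriving in an infinitesimal window $[y, y+dy)$ is an $\ell$-potential equals $\frac{1}{y}\,dy$ to first order: conditioned on an item arriving at $y$, it is an $\ell$-potential with probability depending only on relative ranks, but the more useful accounting is that in $\Omega^{(x)}$ (rescaled, by Proposition~\ref{prop:iso_omega_t}), the number of $\ell$-potentials in $[y, y+dy)$ has expectation $\frac{dy}{y}$ — equivalently, the $\ell$-th best item so far lies in $[y, y+dy)$ with probability $\frac{dy}{y}$, and these events are independent of the past by Lemma~\ref{lemma:ind}. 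Combining: $\Pr(Z_x^j) = \int_0^x \frac{1}{y} \sum_{\ell=1}^K p_{j|\ell}(y)\, dy$.

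Rather than a hand-wavy density argument, the clean way to carry this out rigorously is by induction on a fine partition / a limiting argument, or better, to mirror the discrete identity from $\LP_n(J,K)$: in the finite model the probability that $Q_j$ has been used before step $i$ is $\sum_{m=1}^{i-1}\frac{1}{m}\sum_{\ell=1}^K z_{j|\ell}(m)$, because $\frac{1}{m}$ is exactly the probability the $m$-th arriving item is a potential and relative-rank considerations give the $\ell$-potential refinement; then one passes to the continuous limit $\sum_{m=1}^{i-1}\frac{1}{m} \to \int_0^x \frac{dy}{y}$. I expect the main obstacle to be making the ``disjoint union over trigger times'' decomposition fully rigorous — one must invoke Proposition~\ref{prop:finite_can} to guarantee only finitely many $K_\ge$-potentials in any $[\epsilon, 1]$ (so the relevant sums/integrals are well-defined and countably additive), handle the measure-zero bad events, and justify that $\Pr(E_y^j \mid V_y^\ell)$ together with the $\frac{1}{y}$ potential-density genuinely multiply (which is where Lemma~\ref{lemma:ind}, the independence of $V_y^\ell$ from the past history that determines whether $Q_j$ is still available, is essential). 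The actual integral manipulation is then routine.
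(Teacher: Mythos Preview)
Your outline is sound but takes a genuinely different route from the paper. You decompose $Z_x^j$ over the (unique) trigger time and appeal to an intensity $\frac{1}{y}$ for $\ell$-potentials at time $y$. The paper instead conditions on a single auxiliary random variable: let $y_\ell$ be the arrival time of the $\ell$-th best item in $[0,x]$ and set $Y := \max_{\ell \in [K]} y_\ell$. Then $Y$ has the elementary density $f(y) = K y^{K-1}/x^K$ on $[0,x]$, and given $Y=y$ there is no $K_\ge$-potential in $(y,x]$, so $Z_x^j$ is equivalent to $Z_y^j \vee E_y^j$. Using that $\{Y=y\}$ is independent of $\Omega^{(y)}$ (hence of $Z_y^j$) and that $\Pr(E_y^j \mid Y=y) = \frac{1}{K}\sum_{\ell=1}^K p_{j|\ell}(y)$, the paper obtains the self-referential integral equation
\[
x^K g(x) \;=\; K \int_0^x \Bigl[g(y) + \tfrac{1}{K}\textstyle\sum_{\ell=1}^K p_{j|\ell}(y)\Bigr] y^{K-1}\,dy
\]
for $g(x):=\Pr(Z_x^j)$, differentiates (using piecewise continuity of the $p_{j|\ell}$), cancels the $g$-terms, and reads off $g'(x) = \frac{1}{x}\sum_\ell p_{j|\ell}(x)$, whence the claim follows from $g(0)=0$.

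Your approach is more direct once the $1/y$ intensity is established, but that is precisely the step you flag as hand-wavy; making it rigorous in the infinite model requires either a separate computation (e.g.\ summing $\sum_{r\ge \ell}\binom{r-1}{\ell-1}y^{\ell-1}(1-y)^{r-\ell}$ over ranks to get $1/y$) or the discrete-to-continuous limit you sketch, and neither is entirely free. The paper's conditioning on $Y$ sidesteps this: the density of $Y$ is an order-statistic triviality, and the differentiation trick converts the integral equation directly into the desired derivative without ever invoking a point-process intensity. Your ``mirror the finite LP and pass to the limit'' alternative is also not what the paper does here; Lemma~\ref{lemma:k-trick} is proved entirely within the infinite model.
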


\begin{proof}
For $\ell\in[K]$, let $y_\ell$ be the arrival time of the $\ell$-th best item in $[0,x]$. Define $Y:=\max_{\ell\in[K]}\{y_\ell \}$. Then for each $y\in[0,x]$ we have $\Pr(Y\leq y) = \frac{y^K}{x^K}$. It follows that the probability density function of $Y$ is $f(y) = \frac{K y^{K-1}}{x^K}$. Also note that given $Y=y$, we have $\Pr(y_\ell=y) = \frac{1}{K}$ for all $\ell\in[K]$. It follows that $\Pr(E_y^j|Y=y) = \sum_{\ell=1}^{K} \Pr(E_y^j|V_y^\ell) \Pr(y_\ell = y) = \frac{1}{K}\sum_{\ell=1}^{K} p_{j|\ell}(y)$. 

There is no $K_\geq$-potential in $(y,x]$ and hence no item is selected. Thus $Z_x^j$ happens if and only if either $Z_y^j$ or $E_y^j$ (i.e. $Z_{y}^{j+1} \wedge \overline{Z_y^j} \wedge E_y^j$) happens.
By using arguments similar to the proof of Proposition~\ref{prop:iso_omega_t}, 
we can show that, whether the event $Y=y$ happens or not, the
distribution of sample space $\Omega^{(y)}$ of arrival time observed before time $y$ remains the same; in particular, the events $Z_y^j$ and $Y=y$ are independent.
Moreover the events $Z_y^j$ and $E_y^j$ are disjoint. By the law of total probability we have
\begin{align*}
\Pr(Z_x^j) & = \txts\int_{0}^{x} \Pr(Z_x^j | Y=y) f(y) d y \\
& = \txts\int_{0}^{x} [\Pr(Z_y^j|Y=y)+\Pr(E_y^j|Y=y)] \frac{Ky^{K-1}}{x^K} d y \\
& = \txts\frac{K}{x^K} \int_{0}^{x} [\Pr(Z_y^j) + \frac{1}{K}\sum_{\ell=1}^{K} p_{j|\ell}(y)] y^{K-1} d y.
\end{align*}
Fix $j$ and let $g(x) := \Pr(Z_x^j)$ be a function with respect to $x$. Taking derivatives on both sides of $x^K g(x) = K \int_{0}^{x} [g(y) + \frac{1}{K}\sum_{\ell=1}^{K} p_{j|\ell}(y)] y^{K-1} d y$ and using
piecewise continuity, we have $g'(x) = \frac{1}{x}\sum_{\ell=1}^{K} p_{j|\ell}(y) d y$
for almost all $x$. Then $g(x) = \int_{0}^{x} \frac{1}{y}\sum_{\ell=1}^{K} p_{j|\ell}(y) d y + c$ for some constant $c$. By definition $g(0)=0$ and thus $c=0$. Therefore we have
$\Pr(Z_x^j) = \int_{0}^{x} \frac{1}{y}\sum_{\ell=1}^{K} p_{j|\ell}(y) d y$.
\end{proof}

\begin{proposition} [Optimal Payoff At Most Optimal $\CP(J,K)$] \label{prop:cont_lp_k}
Let $\A \in \scrA$ be an algorithm for the $(J,K)$-secretary problem. Let $p=(p_{j|k})_{j\in[J],k\in[K]}$ be the functions such that for
 $j \in [J]$ and $k \in [K]$ and $x \in [0,1]$, the probability that time $x$ is selected by $\A$ using quota $Q_j$ given that time $x$ is a $k$-potential is $p_{j|k}(x)$. Then $p$ is a feasible solution of $\CP(J,K)$. Moreover, the payoff of $\A$ is exactly the objective
  $w(p)=\sum_{j=1}^{J} \sum_{k=1}^{K} \int_{0}^{1} \left(\sum_{\ell=k}^{K} \binom{\ell-1}{k-1} (1-x)^{\ell-k} \right) x^{k-1} p_{j|k} (x) dx$.
\end{proposition}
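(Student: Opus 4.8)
The plan is to prove the two assertions of the proposition separately: first that $p$ is a feasible solution of $\CP(J,K)$, and then that its objective value $w(p)$ equals the payoff $P(\A)$.

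\textbf{Feasibility.} Non-negativity of each $p_{j|k}$ is immediate, being a conditional probability. For the remaining constraints, fix $j\in[J]$, $k\in[K]$, and $x\in(0,1]$. Since $E_x^j$ implies $Z_x^{j+1}\wedge\overline{Z_x^j}$ when $j<J$ (and $E_x^J$ implies $\overline{Z_x^J}$), we get $p_{j|k}(x)=\Pr(E_x^j\mid V_x^k)\le\Pr(Z_x^{j+1}\wedge\overline{Z_x^j}\mid V_x^k)$ for $j<J$, and $p_{J|k}(x)\le\Pr(\overline{Z_x^J}\mid V_x^k)$. The event $Z_x^{j+1}\wedge\overline{Z_x^j}$ (respectively $\overline{Z_x^J}$) is determined by $\omega^{(x)}$, since which quotas have been spent before time $x$ is a function of the selections made before $x$, and those depend only on $\omega^{(x)}$. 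By Lemma~\ref{lemma:ind}, $V_x^k$ is independent of $\omega^{(x)}$, so the conditioning on $V_x^k$ may be dropped. Finally, $Z_x^j$ implies $Z_x^{j+1}$, hence $\Pr(Z_x^{j+1}\wedge\overline{Z_x^j})=\Pr(Z_x^{j+1})-\Pr(Z_x^j)$ and $\Pr(\overline{Z_x^J})=1-\Pr(Z_x^J)$; substituting the formula of Lemma~\ref{lemma:k-trick} yields exactly the two families of constraints of $\CP(J,K)$.

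\textbf{Objective.} Writing $\A(\omega)$ as the sum over $m\in[K]$ of the indicator that the $m$-th best item overall is selected, linearity of expectation gives $P(\A)=\sum_{m=1}^{K}\Pr(\text{the }m\text{-th best item is selected})$. A selection of the $m$-th best item uses a unique quota $Q_j$ and occurs while that item is a $k$-potential for a unique $k\in[m]$, so $\Pr(\text{the }m\text{-th best item is selected})=\sum_{j=1}^{J}\sum_{k=1}^{m}\Pr(\text{the }m\text{-th best item is selected via }Q_j\text{ while a }k\text{-potential})$. I would then condition on the arrival time $x$ of the $m$-th best item: the density that it arrives at $x$ and is a $k$-potential there equals $\binom{m-1}{k-1}x^{k-1}(1-x)^{m-k}$, because exactly $k-1$ of its $m-1$ superiors must land in $[0,x)$ (each independently with probability $x$) and inferior items do not affect its potential-rank. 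Conditioned on this event, the probability that $\A$ selects $x$ using $Q_j$ is still $p_{j|k}(x)$: the decision at time $x$ is a function of $\omega^{[x]}$ only, and the same argument as in Lemma~\ref{lemma:ind} shows that, given $V_x^k$, the observed sequence $\omega^{(x)}$ is independent of the number of superiors of $x$ arriving after $x$ (which, given $V_x^k$, is precisely what determines $W_x^m$). Hence $\Pr(\text{the }m\text{-th best item is selected via }Q_j\text{ while a }k\text{-potential})=\int_{0}^{1}\binom{m-1}{k-1}x^{k-1}(1-x)^{m-k}\,p_{j|k}(x)\,dx$, where piecewise continuity of $p_{j|k}$ makes the integral well-defined. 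Summing over $m$ and swapping the order of summation so that for each fixed $j,k$ the index $\ell:=m$ runs from $k$ to $K$ recovers exactly $w(p)$.

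\textbf{Main obstacle.} The delicate point is the claim that conditioning additionally on the \emph{overall} rank $m$ of the item at $x$ leaves the selection probability equal to $p_{j|k}(x)$; this constrains how many superiors of $x$ arrive \emph{after} $x$, a future event, so one must check that the non-anticipating nature of the algorithm, together with a mild strengthening of the independence in Lemma~\ref{lemma:ind} (the conditional law of $\omega^{(x)}$ given $V_x^k$ does not depend on the count of superiors after $x$), really makes this irrelevant. The feasibility inequalities, the density computation, and the combinatorial re-indexing in the objective are then routine.
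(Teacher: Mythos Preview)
Your proposal is correct and follows essentially the same route as the paper: the paper also derives feasibility from $\Pr(E_x^j\mid V_x^k)\le\Pr(Z_x^{j+1}\wedge\overline{Z_x^j})$ via Lemma~\ref{lemma:ind} and Lemma~\ref{lemma:k-trick}, and computes the payoff by conditioning on the arrival time of the $\ell$-th best item, using $\Pr(V_x^k\mid W_x^\ell)=\binom{\ell-1}{k-1}x^{k-1}(1-x)^{\ell-k}$ and the identity $\Pr(E_x^j\mid V_x^k\wedge W_x^\ell)=\Pr(E_x^j\mid V_x^k)$, then re-indexing the double sum. The ``main obstacle'' you flag is exactly the step the paper uses (and passes over quickly); your justification via the non-anticipating nature of $\A$ and the argument of Lemma~\ref{lemma:ind} is the right one.
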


\begin{proof}
We first show that the payoff $P(\A) = w(p)$. Consider the relation between $\Pr(E_x^j|V_x^k)$ and $\Pr(E_x^j|W_x^k)$. If time $x$ is the $\ell$-th best item overall, then it must be a $k$-potential for some $k\leq \ell$. Moreover, we have $\Pr(V_x^k | W_x^\ell) = \binom{\ell-1}{k-1} x^{k-1} (1-x)^{\ell-k}$ (by convention $0^0=1$). Then
\begin{align*}
\Pr(E_x^j|W_x^\ell) & = \txts\sum_{k=1}^{\ell} \Pr(E_x^j|V_x^k \wedge W_x^\ell) \Pr(V_x^k | W_x^\ell) = \sum_{k=1}^{\ell} \Pr(E_x^j|V_x^k) \Pr(V_x^k | W_x^\ell) \\
& = \txts\sum_{k=1}^{\ell} \binom{\ell-1}{k-1} x^{k-1} (1-x)^{\ell-k} p_{j|k}(x).
\end{align*}

Let $\mathbb{I}_\ell$ be the indicator that the $\ell$-th best item is selected. Since the probability density function of each arrival time is uniform in $[0,1]$, the payoff of the algorithm is
\begin{align*}
P(\A) & = \txts\expct{\sum_{\ell=1}^{K} \mathbb{I}_\ell} = \txts\sum_{\ell=1}^{K} \expct{\mathbb{I}_\ell} = \sum_{\ell=1}^{K} \Pr(\ell\text{-th best item selected}) \\
& = \txts\sum_{\ell=1}^{K} \sum_{j=1}^{J} \int_{0}^{1} 1 \cdot \Pr(E_x^j | W_x^\ell) d x \\
& = \txts\sum_{j=1}^{J} \int_{0}^{1} \sum_{\ell=1}^{K} \sum_{k=1}^{\ell} \binom{\ell-1}{k-1} x^{k-1} (1-x)^{\ell-k} p_{j|k}(x) d x \\
& = \txts\sum_{j=1}^{J} \sum_{k=1}^{K} \int_{0}^{1} \left(\sum_{\ell=k}^{K} \binom{\ell-1}{k-1} (1-x)^{\ell-k} \right) x^{k-1} p_{j|k} (x) dx.
\end{align*}

For the constraints, by Lemma~\ref{lemma:k-trick} we have 
$p_{J|k} (x) = \Pr(E_x^J|V_x^k) \leq \Pr(\overline{Z_x^J}) = 1 - \int_{0}^{x} \frac{1}{y} \sum_{\ell=1}^{K} p_{J|\ell}(y) d y$,
and 
$p_{j|k} (x) = \Pr(E_x^j|V_x^k) \leq \Pr(Z_x^{j+1}\wedge\overline{Z_x^j}) = \Pr(Z_x^{j+1}) - \Pr(Z_x^j) 
= \int_{0}^{x} \frac{1}{y} \sum_{\ell=1}^{K} [p_{(j+1)|\ell}(y) - p_{j|\ell} (y)] d y$
for $1\leq j < J$,
where the second last equality follows since $Z_x^j$ implies $Z_x^{j+1}$.
\end{proof}


\begin{proposition} [Relation between $\LP_n(J,K)$ and $\CP(J,K)$] \label{prop:p_cp}
Let $\mathcal{P}_n^*$ and $\CP^*$ be the optimal values of $\LP_n(J,K)$ and $\CP(J,K)$, respectively. Then, for every $\epsilon>0$, there exists $N\in\N$ such that $\CP^* \geq \mathcal{P}_n^*-\epsilon$ for all $n \geq N$.
\end{proposition}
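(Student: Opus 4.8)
The plan is to establish the inequality $\CP^* \geq \mathcal{P}_n^* - \epsilon$ by taking an optimal solution $z^* = (z_{j|k}^*(i))$ of $\LP_n(J,K)$ and converting it into a feasible solution $p = (p_{j|k})$ of $\CP(J,K)$ whose objective value $w(p)$ is within $\epsilon$ of $v_n(z^*) = \mathcal{P}_n^*$ once $n$ is large enough. The natural construction is a step function: partition $[0,1]$ into $n$ intervals $I_i = (\frac{i-1}{n}, \frac{i}{n}]$ and set $p_{j|k}(x) := z_{j|k}^*(i)$ for $x \in I_i$. This mirrors the correspondence between the finite and continuous models, where the index $i$ plays the role of ``the $i$-th arrival'' and $\frac{i}{n}$ is its approximate arrival time, and $\frac{1}{m}$ in the finite constraints is the discrete analogue of $\int \frac{1}{y} dy$.

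First I would verify feasibility of the step function $p$. The box constraints $p_{j|k}(x) \geq 0$ are immediate. For the recursive constraints, I would compare the finite sum $\sum_{m=1}^{i-1} \frac{1}{m}(\cdots)$ with the integral $\int_0^x \frac{1}{y}(\cdots)dy$ for $x \in I_i$. Here care is needed: the integral $\int_0^{(i-1)/n} \frac{1}{y}dy$ diverges near $0$, whereas $\sum_{m=1}^{i-1}\frac1m$ is finite, so a naive step function does not immediately satisfy the constraint with matching constants. The fix is to compare $\sum_{m=1}^{i-1} \frac1m$ with $\int_{1/n}^{x} \frac1y\,dy = \ln(nx)$, or more carefully to use that $\sum_{m=1}^{i-1}\frac1m \geq \int_1^{i} \frac{dt}{t} = \ln i \geq \ln(nx) - \text{(small error)}$ — i.e. to exploit that the harmonic sum dominates the corresponding integral. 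One then checks that with the step-function assignment the right-hand side of the continuous constraint (an integral of a step function, hence itself computable as a finite sum with weights $\int_{I_m}\frac{dy}{y} = \ln\frac{m}{m-1}$) is at least the right-hand side of the finite constraint, using $\ln\frac{m}{m-1} \geq \frac1m$. This makes the continuous constraints \emph{looser}, so $p$ is feasible.

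Next I would compare the objectives. The finite objective has coefficient $\frac1n \sum_{\ell=k}^K \frac{\binom{n-i}{\ell-k}\binom{i-1}{k-1}}{\binom{n-1}{\ell-1}}$ on $z_{j|k}^*(i)$, while the continuous objective has coefficient $\int_{I_i}\big(\sum_{\ell=k}^K \binom{\ell-1}{k-1}(1-x)^{\ell-k}\big)x^{k-1}dx$ on the same value (since $p_{j|k}$ is constant $z_{j|k}^*(i)$ on $I_i$). The key elementary fact is that $\frac{\binom{n-i}{\ell-k}\binom{i-1}{k-1}}{\binom{n-1}{\ell-1}} \to \binom{\ell-1}{k-1}(1-x)^{\ell-k}x^{k-1}$ uniformly for $x = i/n$ away from $0,1$ (this is just the convergence of a hypergeometric/binomial moment to its limiting Beta-type density), so the two coefficient families differ by $o(1)$ uniformly, and the total discrepancy $|w(p) - v_n(z^*)|$ is bounded by $(o(1)) \cdot \sum_{j,k,i} \frac1n z_{j|k}^*(i)$ plus boundary-interval contributions; since $0 \le z_{j|k}^*(i) \le 1$ the whole sum is $O(JK)$ and the error is $O(JK \cdot o(1))$, which is below $\epsilon$ for $n \geq N$.

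**The hard part will be** handling the singularity of $\frac1y$ near $0$ cleanly — making sure the step-function solution genuinely satisfies the continuous constraints rather than just approximately, and controlling the few boundary intervals (near $i=1$ and $i=n$) where the discrete-to-continuous coefficient approximation is poor. A clean way around the latter is to note that the contribution of any $o(n)$ block of intervals to either objective is $o(1)$ since each coefficient is $O(1/n)$ and each variable is in $[0,1]$; so one only needs the coefficient convergence uniformly on $[\delta, 1-\delta]$ for fixed small $\delta$, then let $\delta \to 0$ after $n \to \infty$. With these two technical points dispatched, the proposition follows. I would also remark that combined with Proposition~\ref{prop:cont_lp_k} (which gives $\mathcal{P}_n^* \le \CP^*$ is \emph{not} what it gives — rather $\CP^* \ge \calP^*$), the two directions together pin down $\limsup_n \mathcal{P}_n^* \le \CP^*$ as claimed in the Main Approach.
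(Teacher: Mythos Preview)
Your feasibility argument for the step function is incorrect, and this is the crucial gap.  For the $j=J$ constraint of $\CP(J,K)$,
\[
p_{J|k}(x) \leq 1 - \int_0^x \frac{1}{y}\sum_{\ell=1}^K p_{J|\ell}(y)\,dy,
\]
you need the integral on the right to be \emph{small} (at most the finite sum $\sum_{m=1}^{i-1}\frac1m\sum_\ell z^*_{J|\ell}(m)$).  But your own inequality $\ln\frac{m}{m-1}\geq\frac1m$ shows the opposite: with the step-function assignment the integral equals $\sum_m \ln\frac{m}{m-1}\cdot\sum_\ell z^*_{J|\ell}(m)$ (plus a nonnegative tail from the partial interval), which is \emph{larger} than the finite sum since the integrand is nonnegative.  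Hence the continuous $j=J$ constraint is \emph{tighter}, not looser, and your step function is in general infeasible.  Worse, for $m=1$ the weight is $\ln\frac10=+\infty$, so if $z^*_{J|\ell}(1)>0$ for some $\ell$ the right side is $-\infty$.  For the $j<J$ constraints your ``looser'' claim also fails unless you know $\sum_\ell[z^*_{(j+1)|\ell}(m)-z^*_{j|\ell}(m)]\geq 0$ for every individual $m$, which is not implied by feasibility of $z^*$.

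The paper's proof addresses exactly these two obstructions.  It first \emph{shifts} the optimal finite solution by $s$ positions (setting the first $s$ coordinates to zero), which kills the $1/y$ singularity and costs only $O(1/s+s^2/n)$ in objective; the resulting step function $r$ is then shown to violate the $\CP$ constraints by controlled amounts (additive $K/s$ for $j=J$, multiplicative $(s+1)/s$ and an extra additive term for $j<J$).  Finally, a second repair step multiplies $r_{j|k}$ by a $j$-dependent factor $1-(J-j+1)\delta$ with $\delta\approx K/s$, which absorbs these violations and yields a genuinely feasible $p$, at further cost $O(J^2K\delta)$.  Choosing $s\sim n^{1/3}$ balances all errors.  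Your sketch contains the right idea (discretize and compare coefficients), but without the shift-then-scale repair the constructed $p$ is simply not feasible, and no amount of objective-comparison work can rescue the argument.
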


\begin{proof}
Our proof strategy is as follows. We start from an optimal solution $y$ of $\LP_n(J,K)$, with objective value $v_n(y) = \mathcal{P}_n^*$. Our goal is to construct a feasible solution $p$ of $\CP(J,K)$ such that the difference $v_n(y) - w(p)$ of objective values is small for sufficiently large $n$. The idea is to transform $y$ into $p$ by interpolation. However, a piecewise continuous solution directly constructed from $y$ might not be feasible for $\CP(J,K)$; intuitively, the constructed functions at points $x$ for small (constant) $i\in[n]$ may violate the constraints by large (constant) values. We introduce two intermediate solutions: $z$ with respect to $\LP_n(J,K)$ and $r$ with respect to $\CP(J,K)$. To avoid constraint violation due to small $i$, the solution $z$ is obtained by shifting $y$ by a distance of $s\leq n$ such that $z_{j|k}(i) = 0$ for all $i< s$. Then, we construct $r$ from $z$ by interpolation, which is not necessarily feasible to $\CP(J,K)$ but can only violate the constraints for $i\geq s$. Finally, we reduce $r$ by some multiplicative factors and obtain a feasible solution $p$ of $\CP(J,K)$. The parameter $s$ is carefully selected such that the difference $v_n(y) - w(p)$ remains small.

Let $y$ be an optimal solution of $\LP_n(J,K)$. Let $s$ with $3K \leq s \leq n$ be an integer to be determined later. Define a solution $z$ to $\LP_n(J,K)$ as follows: for each $1\leq j \leq J$ and $1\leq k \leq K$, set $z_{j|k}(i) := 0$ for $1\leq i < s$ and $z_{j|k}(i) := y_{j|k}(i-s+1)$ for $s \leq i \leq n$. For $1\leq i < s$, obviously the constraints hold for $z_{j|k}(i)$. Suppose $s \leq i \leq n$, then we have
\begin{align*}
z_{J|k}(i) & = y_{J|k}(i-s+1) \leq 1 - \txts\sum_{m=1}^{i-s} \frac{1}{m}\sum_{\ell=1}^{K} y_{J|\ell}(m) \\
& = 1 - \txts\sum_{m=s}^{i-1} \frac{1}{m}\sum_{\ell=1}^{K} z_{J|\ell}(m) = 1 - \txts\sum_{m=1}^{i-1} \frac{1}{m}\sum_{\ell=1}^{K} z_{J|\ell}(m)
\end{align*}
and
\begin{align*}
z_{j|k}(i) & = y_{j|k}(i-s+1) \leq \txts\sum_{m=1}^{i-s} \frac{1}{m} \sum_{\ell=1}^{K} [y_{(j+1)|\ell}(m) - y_{j|\ell}(m)] \\
& = \txts\sum_{m=1}^{i-1} \frac{1}{m} \sum_{\ell=1}^{K} [z_{(j+1)|\ell}(m) - z_{j|\ell}(m)]
\end{align*}
for $1 \leq j < J$. Therefore $z$ is a feasible solution of $\LP_n(J,K)$. Next we analyze $v_n(z)$. First observe that
\begin{align*}
v_n(z) & = \txts\sum_{j=1}^J \sum_{k=1}^{K} \sum_{i=s}^n \frac{1}{n} \sum_{\ell=k}^{K} \frac{\binom{n-i}{\ell-k} \binom{i-1}{k-1}}{\binom{n-1}{\ell-1}} z_{j|k}(i) \\
& = \txts\sum_{j=1}^J \sum_{k=1}^{K} \sum_{i=1}^{n-s+1} \frac{1}{n} \sum_{\ell=k}^{K} \frac{\binom{n-i-s+1}{\ell-k} \binom{i+s-2}{k-1}}{\binom{n-1}{\ell-1}} y_{j|k}(i).
\end{align*}
For $1\leq k \leq \ell\leq K$ and $1\leq i < n-K$ and positive integer $m$ with $K \leq m \leq n-i$, we have
\begin{align*}
\txts \frac{\binom{n-i-m}{\ell-k} \binom{i+m-1}{k-1}}{\binom{n-i}{\ell-k} \binom{i-1}{k-1}}
& \geq \txts \frac{\binom{n-i-m}{\ell-k}}{\binom{n-i}{\ell-k}}
\geq \txts \frac{(n - i - m - \ell+k)^{\ell-k}}{(n-i)^{\ell-k}}
= \txts \left(1 - \frac{m+\ell-k}{n-i} \right)^{\ell-k} \\
& \geq 1 - \txts\frac{(\ell-k)(m+\ell-k)}{n-i} \geq 1 - \frac{K(m+K)}{n-i} \geq 1 - \frac{2Km}{n-i}.
\end{align*}
Then it follows that
\begin{align}
\txts \frac{\binom{n-i}{\ell-k} \binom{i-1}{k-1}}{\binom{n-1}{\ell-1}} - \frac{\binom{n-i-m}{\ell-k} \binom{i+m-1}{k-1}}{\binom{n-1}{\ell-1}} \leq \frac{2Km}{n-i} \cdot \frac{\binom{n-i}{\ell-k} \binom{i-1}{k-1}}{\binom{n-1}{\ell-1}}. \label{eq:lp_equiv}
\end{align}
Since $0\leq y_{j|k}(i) \leq 1$ for all $i, j, k$ and $\sum_{k=1}^{K}\sum_{\ell=k}^{K}\frac{\binom{n-i}{\ell-k} \binom{i-1}{k-1}}{\binom{n-1}{\ell-1}} = K$, we have
\begin{align*}
v_n(y) - v_n(z) & \leq \txts\sum_{j=1}^J \sum_{k=1}^{K} \sum_{i=1}^{n-s^2} \frac{1}{n} \sum_{\ell=k}^{K} \left( \frac{\binom{n-i}{\ell-k} \binom{i-1}{k-1}}{\binom{n-1}{\ell-1}} - \frac{\binom{n-i-s+1}{\ell-k} \binom{i+s-2}{k-1}}{\binom{n-1}{\ell-1}} \right) y_{j|k}(i) \\
& \qquad + \txts\sum_{j=1}^J \sum_{k=1}^{K} \sum_{i=n - s^2 + 1}^n \frac{1}{n} \sum_{\ell=k}^{K} \frac{\binom{n-i}{\ell-k} \binom{i-1}{k-1}}{\binom{n-1}{\ell-1}} y_{j|k}(i) \\
& \leq \txts\sum_{j=1}^J \sum_{k=1}^{K} \sum_{i=1}^{n-s^2} \frac{1}{n} \sum_{\ell=k}^{K} \frac{2Ks}{n-i} \cdot \frac{\binom{n-i}{\ell-k} \binom{i-1}{k-1}}{\binom{n-1}{\ell-1}} y_{j|k}(i) + \frac{JK s^2}{n} \\
& \leq \txts \frac{2K}{s} v_n(y) + \frac{JK s^2}{n} \leq \frac{2JK}{s} + \frac{JK s^2}{n},
\end{align*}
where the second inequality follows from~\eqref{eq:lp_equiv}, the third from $i \leq n-s^2$ and the last from $v_n(y) = \calP_n^* \leq J$. Then we have
\begin{align*}
v_n(z) \geq \calP_n^* - \txts \frac{JK s^2}{n} - \frac{2JK}{s}.
\end{align*}


For each $j$ and $k$, define function $r_{j|k}(x)$ as follows: set $r_{j|k}(x) := 0$ when $0 \leq x \leq \frac{s}{n}$ and $r_{j|k}(x) := z_{j|k}(i)$ when $\frac{i}{n} < x \leq \frac{i+1}{n}$ for $s \leq i \leq n-1$. Then we have
\begin{align*}
w(r) & = \txts\sum_{j=1}^{J} \sum_{k=1}^{K} \int_{0}^{1} \left(\sum_{\ell=k}^{K} \binom{\ell-1}{k-1} (1-x)^{\ell-k} \right) x^{k-1} r_{j|k} (x) dx \\
& = \txts\sum_{j=1}^{J} \sum_{k=1}^{K} \sum_{m=s}^{n-1} \int_{\frac{m}{n}}^{\frac{m+1}{n}} \left(\sum_{\ell=k}^{K} \binom{\ell-1}{k-1} (1-x)^{\ell-k} \right) x^{k-1} z_{j|k} (m) dx \\
& \geq \txts\sum_{j=1}^{J} \sum_{k=1}^{K} \sum_{m=s}^{n-1} \frac{1}{n} \left(\sum_{\ell=k}^{K} \binom{\ell-1}{k-1} (1-\frac{m+1}{n})^{\ell-k} \right) (\frac{m}{n})^{k-1} z_{j|k} (m).
\end{align*}
We wish to show $\frac{\binom{n-m}{\ell-k} \binom{m-1}{k-1}}{\binom{n-1}{\ell-1}} = \binom{\ell-1}{k-1} (1-\frac{m+1}{n})^{\ell-k} (\frac{m}{n})^{k-1} + O(\frac{1}{n})$ for each $k, \ell, m$.
Observe that $\frac{\ell-1}{n-\ell+1} \leq \frac{K}{n-K} \leq \frac{2K}{n}$ whenever $n\geq 2K$. We have
\begin{align*}
\txts \frac{\binom{n-m}{\ell-k} \binom{m-1}{k-1}}{\binom{n-1}{\ell-1}}
& = \txts\binom{\ell-1}{k-1} \cdot \frac{(n-m)!}{(n-m-\ell+k)!} \cdot \frac{(m-1)!}{(m-k)!} \cdot \frac{(n-\ell)!}{(n-1)!} \\
& \leq \txts \binom{\ell-1}{k-1} (n-m)^{\ell-k} m^{k-1} (\frac{1}{n-\ell+1})^{\ell-1} \\
& = \txts \binom{\ell-1}{k-1} m^{k-1} \left\{\sum_{u=0}^{\ell-k} \binom{\ell-k}{u} (n-m-1)^u \sum_{u=0}^{\ell-1} \binom{\ell-1}{u} (\frac{1}{n})^u (\frac{\ell-1}{n(n-\ell+1)})^{\ell-u-1} \right\} \\
& \leq \txts \binom{\ell-1}{k-1} m^{k-1} \left\{\left[(n-m+1)^{\ell-k} + \txts K \cdot 2^K (n-m+1)^{\ell-k-1} \right] \right. \\
& \qquad \cdot \txts \left. \left[ (\frac{1}{n})^{\ell-1} + K \cdot 2^K (\frac{2K}{n})^{\ell} \right] \right\} \\
& \leq \txts \binom{\ell-1}{k-1} m^{k-1} \txts \left\{ (n-m+1)^{\ell-k} (\frac{1}{n})^{\ell-1} + \frac{c_0}{n^k \binom{\ell-1}{k-1}} \right\} \\
& \leq \txts \binom{\ell-1}{k-1} (1-\frac{m+1}{n})^{\ell-k} (\frac{m}{n})^{k-1} + \frac{c_0}{n},
\end{align*}
where $c_0 = c_0(K)$ is some constant. Then we have
\begin{align*}
w(r) & \geq \txts \sum_{j=1}^J \sum_{k=1}^{K} \sum_{m=s}^{n-1} \frac{1}{n} \sum_{\ell=k}^{K} \left(\frac{\binom{n-m}{\ell-k} \binom{m-1}{k-1}}{\binom{n-1}{\ell-1}}-\frac{c_0}{n} \right) z_{j|k}(m) \\
& = v_n(z) - \txts \sum_{j=1}^J \sum_{k=1}^{K} \frac{1}{n} \sum_{\ell=k}^{K} \frac{\binom{0}{\ell-k} \binom{n-1}{k-1}}{\binom{n-1}{\ell-1}} z_{j|k}(n) \\
& \qquad - \txts \sum_{j=1}^J \sum_{k=1}^{K} \sum_{m=s}^{n-1} \frac{1}{n} \sum_{\ell=k}^{K} \frac{c_0}{n} z_{j|k}(m) \\
& \geq v_n(z) - \txts \frac{J K}{n} - \frac{J K^2 c_0}{n} \geq \mathcal{P}_n^* - \frac{JK(s^2+c)}{n} - \frac{2JK}{s},
\end{align*}
where $c := K c_0 + 1$ is a constant.

Suppose $\frac{s}{n} < x \leq 1$. Let $i$ be the integer such that $\frac{i}{n} < x \leq \frac{i+1}{n}$. Observe that for each $j$ and $k$, we have
\begin{align*}
\txts\int_0^x \frac{r_{j|k}(y)}{y} d y & = \txts \sum_{m=s}^{i-1} \int_{\frac{m}{n}}^{\frac{m+1}{n}} \frac{z_{j|k}(m)}{y} d y + \int_{\frac{i}{n}}^x \frac{z_{j|k}(i)}{y} d y \\
& \leq \txts \sum_{m=s}^{i-1}\frac{z_{j|k}(m)}{m/n} \cdot \frac{1}{n} + \frac{z_{j|k}(i)}{i/n} \cdot \frac{1}{n} = \sum_{m=s}^i \frac{z_{j|k}(m)}{m},
\end{align*}
and
\begin{align*}
\txts\int_0^x \frac{r_{j|k}(y)}{y} d y & = \txts\sum_{m=s}^{i-1} \int_{\frac{m}{n}}^{\frac{m+1}{n}} \frac{z_{j|k}(m)}{y} d y + \int_{\frac{i}{n}}^x \frac{z_{j|k}(i)}{y} d y \\
& \geq \txts \sum_{m=s}^{i-1}\frac{z_{j|k}(m)}{(m+1)/n} \cdot \frac{1}{n} \geq \frac{s}{s+1} \sum_{m=s}^{i-1}\frac{z_{j|k}(m)}{m},
\end{align*}
where the last inequality follows from $\frac{m}{m+1} \geq \frac{s}{s+1}$ for $m\geq s$.
Then we have
\begin{align}
r_{J|k}(x) = z_{J|k}(i) & \leq 1 - \txts\sum_{m=s}^{i-1} \frac{1}{m}\sum_{\ell=1}^{K} z_{J|\ell}(m) \nonumber \\
& \leq 1 - \txts\int_{0}^{x} \frac{1}{y} \sum_{\ell=1}^{K} r_{J|\ell}(y) d y + \frac{1}{i}\sum_{\ell=1}^{K} z_{J|\ell}(i) \nonumber \\
& = 1 - \txts\int_{0}^{x} \frac{1}{y} \sum_{\ell=1}^{K} r_{J|\ell}(y) d y + \frac{1}{i}\sum_{\ell=1}^{K} r_{J|k}(x) \nonumber \\
& \leq \txts\frac{s+K}{s} - \int_{0}^{x} \frac{1}{y} \sum_{\ell=1}^{K} r_{J|\ell}(y) d y. \label{eq:equiv_rJ}
\end{align}
For $1\leq j < J$, we have
\begin{align*}
r_{j|k}(x) & = z_{j|k}(i) \leq \txts\sum_{m=s}^{i-1} \frac{1}{m} \sum_{\ell=1}^{K} [z_{(j+1)|\ell}(m) - z_{j|\ell}(m)] \\
& \leq \txts\frac{s+1}{s} \int_{0}^{x}\frac{1}{y} \sum_{\ell=1}^{K} r_{(j+1)|\ell}(y) d y - \txts\int_{0}^{x} \frac{1}{y} \sum_{\ell=1}^{K} r_{j|\ell}(y) d y + \frac{1}{s}\sum_{\ell=1}^{K} r_{j|\ell}(x).
\end{align*}
Summing up the above inequalities over $k$ yields
\begin{align*}
\txts\sum_{\ell=1}^{K} r_{j|\ell}(x) \leq \frac{K(s+1)}{s-K} \int_{0}^{x}\frac{1}{y} \sum_{\ell=1}^{K} r_{(j+1)|\ell}(y) d y,
\end{align*}
then it follows that
\begin{align}
r_{j|k}(x) & \leq \txts \int_{0}^{x}\frac{1}{y} \sum_{\ell=1}^{K} [r_{(j+1)|\ell}(y) - r_{j|\ell}(y) ] d y + \txts\frac{K+1}{s-K} \int_{0}^{x}\frac{\sum_{\ell=1}^{K} r_{(j+1)|\ell}(y)}{y}  d y. \label{eq:equiv_rj}
\end{align}

Now we define a solution $p$ for $\CP(J,K)$ as follows.
For each $j$ and $k$, set $p_{j|k}(x) := (1-(J-j+1)\delta) \cdot r_{j|k}(x)$ for $x\in[0,1]$, where $\delta\in(0,\frac{1}{J})$ is determined later. Note that $p_{j|k}(x) = 0$ for $0\leq x \leq \frac{s}{n}$. Since $r_{j|k}(x)\leq 1$ for all $j$ and $k$, and $\sum_{k=1}^{K} \sum_{\ell=k}^{K} \binom{\ell-1}{k-1} (1-x)^{\ell-k} x^{k-1} = K$, we have
\begin{align*}
w(p) & = w(r) - \txts \sum_{j=1}^{J} \sum_{k=1}^{K} \int_{\frac{s}{n}}^{1} \left(\sum_{\ell=k}^{K} \binom{\ell-1}{k-1} (1-x)^{\ell-k} \right) x^{k-1} (J-j+1) \delta r_{j|k} (x) dx \\
& \geq w(r) - J^2 K \delta \geq \mathcal{P}_n^* - \txts \frac{JK(s^2+c)}{n} - \frac{2JK}{s} - J^2 K \delta.
\end{align*}

For $0\leq x \leq \frac{s}{n}$, obviously the constraints of $\CP(J,K)$ hold for $p_{j|k}(x)$. Suppose $\frac{s}{n} < x \leq 1$. Then from~\eqref{eq:equiv_rJ} we have
\begin{align*}
p_{J|k}(x) & = (1-\delta) r_{J|k}(x)
\leq \txts\frac{(1-\delta)(s+K)}{s} - \int_{0}^{x} \frac{1}{y} \sum_{\ell=1}^{K} p_{J|\ell}(y) d y \\
& \leq \txts 1 - \int_{0}^{x} \frac{1}{y} \sum_{\ell=1}^{K} p_{J|\ell}(y) d y,
\end{align*}
assuming $\delta \geq \frac{K}{s}$ (and hence $\frac{(1-\delta)(s+K)}{s}\leq 1$). For $1\leq j < J$, from~\eqref{eq:equiv_rj} we have
\begin{align*}
p_{j|k}(x) & = (1-(J-j+1)\delta) r_{j|k}(x) \\
& \leq \txts \int_{0}^{x}\frac{1}{y} [\sum_{\ell=1}^{K} (1-(J-j)\delta - \delta) r_{(j+1)|\ell}(y) - (1-(J-j+1)\delta) r_{j|\ell}(y)] d y \\
& \qquad + \txts\frac{K+1}{s-K} \int_{0}^{x}\frac{1}{y} \sum_{\ell=1}^{K} r_{(j+1)|\ell}(y) d y \\
& \leq \txts \int_{0}^{x}\frac{1}{y} [\sum_{\ell=1}^{K} p_{(j+1)|\ell}(y) - p_{j|\ell}(y)] d y
+ \txts\left(\frac{K+1}{s-K}-\delta\right) \int_{0}^{x}\frac{1}{y} \sum_{\ell=1}^{K} r_{(j+1)|\ell}(y) d y \\
& \leq \txts \int_{0}^{x}\frac{1}{y} [\sum_{\ell=1}^{K} p_{(j+1)|\ell}(y) - p_{j|\ell}(y)] d y,
\end{align*}
assuming $\delta \geq \frac{3K}{s}$ (and hence $\frac{K+1}{s-K}-\delta \leq 0$). Therefore $p$ is a feasible solution for $(CP)$, whenever $\frac{3K}{s} \leq \delta < \frac{1}{J}$, with objective value $w(p) \geq \mathcal{P}_n^* - \frac{JK(s^2+c)}{n} - \frac{2JK}{s} - J^2 K \delta$. Set $s := \lceil \sqrt[3]{n} \, \rceil$ and $\delta := \frac{3K}{s} \leq \frac{3K}{\sqrt[3]{n}}$. For every $\epsilon \in(0,1)$, let $N := \lceil \max \{ \frac{512 J^6 K^6}{\epsilon^3}, c^{3/2}\} \rceil$. Then for all $n\geq N$, we have $\delta \leq \frac{3 \epsilon}{8 J^2 K} < \frac{1}{J}$ and
\begin{align*}
\mathcal{P}_n^* - \CP^* & \leq \mathcal{P}_n^* - w(p) \leq \txts \frac{JK(s^2+c)}{n} + \frac{2JK}{s} + J^2 K \delta \\
& \leq \txts \frac{3JK n^{2/3}}{n} + \frac{2JK}{\sqrt[3]{n}} + \frac{3 J^2 K^2}{\sqrt[3]{n}} \leq \frac{8 J^2 K^2}{\sqrt[3]{N}} \leq \epsilon,
\end{align*}
as required.
\end{proof}

\section{A Primal-Dual Method for Finding Thresholds} \label{sec:optalg}

We give a primal-dual procedure that finds appropriate thresholds
for which the $(J,K)$-Threshold Algorithm corresponds
to an optimal solution in the continuous linear program $\CP(J,K)$.
To illustrate our primal-dual method, we first
consider the special case $K=1$ as described in Theorem~\ref{th:main1}; the general case is given in Section~\ref{sec:k-best}.

The $J$-Threshold Algorithm is a special case
with $t_j := \tau_{j,1}$, and recall that
any algorithm in the class $\scrA$ corresponds
to a feasible solution in the following primal continuous LP:
\ignore{
We introduce a threshold algorithm $\D$ for the general $J$-choice secretary problem. The algorithm takes a vector $T_J=(t_1,\ldots,t_J)$ with $0< t_J \leq \cdots \leq t_1 < 1$ as input. In time interval $[0,t_J)$, the algorithm cannot select an item. At each time $t_j$ for $1\leq j \leq J$, a new quota is allocated to $\D$; that is, one more choice is allowed starting from $t_j$. By Proposition~\ref{prop:finite_can} there are finite potentials in $[t_J,1]$. The algorithm $\D$ selects a potential (starting from $t_J$) whenever there are available quotas. Note that $\D$ can select at most $J-j$ items before time $t_j$ for $1\leq j \leq J$ and at most $J$ items by time 1.

\noindent\begin{tabular}{|p{\textwidth}|}
\hline
\textbf{The Threshold Algorithm $\D=\D(T_J)$} \\
\textbf{Input:} A vector of times $T_J=(t_1,\ldots,t_J)$ with $0 \leq t_J \leq \cdots \leq t_1 \leq 1$. \\
Let $t_{0} = 1$. \\
In time interval $[0,t_J]$, observe arriving items (but select none). \\
For $i=1,2,\ldots,J$ \\
\qquad In time interval $[t_{J-i+1}, t_{J-i}]$, upon arrival of an item at time $x$, select $x$ if \\
\qquad\qquad (1) $x$ is a potential; and \\
\qquad\qquad (2) at most $i-1$ items have been selected. \\
\hline
\end{tabular}
}
\begin{align*}
\CP(J) \qquad \max \qquad & w(p) = \txts\sum_{j=1}^J \int_0^1 p_j(x) dx \\
\text{s.t.} \qquad & p_j(x) \leq \txts\int_0^x \frac{1}{y} [p_{j+1}(y) - p_j(y)] dy, \qquad \forall x\in[0,1], 1\leq j < J \\
 & p_J(x) \leq 1 - \txts\int_0^x \frac{p_J(y)}{y} d y, \qquad \forall x\in[0,1] \\
& p_j(x) \geq 0, \qquad \forall x\in[0,1], j\in [J].
\end{align*}


The dual LP for $\CP(J)$ is as follows (see~\cite{Levinson66} for details
on primal-dual continuous LP):
\begin{align*}
\CD(J) \qquad \min \qquad & \txts\int_0^1 q_J(x) d x \\
\text{s.t.} \qquad & q_1(x) + \txts\frac{1}{x} \int_x^1 q_1(y) d y \geq 1, \qquad \forall x\in[0,1] \\
 & q_j(x) + \txts\frac{1}{x} \int_x^1 [q_j(y) - q_{j-1}(y)] d y \geq 1, \qquad \forall x\in[0,1], 1 < j \leq J \\
& q_j(x) \geq 0, \qquad \forall x\in[0,1], j\in[J].
\end{align*}

\noindent \textbf{Weak Duality.}  Similar to normal LP, for any feasible
primal $p$ and dual $q$, the value of the primal objective is at most
that of the dual objective.  Moreover, if their objective values
are equal, then both are optimal.  We also have the
following complementary slackness conditions.

\begin{fact} [Complementary Slackness Conditions] \label{fact:css_clp}
Let $p=(p_1,\ldots,p_J)$ and $q=(q_1,\ldots,q_J)$ be feasible solutions of $\CP(J)$ and $\CD(J)$, respectively. Then, $p$ and $q$ are primal and dual optimal, respectively, if they satisfy the following conditions $\forall x \in [0,1]$:
\begin{align*}
& \txts\left( p_J(x) + \int_0^x \frac{1}{y} p_J(y) dy - 1\right) q_J(x) = 0 \\
& \txts\left( p_j(x) + \int_0^x \frac{1}{y} [p_j(y) - p_{j+1}(y)] d y \right) q_j(x) = 0, \qquad 1\leq j < J \\
& \txts\left( q_1(x) + \frac{1}{x} \int_x^1 q_1(y) d y - 1 \right) p_1(x) = 0 \\
& \txts\left( q_j(x) + \frac{1}{x} \int_x^1 [q_j(y) - q_{j-1}(y)] d y - 1 \right) p_j(x) = 0, \qquad 1 < j \leq J.
\end{align*}
\end{fact}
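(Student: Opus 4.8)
The plan is the standard linear-programming argument adapted to the continuous setting: I will show that the stated complementary slackness conditions force the primal objective $w(p) = \sum_{j=1}^{J}\int_0^1 p_j(x)\,dx$ of $\CP(J)$ to equal the dual objective $\int_0^1 q_J(x)\,dx$ of $\CD(J)$, and then invoke weak duality (stated above) to conclude that $p$ and $q$ are both optimal. A preliminary observation is that feasibility of $p$ makes $P_j(y) := \int_0^y \frac{p_j(x)}{x}\,dx$ finite and bounded on $[0,1]$: the last primal constraint yields $\int_0^x \frac{p_J(y)}{y}\,dy \le 1$, and the constraints for $j < J$ then bound $\int_0^x \frac{p_j(y)}{y}\,dy$ by downward induction on $j$; this boundedness, together with piecewise continuity, is what legitimizes the Fubini interchanges used below.

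First I would use the complementary slackness conditions attached to the \emph{dual} constraints (the ones carrying the factor $p_j(x)$): for almost every $x$ they give $p_1(x) = p_1(x) q_1(x) + \frac{p_1(x)}{x}\int_x^1 q_1(y)\,dy$ and, for $2 \le j \le J$, $p_j(x) = p_j(x) q_j(x) + \frac{p_j(x)}{x}\int_x^1 [q_j(y) - q_{j-1}(y)]\,dy$. Substituting these into $w(p)$ and applying Fubini in the form $\int_0^1 \frac{p_j(x)}{x}\left(\int_x^1 f(y)\,dy\right)dx = \int_0^1 f(y)\,P_j(y)\,dy$ turns $w(p)$ into $\sum_{j=1}^{J}\int_0^1 p_j(x) q_j(x)\,dx + \int_0^1 q_1 P_1 + \sum_{j=2}^{J}\int_0^1 (q_j - q_{j-1})P_j$ (all remaining integrals over $[0,1]$).

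The next step is the index bookkeeping, which I expect to be the only delicate point: collecting the coefficient of each $q_m$ among the $P$-terms, the sum telescopes so that $q_m$ receives $P_m - P_{m+1}$ for $1 \le m \le J-1$ and $P_J$ for $m = J$. Since $P_j(y) - P_{j+1}(y) = \int_0^y \frac{1}{x}[p_j(x) - p_{j+1}(x)]\,dx$, adding back the $\int_0^1 p_j q_j$ terms shows that the function multiplying $q_j(y)$ in $w(p)$ is exactly $p_j(y) + \int_0^y \frac{1}{x}[p_j(x) - p_{j+1}(x)]\,dx$ for $j < J$, and $p_J(y) + \int_0^y \frac{p_J(x)}{x}\,dx$ for $j = J$. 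Now I invoke the complementary slackness conditions attached to the \emph{primal} constraints (carrying the factor $q_j(x)$): the first $J-1$ of these state precisely that $q_j(y)\bigl(p_j(y) + \int_0^y \frac{1}{x}[p_j(x) - p_{j+1}(x)]\,dx\bigr) = 0$, so every term with $j < J$ vanishes, while the last one gives $q_J(y)\bigl(p_J(y) + \int_0^y \frac{p_J(x)}{x}\,dx\bigr) = q_J(y)$. Hence $w(p) = \int_0^1 q_J(y)\,dy$, and weak duality completes the proof.

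In summary, the proof is essentially two rounds of substitution — first with the dual-side slackness identities together with a Fubini swap, then with the primal-side slackness identities — separated by a telescoping reorganization of the $q_m$-coefficients; the main obstacle is getting that telescoping right, while justifying the Fubini interchange is routine given the boundedness of the $P_j$ noted at the outset.
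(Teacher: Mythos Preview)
Your argument is correct. The paper, however, does not give a proof of this statement at all: it is recorded as a \emph{Fact}, implicitly imported from the general theory of continuous linear programming (the weak duality paragraph just above cites Levinson~\cite{Levinson66}), so there is no ``paper's proof'' to compare against. What you have written is exactly the standard verification that complementary slackness forces equality of the primal and dual objectives, specialized to $\CP(J)$ and $\CD(J)$: substitute the dual-side slackness identities into $w(p)$, swap the order of integration, telescope the $q_m$--coefficients, and then use the primal-side slackness identities to collapse everything to $\int_0^1 q_J$. The bookkeeping is right---in particular, your observation that feasibility gives $P_j(y)=\int_0^y \frac{p_j(x)}{x}\,dx \le P_J(y) \le 1$ is the correct way to justify the Fubini step, and the signs in the telescoping match the way the paper writes the slackness conditions. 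One small remark: to make the Fubini justification airtight you are also using that the $q_j$ are bounded on $[0,1]$, which follows from the ambient piecewise-continuity assumption in the paper; it would do no harm to say this explicitly.
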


\noindent \textbf{Primal-Dual Method.}  We start from a primal feasible solution $p$ corresponding
to a $J$-Threshold Algorithm, whose thresholds are to be determined.
We can determine the values of the thresholds one by one in order
to construct a dual $q$ such that complementary slackness conditions
hold, which implies that with those found thresholds the $J$-Threshold
Algorithm is optimal.

\noindent \textbf{(1) Forming Feasible Primal Solution $p$.}
Suppose $p$ is the (feasible) primal corresponding to the
$J$-Threshold Algorithm with thresholds 
$0<t_J \leq t_{J-1} \leq \cdots \leq t_1 \leq 1$.
We denote $E_x^j$ to be the event that item at $x$ is selected
by using quota $Q_j$ (where quotas with larger $j$'s are used first),
$V_x$ to be the event that $x$ is a potential, 
and $Z_x^j$ to be the event that at time $x$, quota $Q_j$ has
already been used (and so have the quotas with indices larger than $j$).
For notational convenience, $Z_x^{J+1}$ is the whole sample space, i.e.,
an always true event.

For each $j \in [J]$, consider the conditional probability
$\Pr(E_x^j |Z_x^{j+1}\wedge\overline{Z_x^j}\wedge V_x)$
of the event that item at $x$ is selected
by using quota $Q_j$, given that $x$ is a potential and quota $Q_j$ is the next available quota at time $x$.  By definition of the Threshold
Algorithm, this conditional probability is 0 if $x < t_j$ and is 1 if $x \geq t_j$.  Hence, we have the following.
\begin{align*}
\txts \Pr(E_x^j|Z_x^{j+1}\wedge\overline{Z_x^j}\wedge V_x)  = \txts\frac{\Pr(E_x^j|V_x)}{\Pr(Z_x^{j+1}\wedge\overline{Z_x^j} | V_x)} = \txts \frac{p_j(x)}{\Pr(Z_x^{j+1}\wedge\overline{Z_x^j} | V_x)} 
 = 
\begin{cases}
 0, & 0 \leq x < t_j \\
 1, & t_j \leq x \leq 1,
\end{cases}
\end{align*}
where from independence of $V_x$ and $Z_x^j$, and Lemma~\ref{lemma:k-trick}, we have:
\begin{align*}
\Pr(Z_x^{j+1}\wedge\overline{Z_x^j} | V_x) = \Pr(Z_x^{j+1}\wedge\overline{Z_x^j}) 
=\begin{cases}
 \int_0^x \frac{1}{y} [p_{j+1}(y) - p_j(y)] d y, & 1 \leq j < J \\
 1-\int_0^x \frac{1}{y} p_J(y) d y, & j = J.
\end{cases}
\end{align*}

This implies that in the primal $\CP(J)$,
the $j$-th constraint is equality
in the range $[t_j, 1]$, but might be strict inequality
in the range $[0, t_j)$ (hence forcing $q_j$ to 0);
the function $p_j$ is zero in the range $[0,t_j)$, but
might be strictly positive in the range $[t_j, 1]$ (hence
forcing equality for the $j$-th constraint in dual).

\noindent \textbf{(2) Finding Feasible Dual $q$ to Satisfy
Complementary Slackness.}  To ensure that a dual solution $q$
satisfies complementary slackness together with the above primal $p$,
we require the following for each $j \in [J]$, where for
notational convenience we write $q_0 \equiv 0$.

\begin{equation} \label{eq:slack}
\begin{cases}
q_j(x) = 0, &\qquad x \in [0, t_j] ;\\
q_j(x) + \frac{1}{x} \int_x^1 [q_j(y) - q_{j-1}(y)] d y = 1, &\qquad
x \in  [t_j, 1].
\end{cases}
\end{equation}

The astute reader might notice that
we have imposed an extra condition $q_j(t_j)=0$.  This
will ensure that as long as (\ref{eq:slack}) is satisfied by some
non-negative $q_j$, the $j$-th constraint in $\CD(J)$ is also automatically
satisfied.  For $x \in [t_j,1]$,
the constraint is clearly satisfied with equality;
for $x \in [0, t_j)$, observing that
both $q_j$ and $q_{j-1}$ vanishes below $t_j$
the left hand side reduces to $\frac{1}{x} \int_{t_j}^1 [q_j(y) - q_{j-1}(y)] dy$,
which is larger than $q_j(t_j) + \frac{1}{t_j} \int_{t_j}^1 [q_j(y) - q_{j-1}(y)] = 1$.

As we shall see soon, in the recursive equations (\ref{eq:slack}),
the function $q_1$ and the threshold $t_1$ does not depend on $J$.
In particular, the thresholds $t_j$'s and functions $q_j$'s found for $\CD(J)$
can be used to extend to the solution for $\CD(J+1)$.  This explains
the nice structure of the solution that appears in Theorem~\ref{th:main1}.

\noindent \textbf{Objective Value.} The objective
value of $\CD(J)$ is $\int_0^1 q_J(y) dy = \int_{t_J}^1 q_J(y) dy$.
From the second equation of (\ref{eq:slack}) evaluating at $x=t_j$, we have
the recursive definition $\int_{t_j}^1 q_j(y) dy = \int_{t_{j-1}}^1 q_{j-1}(y) dy + t_j$, which immediately implies
that the objective value for $\CD(J)$ is $\sum_{j=1}^J t_j$, as stated
in Theorem~\ref{th:main1}.  Hence, it suffices to show the 
existence of dual functions as required in (\ref{eq:slack}).

\begin{lemma} [Existence of Feasible Dual Satisfying Complementary Slackness] \label{lemma:slack}
There is a procedure to generate an increasing sequence $\{\theta_j\}_{j \geq 1}$
of rational numbers producing $t_j := \frac{1}{e^{\theta_j}}$,
and a sequence $\{q_j:[0,1] \rightarrow \R^+\}_{j \geq 1}$ of non-negative functions that satisfy (\ref{eq:slack}).
\end{lemma}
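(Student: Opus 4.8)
The plan is to construct the thresholds $t_j$ and dual functions $q_j$ inductively on $j$, exploiting the recursive structure of the slackness equations~\eqref{eq:slack}. The base case is $j=1$: here $q_0 \equiv 0$, so we must solve $q_1(x) + \frac{1}{x}\int_x^1 q_1(y)\,dy = 1$ on $[t_1,1]$ with $q_1 \equiv 0$ on $[0,t_1]$ and the matching condition $q_1(t_1) = 0$. Writing $\Phi_1(x) := \int_x^1 q_1(y)\,dy$, the equation becomes $-\Phi_1'(x) + \frac{1}{x}\Phi_1(x) = 1$, a first-order linear ODE whose general solution is $\Phi_1(x) = -x\ln x + cx$ for a constant $c$; imposing $\Phi_1(1) = 0$ forces $c = 0$, so $\Phi_1(x) = -x\ln x$ and $q_1(x) = -\Phi_1'(x) = \ln x + 1$ on $[t_1,1]$. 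The constraint $q_1 \geq 0$ and $q_1(t_1) = 0$ then pin down $t_1 = 1/e$, i.e.\ $\theta_1 = 1$, recovering the classical secretary constant and confirming that $q_1,t_1$ do not depend on $J$.

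For the inductive step, suppose $t_1 < \cdots < t_{j-1}$ and $q_1,\dots,q_{j-1}$ have been constructed. Set $\Phi_j(x) := \int_x^1 [q_j(y) - q_{j-1}(y)]\,dy$; the second line of~\eqref{eq:slack} becomes the linear ODE $-\Phi_j'(x) - q_{j-1}(x) + \frac{1}{x}\Phi_j(x) = 1$ on $[t_j,1]$ with $\Phi_j(1)=0$, which is solved by variation of parameters to give $\Phi_j$, and hence $q_j(x) = q_{j-1}(x) - \Phi_j'(x)$, as an explicit expression in terms of the already-known $q_{j-1}$ and a free constant absorbed by $\Phi_j(1)=0$. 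The threshold $t_j$ is then determined as the (largest) root in $(0, t_{j-1})$ of the equation $q_j(t_j) = 0$; one extends $q_j$ by $0$ on $[0,t_j]$. The objective-value bookkeeping already carried out in the excerpt ($\int_{t_j}^1 q_j = \int_{t_{j-1}}^1 q_{j-1} + t_j$) shows this is consistent. To get the ``rational'' description claimed, I would substitute $x = e^{-\theta}$ (equivalently track everything as polynomials in $\ln x$ on each successive interval): each $q_j$ restricted to $[t_{j-1}, t_{j-2}]$ etc.\ turns out to be a polynomial in $\ln x$ with rational coefficients, integrals of such polynomials against $1/x$ stay in the same class, and the defining equation $q_j(t_j)=0$ for $\theta_j := -\ln t_j$ becomes a polynomial equation with rational coefficients; since the constant term is visibly rational (it equals $\theta_{j-1}$ plus rational contributions) and the polynomial is monic-ish in a controlled way, $\theta_j$ is forced to be rational. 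Monotonicity $\theta_j > \theta_{j-1}$ (i.e.\ $t_j < t_{j-1}$), and the $O(J^3)$ running time (the $j$-th step manipulates polynomials of degree $O(j)$, with $O(j)$ such steps, each costing $O(j)$ arithmetic operations on the coefficient vectors), follow from the explicit recursion.

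The main obstacle I expect is verifying the two sign conditions along the recursion: (i) that $q_j$ as produced by the ODE is actually nonnegative on all of $[t_j,1]$ — not merely at the endpoint $t_j$ where we force it to vanish — and (ii) that the root-finding for $t_j$ genuinely lands in $(0, t_{j-1})$ so that the ordering $t_J < \cdots < t_1$ is maintained and $q_{j-1}$ vanishes on $[0,t_{j-1}] \supseteq [0,t_j]$ exactly where we need it to. Both should follow from analyzing the monotonicity of the auxiliary function $x \mapsto q_j(x)$ (showing it is, say, increasing on the relevant subinterval, so its unique sign change is at $t_j$) together with a careful comparison of the polynomial-in-$\ln x$ representations on adjacent intervals; the feasibility of the $j$-th dual constraint \emph{off} the support was already handled in the paragraph preceding the lemma, so once nonnegativity and the interval membership are established, $q$ is dual feasible, complementary slackness holds by construction, weak duality gives optimality of the $J$-Threshold Algorithm, and the payoff equals $\sum_{j=1}^J t_j$ as computed above.
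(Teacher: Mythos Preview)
Your approach is essentially the paper's: induct on $j$, solve the first-order linear ODE for $q_j$ with boundary data at $x=1$, track everything as polynomials in $\ln x$ with rational coefficients, and locate $t_j$ as the zero of $q_j$. The paper writes the ODE directly as $q_j'(x) = \frac{1}{x}(1 - q_{j-1}(x))$ and integrates to the closed form
\[
q_j(x) = 1 + \ln x + \int_x^1 \frac{q_{j-1}(y)}{y}\,dy \qquad (x \in [t_j,1]),
\]
which is the same object as your $q_{j-1} - \Phi_j'$.

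There is one genuine gap in your rationality argument. You say the equation $q_j(t_j)=0$ becomes ``a polynomial equation with rational coefficients'' in $\theta_j$ and that ``since the polynomial is monic-ish in a controlled way, $\theta_j$ is forced to be rational.'' That reasoning is invalid as stated: a monic polynomial with rational coefficients need not have rational roots. The point you are missing, and the paper's proof uses explicitly, is that $t_j$ lies in $(0,t_{j-1}]$, and on that interval $q_{j-1}\equiv 0$, so for $x\le t_{j-1}$ the integral $\int_x^1 \frac{q_{j-1}(y)}{y}\,dy$ is the \emph{constant} $d_{j-1}:=\int_{t_{j-1}}^1 \frac{q_{j-1}(y)}{y}\,dy$. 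Hence on $(0,t_{j-1}]$ one has the \emph{linear} expression $q_j(x)=1+\ln x + d_{j-1}$, and $q_j(t_j)=0$ gives $\theta_j = 1 + d_{j-1}$ outright. Rationality of $d_{j-1}$ then follows from the piecewise-polynomial-in-$\ln x$ structure with rational coefficients and rational breakpoints $\theta_i$, exactly as you indicate. This same linearity also gives you for free that $q_j$ is strictly increasing on $(0,t_{j-1}]$, $q_j(t_{j-1})>0$ (see below), and $q_j(x)\to -\infty$ as $x\to 0$, so the root $t_j$ exists, is unique, and lies in $(0,t_{j-1})$ --- addressing your obstacle (ii).

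For your obstacle (i), nonnegativity on the remaining piece $[t_{j-1},1]$, the paper does \emph{not} argue via monotonicity of $q_j$. Instead it adds to the induction hypothesis the comparison $q_j(x) > q_{j-1}(x)$ on $(t_j,1)$, which follows immediately from the integral formula once $q_{j-1}>q_{j-2}$ is known, and yields $q_j \ge q_{j-1} \ge 0$ on $[t_{j-1},1]$ together with $q_j(t_{j-1}) > q_{j-1}(t_{j-1}) = 0$. Your monotonicity route can also be made to work (carry $q_{j-1}\le 1$ through the induction so that $q_j'=\frac{1-q_{j-1}}{x}\ge 0$), but you should state the auxiliary hypothesis explicitly rather than leave it as ``should follow.''
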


\begin{proof}
We show the existence result by induction; our induction proof
actually gives a method to generate such $t_j$'s and $q_j$'s.
We explicitly describe the method in Section~\ref{sec:generate},
and it can be seen that the time to generate the first $J$ thresholds
is $O(J^3)$.

For convenience, we denote
$q_0(x) \equiv 0$ and set $\theta_0 := 0$ and $t_0 := 1$.
Suppose for some $j \geq 1$ we have constructed the function
$q_{j-1}$ which is continuous and can be positive only in $[t_{j-1}, 1]$.  We
next wish to find continuous function $q_j$ and threshold $t_j < t_{j-1}$
satisfying (\ref{eq:slack}).  If such $q_j$ and $t_j$ exist,
then we must have the following for $x \in [t_j, 1]$:
\begin{align*}
\txts q_j(x) + \frac{1}{x} \int_x^1 [q_j(y) - q_{j-1}(y)]dy = 1 \\
\txts x q_j(x) + \int_x^1 [q_j(y) - q_{j-1}(y)]dy = x \\
\txts (x q_j'(x) + q_j(x)) - q_j(x)  + q_{j-1}(x) = 1 \\
\txts q_j'(x) = \frac{1}{x}(1 - q_{j-1}(x)). 
\end{align*}
Since $q_j(1) = 1$, we must have
\begin{equation} \label{eq:q_j}
\txts q_j(x) = 1 + \ln x + \int_x^1 \frac{q_{j-1}(y)}{y} dy, \quad \forall x \in [t_j, 1]
\end{equation}
 
To show that both $q_j$ and $t_j$ exist, we need a stronger induction hypothesis.
Hence, we first explicitly solve for $q_1$ and $t_1$, and state what properties
we can assume.  Since $q_0(x) \equiv 0$, we have $q_1(x) = 1 + \ln x$ on $[t_1,1]$.
In order to have $q_1(t_1) = 0$, we must have $t_1 := \frac{1}{e}$ and $\theta_1 := 1$.
We give our induction hypothesis, which is true for $j=1$.

\noindent \textbf{Induction Hypothesis.} Suppose for some $j \geq 1$,
there exist functions $\{q_i\}_{i=0}^j$ and thresholds $\{t_i\}_{i=0}^j$
satisfying (\ref{eq:slack}) such that the following holds.

\begin{compactitem}
\item[1.]  The function $q_j$ is non-negative and continuous.
\item[2.] There exists an increasing sequence $\{\theta_i\}_{i=0}^j$
of rational numbers that defines the thresholds $t_i := \exp(- \theta_i)$
such that $q_j$ is 0 on $[0, t_j]$ and between successive thresholds, $q_j(x)$
is given by a polynomial in $\ln x$ with rational coefficients.
\item[3.] For $x \in (t_j, 1)$, $q_j(x) > q_{j-1}(x)$.
\end{compactitem}

We next show the existence of $q_{j+1}$ and $t_{j+1}$.  

\noindent \textbf{Finding $q_{j+1}$.} From (\ref{eq:q_j}),
$q_{j+1}(x)$ must agree on $[t_{j+1},1]$ with the function $q(x)$ given by
$q(x) = 1 + \ln x + \int_x^1 \frac{q_j(y)}{y} dy$, which is continuous.

We first check that we can set $q_{j+1}(x) := q(x)$ for $x \in [t_j, 1]$.
Since from the induction hypothesis we have $q_{j} > q_{j-1}$ on $(t_j, 1)$,
we immediately have $\forall x \in [t_j, 1)$,
$q(x) = 1 + \ln x + \int^1_{x} \frac{q_{j}(y)}{y} dy > 1 + \ln x + \int^1_{x} \frac{q_{j-1}(y)}{y} dy = q_j(x)$.
In particular, we have $q(t_j) > q_j(t_j) = 0$,
and also $q(x) \geq q_j(x) \geq 0$ for $x \in [t_j,1]$.
%

From the induction hypothesis on $q_j$, we can conclude
that between successive thresholds in $[t_j,1]$, $q_{j+1}(x)$
can also be represented by a polynomial in $\ln x$ with rational coefficients.
Hence, it follows that $d_j := \int^1_{t_j} \frac{q_j(y)}{y} dy$ is rational.

\noindent \textbf{Finding $t_{j+1}$.} We next consider the behavior of $q$ for $x \leq t_j$.
Observe that in this range, $q(x) = 1 + \ln x + d_j$, which is a polynomial
in $\ln x$ with rational coefficients, and strictly increasing in $x$.
Moreover, we have $q(t_j) > 0$ and as $x$ tends to 0, $q(x)$ tends to negative infinity.
Hence, there is a unique $t_{j+1} \in (0, t_j)$ such that $q(t_{j+1}) = 0$;
we set $t_{j+1} := \exp(-\theta_{j+1})$, where $\theta_{j+1} := 1 + d_j$,
which is rational.

Hence, we can set $q_{j+1}(x) := q(x)$ for $x \in [t_{j+1},1]$ and 0 for $x \in [0,t_{j+1}]$.
We can check that the conditions in the induction hypothesis hold for $q_{j+1}$ and $t_{j+1}$ as well.  This completes the induction proof.
\end{proof}

\subsection{Explicit Methods for the $(J,1)$-Case}
\label{sec:generate}

For $K=1$ with thresholds denoted by $t_j = \tau_{j,1}$ for $j\in[J]$, the
proof of Lemma~\ref{lemma:slack} gives a method to generate the dual variables $q_j$'s and thresholds $t_j$'s, which we describe below.


\begin{table}[H]
\begin{tabular}{|p{\textwidth}|}
\hline
\textsc{$J$-ThresholdsGenerator} \\
Set $\theta_1 := 1$ and $t_1 := e^{-\theta_1}$. Let $q_1$ be a function defined on $[0,1]$ such that $q_1(x) = 0$ for $x\in[0, t_1)$ and $q_1(x) = 1+\ln x$ for $x\in[t_1, 1]$.\\

For $j=1,2,\cdots, J-1$
\begin{enumerate}
\item[]
Set $\theta_{j+1} := 1 + \int_{t_j}^{1} \frac{q_j(y)}{y} d y$ and $t_{j+1} := e^{-\theta_{j+1}} \in (0, t_j)$.

\item[]
Let $q_{j+1}$ be a function defined on $[0,1]$ such that

\centering
$
q_{j+1}(x) =
\begin{cases}
0, & 0 \leq x < t_{j+1} \\

1 + \ln x + \int_{t_j}^{1} \frac{q_j(y)}{y} d y, & t_{j+1} \leq x < t_{j} \\

1 + \ln x + \int_{x}^{1} \frac{q_j(y)}{y} d y, & t_{j} \leq x \leq 1.
\end{cases}
$
\end{enumerate} \\
\hline
\end{tabular}
\end{table}

One can see that the $q_j(x)$'s are polynomials in $\ln x$ with rational coefficients.  Hence
we can describe the algorithm by maintaining the
rational coefficients of the polynomials.  The modified method is described
as follows, and it can be seen that generating the first $J$ thresholds takes $O(J^3)$ time.
\vspace{-2pt}

\ignore{

As we will see in the proof for the following theorem, the \textsc{$J$-ThresholdsGenerator} is well defined. That is, for each $1\leq j \leq J-1$, the threshold $t_{j+1}$ is guaranteed to be in the interval $(0, t_j)$. Moreover, the $\theta_j$'s are all rational numbers.
Indeed, there is an efficient procedure to generate the $\theta$ values and hence the thresholds, which only uses additions and multiplications instead of integrations.

\begin{table}[H]
\begin{tabular}{|p{\textwidth}|}
\hline
\textsc{$\theta$-Generator} \\
For integer $n\geq 1$, let $0^n$ be the zero vector with $n$ coordinates. Let $\theta_0 := 0$. \\

For positive integers $j$ and $k$, let $c_{j,k} \in \R^{J+1}$ be a vector (corresponding to $q_j$ in interval $[t_k, t_{k-1}]$ where $t_0 := 1$). Denote by $c_{j,k}(i)$ the $i$-th coordinate of $c_{j,k}$ for $i\in[J+1]$. \\

Set $c_{1,1} := (1, 1, 0^{J-1})$ and $\theta_1 := 1$. \\

For $j=1,2,3,\ldots$ \\

\hspace{10pt} For $k=1,2,\ldots, j$ \\

\hspace{20pt} Let $d\in \R^{J}$ be such that $d(i) = -\frac{c_{j,k}(i)}{i}$ for each $i\in[J]$. \\

\hspace{20pt} Set $c_{j+1, k} := (1, 1, 0^{J-1})
+ (\sum_{i=1}^{J+1} \frac{c_{j,k}(i)}{i} (-\theta_{k-1})^i, d )
+ \sum_{\ell=1}^{k-1} (\sum_{i=1}^{J+1} \frac{c_{j,\ell}(i)}{i} [(-\theta_{\ell-1})^i-(-\theta_{\ell})^i], 0^J )$. \\

\hspace{10pt} Set $c_{j+1,j+1} := (1,1,0^{J-1})
+ \sum_{\ell=1}^{j} (\sum_{i=1}^{J+1} \frac{c_{j,\ell}(i)}{i} [(-\theta_{\ell-1})^i-(-\theta_{\ell})^i], 0^J )$. \\

\hspace{10pt} Set $\theta_{j+1} := c_{j+1,j+1}(1)$. \\
\hline
\end{tabular}
\end{table}

It can be verified that the \textsc{$\theta$-Generator} can generate $\{\theta_j\}_{j\in[J]}$ in time $O(J^3)$; this is clear in the following modified procedure. See Table~\ref{table:J} for several initial $\theta$ values.
}

\begin{table}[h]
\begin{tabular}{|p{\textwidth}|}
\hline
\textsc{$\theta$-Generator} \\
For integer $n\geq 1$, let $0^n$ be the zero vector with $n$ coordinates. Let $\theta_0 := 0$. \\

For positive integers $j$ and $k$, let $c_{j,k} \in \R^{J+1}$ be a vector (corresponding to $q_j$ in interval $[t_k, t_{k-1}]$ where $t_0 := 1$). Denote by $c_{j,k}(i)$ the $i$-th coordinate of $c_{j,k}$ for $i\in[J+1]$. \\

Set $c_{1,1} := (1, 1, 0^{J-1})$ and $\theta_1 := 1$. \\

For $j=1,2,3,\ldots$ \\

\hspace{10pt} Let $\alpha \in \R$ be an auxiliary variable with initial value $\alpha := 0$. \\

\hspace{10pt} For $k=1,2,\ldots, j$ \\

\hspace{20pt} Let $d\in \R^{J}$ be such that $d(i) = -\frac{c_{j,k}(i)}{i}$ for each $i\in[J]$. \\

\hspace{20pt} If $k>1$, then set $\alpha := \alpha + \sum_{i=1}^{J+1} \frac{c_{j,k-1}(i)}{i} [(-\theta_{k-2})^i-(-\theta_{k-1})^i]$.

\hspace{20pt} Set $c_{j+1, k} := (1, 1, 0^{J-1})
+ (\sum_{i=1}^{J+1} \frac{c_{j,k}(i)}{i} (-\theta_{k-1})^i, d )
+ (\alpha, 0^J )$. \\

\hspace{10pt} Set $\alpha := \alpha + \sum_{i=1}^{J+1} \frac{c_{j,j}(i)}{i} [(-\theta_{j-1})^i-(-\theta_{j})^i]$.

\hspace{10pt} Set $c_{j+1,j+1} := (1,1,0^{J-1})
+ (\alpha, 0^J )$. \\

\hspace{10pt} Set $\theta_{j+1} := c_{j+1,j+1}(1)$. \\
\hline
\end{tabular}
\end{table}

\section{Primal-Dual Method for General $(J,K)$-case}
\label{sec:k-best}
In Section~\ref{sec:optalg} we have shown the optimality of the $J$-Threshold Algorithm. We apply our primal-dual method to the general $(J,K)$-case following a similar framework. After proving Theorem~\ref{th:main0} by construction, we can directly obtain Theorem~\ref{th:12_22} as a special case.

For $k\in[K]$ and $x\in[0,1]$, define $\alpha_k(x) := \txts\sum_{\ell=k}^{K} \binom{\ell-1}{k-1} (1-x)^{\ell-k} x^{k-1}$. The dual continuous LP for $\CP(J,K)$ is as follows.
\begin{align*}
\CD(J,K) \qquad \min \qquad & \txts\sum_{k=1}^{K} \int_{0}^{1} q_{J|k}(x) d x \\
\text{s.t.} \qquad & q_{1|k} (x) + \txts\frac{1}{x} \int_{x}^{1} \sum_{\ell=1}^{K} q_{1|\ell}(y) d y \geq \alpha_k(x), \qquad \forall x\in[0,1], k\in[K] \\
& q_{j|k}(x) + \txts\frac{1}{x} \int_{x}^{1} \sum_{\ell=1}^{K} [q_{(j)|\ell}(y) - q_{(j-1)|\ell}(y)] d y \geq \alpha_k(x) \\
& \qquad\qquad \forall x\in[0,1], k\in[K], 1 < j \leq J \\
& q_{j|k}(x) \geq 0, \qquad \forall x\in[0,1], k\in[K], j\in[J].
\end{align*}


For $\CP(J,K)$ and $\CD(J,K)$, we say a constraint is the $(j,k)$-th constraint if $p_{j|k}$ or $q_{j|k}$ is the concerned function in the constraint. For instance, the $(j,k)$-th constraint in the dual with $1 < j \leq J$ is $q_{j|k}(x) + \txts\frac{1}{x} \int_{x}^{1} \sum_{\ell=1}^{K} [q_{(j)|\ell}(y) - q_{(j-1)|\ell}(y)] d y \geq \alpha_k(x)$, $\forall x\in[0,1]$. We still have weak duality and the following complementary slackness conditions.

\begin{fact} [Complementary Slackness Conditions] \label{fact:css_jk}
Let $p=(p_{j|k})_{j\in[J], k\in[K]}$ and $q=(q_{j|k})_{j\in[J],k\in[K]}$ be feasible solutions of $\CP(J,K)$ and $\CD(J,K)$, respectively. Then $p$ and $q$ are primal and dual optimal, respectively, if they satisfy the following conditions $\forall x\in[0,1], k\in[K]$:
\begin{align*}
& \txts \left( p_{j|k}(x) + \txts\int_{0}^{x} \frac{1}{y} \sum_{\ell=1}^{K} [p_{j|\ell}(y) - p_{(j+1)|\ell} (y)] d y \right) q_{j|k}(x) = 0, \qquad 1 \leq j < J \\
& \txts \left( p_{J|k}(x) + \txts\int_{0}^{x} \frac{1}{y} \sum_{\ell=1}^{K} p_{J|\ell}(y) d y - 1 \right) q_{J|k}(x) = 0 \\
& \txts \left( q_{j|k}(x) + \txts\frac{1}{x} \int_{x}^{1} \sum_{\ell=1}^{K} [q_{(j)|\ell}(y) - q_{(j-1)|\ell}(y)] d y - \alpha_k(x) \right) p_{j|k}(x) = 0, \qquad 1 < j \leq J \\
& \txts \left( q_{1|k} (x) + \txts\frac{1}{x} \int_{x}^{1} \sum_{\ell=1}^{K} q_{1|\ell}(y) d y - \alpha^k(x) \right) p_{1|k}(x) = 0. 
\end{align*}
\end{fact}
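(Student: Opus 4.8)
The statement to be proved is the set of complementary slackness conditions for $\CP(J,K)$ and $\CD(J,K)$ (Fact~\ref{fact:css_jk}), which asserts that feasible primal $p$ and dual $q$ are optimal if those four families of product-equals-zero conditions hold. The plan is to prove this by the standard continuous-LP weak-duality argument: show that for \emph{any} feasible $p$ and $q$ the primal objective $w(p)$ is at most the dual objective $\sum_k \int_0^1 q_{J|k}$, and that the slack in this chain of inequalities is exactly the sum of the complementary-slackness products, so that when those products all vanish, $w(p)$ equals the dual value and both must be optimal.

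First I would write out the dual objective $\sum_{k=1}^K \int_0^1 q_{J|k}(x)\,dx$ and substitute, in a controlled way, the primal feasibility constraints. The cleanest route is to sum the $(J,k)$-th dual constraints against a suitable multiplier and the $(j,k)$-th ones too, but in continuous LP it is cleaner to do the following: start from $\sum_k \int_0^1 q_{J|k}(x)\,dx$, and for each $j$ from $J$ down to $1$, use the dual constraint for level $j$ (in the form $q_{j|k}(x) \ge \alpha_k(x) - \frac1x\int_x^1 \sum_\ell [q_{j|\ell}(y)-q_{(j-1)|\ell}(y)]\,dy$, with $q_{0|\ell}\equiv 0$) together with the primal constraint to telescope. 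Concretely, one multiplies the $j$-th primal constraint by $q_{j|k}$, integrates, and swaps the order of integration in the double integrals $\int_0^1 q_{j|k}(x)\int_0^x \frac1y(\cdots)\,dy\,dx = \int_0^1 \frac1y(\cdots)\int_y^1 q_{j|k}(x)\,dx\,dy$ (Fubini). This is exactly the manipulation that turns a $\int_0^x$ in the primal into a $\frac1x\int_x^1$ in the dual, and it is where the $\alpha_k$'s enter via $\sum_{k}\alpha_k(x)p_{j|k}(x)$ reproducing the primal objective integrand $\sum_k \left(\sum_{\ell=k}^K \binom{\ell-1}{k-1}(1-x)^{\ell-k}\right)x^{k-1}p_{j|k}(x)$. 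Carrying this telescoping across all $j$ and all $k$ collapses the dual objective down to $w(p)$ plus a sum of manifestly nonnegative remainder terms, each of which is the integral of a primal slack times a dual variable, or a dual slack times a primal variable — precisely the expressions appearing in Fact~\ref{fact:css_jk}.

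The second part is then immediate: weak duality gives $w(p) \le \sum_k \int_0^1 q_{J|k}$ for all feasible pairs; if $p,q$ additionally satisfy the four complementary-slackness equations then every remainder term vanishes, so $w(p) = \sum_k \int_0^1 q_{J|k}(x)\,dx$; hence $p$ attains the dual value and no feasible primal can exceed it, so $p$ is primal optimal, and symmetrically $q$ is dual optimal. I would present the weak-duality inequality as a named sub-claim (or cite that it was already asserted in the text right before the Fact) and then devote the bulk of the argument to the identity $\sum_k \int_0^1 q_{J|k} = w(p) + (\text{sum of CS products})$.

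The main obstacle is bookkeeping rather than conceptual: getting the Fubini swaps and the telescoping signs exactly right so that the cross terms between consecutive levels $j$ and $j+1$ cancel and only the ``diagonal'' slack terms survive, and verifying that the boundary behavior at $x=0$ (where $\frac1y$ is singular) causes no trouble — here one leans on piecewise continuity of the $p_{j|k}$ and on the fact that the constraints force $p_{j|k}(x)\to 0$ fast enough near $0$ (as in Lemma~\ref{lemma:k-trick} and the $\Pr(Z_x^j)$ computation), so all the integrals converge absolutely and Fubini is legitimate. A secondary subtlety is that all the ``$\forall x$'' statements are only almost-everywhere, so one should note that measure-zero sets do not affect the integrals. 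Once these technical points are dispatched, the complementary-slackness characterization drops out of the weak-duality computation exactly as in finite-dimensional LP.
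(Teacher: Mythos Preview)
The paper does not actually prove Fact~\ref{fact:css_jk}; it is stated as a \emph{Fact} without proof, relying implicitly on the general continuous-LP duality theory of Tyndall and Levinson cited earlier in the text (just as the simpler Fact~\ref{fact:css_clp} for $K=1$ is stated without proof). Your proposal therefore goes beyond what the paper does: you are supplying the standard weak-duality computation that the paper takes for granted.

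Your approach is correct and is exactly the canonical one. Multiplying each primal constraint by the corresponding $q_{j|k}$, integrating, and applying Fubini to turn $\int_0^1 q_{j|k}(x)\int_0^x \frac{1}{y}(\cdots)\,dy\,dx$ into $\int_0^1 \frac{1}{y}(\cdots)\int_y^1 q_{j|k}(x)\,dx\,dy$ is precisely what produces the dual-side expressions, and the telescoping in $j$ collapses the cross terms. The identity $\text{(dual objective)} - w(p) = \sum_{j,k}\int_0^1(\text{primal slack})_{j,k}\cdot q_{j|k} + \sum_{j,k}\int_0^1(\text{dual slack})_{j,k}\cdot p_{j|k}$ then gives weak duality and the complementary-slackness characterization in one stroke. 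Your caveats about the $\frac{1}{y}$ singularity at $0$, almost-everywhere statements, and absolute integrability are the right technical points to flag; the piecewise-continuity assumption and the fact that feasible $p_{j|k}$ are bounded by $1$ and vanish appropriately near $0$ (via the constraints themselves) handle them.
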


\noindent \textbf{Primal-Dual Method.} We start from a primal feasible solution $p$ corresponding
to a $(J,K)$-Threshold Algorithm, whose thresholds are to be determined.
We can determine the values of the thresholds one by one in order
to construct a dual $q$ such that complementary slackness conditions
hold, which implies that with those found thresholds the $(J,K)$-Threshold
Algorithm is optimal.

\noindent \textbf{(1) Forming Feasible Primal Solution $p$.}
Suppose $p$ is the (feasible) primal correspond to the
$(J,K)$-Threshold Algorithm with $JK$ thresholds 
$\tau_{j,k}$ such that $0< \tau_{J,k} \leq \tau_{J-1,k} \leq \cdots \leq \tau_{1,k} \leq 1$ for $k\in [K]$ and $0 <\tau_{j,1} \leq \tau_{j,2}\leq \cdots \leq \tau_{j,K} \leq 1$ for $j\in[J]$.
Suppose $E_x^j$ is the event that the item at $x$ is selected
by using quota $Q_j$ (where quotas with larger $j$'s are used first),
$V_x^k$ is the event that $x$ is a $k$-potential, 
and $Z_x^j$ is the event that at time $x$, quota $Q_j$ has
already been used (and so have the quotas with indices larger than $j$).
For notational convenience, $Z_x^{J+1}$ is the whole sample space, i.e.,
an always true event.

For each $j \in [J]$, consider the conditional probability
$\Pr(E_x^j |Z_x^{j+1}\wedge\overline{Z_x^j}\wedge V_x^k)$
of the event that item at $x$ is selected
by using quota $Q_j$, given that $x$ is a $k$-potential and quota $Q_j$ is the next available quota at time $x$.  By definition of the Threshold
Algorithm, this conditional probability is 0 if $x < \tau_{j,k}$ and is 1 if $x \geq \tau_{j,k}$.  Hence, we have the following.

\begin{align*}
\txts \Pr(E_x^j|Z_x^{j+1}\wedge\overline{Z_x^j}\wedge V_x^k)  = \txts\frac{\Pr(E_x^j|V_x^k)}{\Pr(Z_x^{j+1}\wedge\overline{Z_x^j} | V_x^k)} = \txts \frac{p_{j|k}(x)}{\Pr(Z_x^{j+1}\wedge\overline{Z_x^j} | V_x^k)} 
 = 
\begin{cases}
 0, & 0 \leq x < \tau_{j,k} \\
 1, & \tau_{j,k} \leq x \leq 1,
\end{cases}
\end{align*}
where from independence of $V_x^k$ and $Z_x^j$ (Lemma~\ref{lemma:ind}), and Lemma~\ref{lemma:k-trick}, we have:
\begin{align*}
\Pr(Z_x^{j+1}\wedge\overline{Z_x^j} | V_x^k) 
= \begin{cases}
\txts\int_{0}^{x} \frac{1}{y} \sum_{\ell=1}^{K} [p_{(j+1)|\ell}(y) - p_{j|\ell} (y)] d y, & 1 \leq j < J \\
1 - \txts\int_{0}^{x} \frac{1}{y} \sum_{\ell=1}^{K} p_{J|\ell}(y) d y, & j = J.
\end{cases}
\end{align*}

This implies that in the primal $\CP(J,K)$,
the $(j,k)$-th constraint is equality
in the range $[\tau_{j,k}, 1]$, but might be strict inequality
in the range $[0, \tau_{j,k})$ (hence forcing $q_{j|k}$ to be 0);
the function $p_{j|k}$ is zero in the range $[0,\tau_{j,k})$, but
might be strictly positive in the range $[\tau_{j,k}, 1]$ (hence
forcing equality for the $(j,k)$-th constraint in dual).

\noindent \textbf{(2) Finding Feasible Dual $q$ to Satisfy
Complementary Slackness.}  To ensure that a dual solution $q$
satisfies complementary slackness together with the above primal $p$.
We require the following for each $j \in [J]$ and $k\in[K]$, where for
notational convenience we write $q_{0|k} \equiv 0$ for all $k\in[K]$.

\begin{equation} \label{eq:slack_jk}
\begin{cases}
q_{j|k}(x) = 0, & x \in [0, \tau_{j,k}] ;\\
q_{j|k}(x) + \txts\frac{1}{x} \int_{x}^{1} \sum_{\ell=1}^{K} [q_{j|\ell}(y) - q_{(j-1)|\ell}(y)] d y = \alpha_k(x), & x \in  [\tau_{j,k}, 1].
\end{cases}
\end{equation}

Here the extra condition $q_{j|k}(\tau_{j,k}) = 0$ ensures that as long as~\eqref{eq:slack_jk} is satisfied by some
non-negative $q_{j|k}$, the $(j,k)$-th constraint in $\CD(J)$ is also automatically
satisfied.
Proof for this indication is not straightforward and requires stronger conditions for the dual functions, which we provide along the way we prove Theorem~\ref{th:main0}.
From the recursive equations~\eqref{eq:slack_jk}, the thresholds $\tau_{j,k}$'s and functions $q_{j|k}$'s found for $\CD(J,K)$ can be used to extend to the solution for $\CD(J+1,K)$.

\noindent\textbf{Objective Value.} The objective value of $\CD(J,K)$ is $\int_{0}^{1}\sum_{k=1}^{K} q_{J|k}(x) d x = \int_{\tau_{J,1}}^{1}\sum_{k=1}^{K} q_{J|k}(x) d x$. From equations~\eqref{eq:slack_jk} we have for $j\in[J]$,
\begin{align*}
\txts\int_{\tau_{j,1}}^{1} \sum_{k=1}^{K} q_{j|k}(x) d x - \int_{\tau_{j-1,1}}^{1} \sum_{k=1}^{K} q_{(j-1)|k}(x) d x = \tau_{j,1} \alpha_1(\tau_{j,1}),
\end{align*}
where $\tau_{0,1} = 1$. This together with $\alpha_1(x) = \frac{1-(1-x)^K}{x}$ implies that the objective value of $\CD(J,K)$ is
\begin{align*}
\txts\int_{\tau_{J,1}}^{1}\sum_{k=1}^{K} q_{J|k}(x) d x = \sum_{j=1}^{J} \tau_{j,1} \alpha_1(\tau_{j,1})
= J - \sum_{j=1}^{J} (1-\tau_{j,1})^K.
\end{align*}

To prove Theorem~\ref{th:main0}, it suffices to show the existence of dual functions as required in~\eqref{eq:slack_jk}. Before showing this result we first give two useful observations.

\begin{lemma} \label{lemma:tech1}
Let $b>0$ and $c$ be real numbers, $N$ be a positive integer, and $g(x)$ and $\gamma(x)$ be two functions of $x$ continuous in $(0,b]$. Then, the equation $f(x) + \frac{N}{x} \int_{x}^{b} [f(y) - g(y)] d y + \frac{c}{x} = \gamma(x)$ with respect to $f$ has a continuous solution in $(0,b]$, which can be expressed as
\begin{align*}
f(x) = x^{N-1} \left[\txts\frac{b \gamma(b)-c}{b^N} - \int_{x}^{b} \frac{(y \gamma(y))'}{y^N} d y + N \int_{x}^{b} \frac{g(y)}{y^N} d y \right].
\end{align*}
In particular, if we replace $c$ with $\widehat{c}$ and $g$ with $\widehat{g}$ such that $\widehat{c}<c$ and $\widehat{g}>g$, then the resulting solution $\widehat{f}$ satisfies $\widehat{f}>f$.
\end{lemma}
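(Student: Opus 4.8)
The plan is to treat the functional equation as a first-order linear ODE. First I would multiply through by $x$ to clear denominators, writing $x f(x) + N \int_x^b [f(y)-g(y)]\,dy + c = x\gamma(x)$. Both sides are differentiable in $x$ on $(0,b]$ (the integral term is differentiable because $f$, $g$ are continuous, once we know $f$ is continuous; I will verify consistency at the end by exhibiting the closed form). Differentiating gives $f(x) + x f'(x) - N[f(x) - g(x)] = (x\gamma(x))'$, i.e. $x f'(x) + (1-N) f(x) = (x\gamma(x))' - N g(x)$. This is a linear ODE; dividing by $x$ and using the integrating factor $x^{-(1-N)/1}$... more precisely, the equation $f'(x) + \frac{1-N}{x} f(x) = \frac{1}{x}\bigl[(x\gamma(x))' - N g(x)\bigr]$ has integrating factor $x^{1-N}$, so $\bigl(x^{1-N} f(x)\bigr)' = x^{-N}\bigl[(x\gamma(x))' - N g(x)\bigr]$.

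Next I would integrate this identity from $x$ to $b$. This yields $b^{1-N} f(b) - x^{1-N} f(x) = \int_x^b \frac{(y\gamma(y))'}{y^N}\,dy - N\int_x^b \frac{g(y)}{y^N}\,dy$, and solving for $f(x)$ produces the claimed formula once I substitute the boundary value $f(b)$. To get $f(b)$, I evaluate the original equation at $x=b$: the integral term vanishes, giving $f(b) + c/b = \gamma(b)$, hence $f(b) = \gamma(b) - c/b$ and $b^{1-N} f(b) = b^{-N}(b\gamma(b) - c)$. Substituting gives exactly the stated closed form for $f(x)$. Then I would check the converse: define $f$ by that formula, observe it is continuous on $(0,b]$ (each integral is a continuous function of its lower limit since the integrands are continuous on $(0,b]$, and $x^{N-1}$ is continuous), and verify by differentiation that it satisfies the original functional equation — this is the routine check that reverses the steps above, using that the bracketed quantity is precisely $x^{1-N}f(x)$ so differentiating recovers the ODE, and integrating the ODE back recovers the integral equation with the correct constant fixed by the $x=b$ boundary value.

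For the monotonicity addendum, I would simply inspect the closed form: replacing $c$ by $\widehat c < c$ strictly increases the term $\frac{b\gamma(b)-c}{b^N}$ (since $b^N>0$), hence increases the bracket and, after multiplying by $x^{N-1}\ge 0$, gives $\widehat f(x)\ge f(x)$ — strictly for $x>0$. Replacing $g$ by $\widehat g > g$ increases $N\int_x^b \frac{g(y)}{y^N}\,dy$ (since $N\ge 1$ and $y^{-N}>0$), again increasing the bracket. Combining both perturbations preserves the strict inequality $\widehat f > f$ on $(0,b)$.

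The only mild subtlety — not really an obstacle — is the circularity in the differentiation step: to differentiate the integral term one wants $f$ continuous, but that is part of what is being asserted. The clean way around it, which I would adopt, is to \emph{define} $f$ by the explicit formula and verify directly that it solves the equation and is continuous; uniqueness then follows because any continuous solution must satisfy the ODE (the integral term is automatically differentiable once $f$ is continuous), and the ODE with the boundary condition at $b$ has a unique solution. So the real content is just the integrating-factor computation, which is entirely routine.
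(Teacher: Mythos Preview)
Your proposal is correct and follows essentially the same approach as the paper: multiply through by $x$, differentiate to obtain the linear ODE $x f'(x) - (N-1) f(x) = (x\gamma(x))' - N g(x)$, rewrite via the integrating factor as $\bigl(f(x)/x^{N-1}\bigr)' = \bigl[(x\gamma(x))' - N g(x)\bigr]/x^N$, integrate from $x$ to $b$, and fix the constant using $f(b) = \gamma(b) - c/b$. Your additional remarks on the circularity issue (resolve it by defining $f$ via the formula and verifying) and on the monotonicity addendum (read it off the closed form) are sound and slightly more explicit than the paper, which omits the latter entirely.
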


\begin{proof}
Substituting $x$ with $b$ into the equation we get $f(b) = \gamma(b) - \frac{c}{b}$. Taking derivatives on both sides of $x f(x) + N \int_{x}^{b} [f(y) - g(y)] d y + c = x \gamma(x)$ we get
\begin{align*}
x f'(x) - (N-1) f(x) + N g(x) = (x \gamma(x))'.
\end{align*}
It suffices to show there exists a continuous $f(x)$ satisfying the above equation with condition $f(b) = \gamma(b) - \frac{c}{b}$. The above equation is equivalent to $\left( \frac{f(x)}{x^{N-1}} \right)' = \frac{(x \gamma(x))' - N g(x)}{x^N}$, which gives $f(x) = x^{N-1} [c_0 - \int_{x}^{b} \frac{(y \gamma(y))' - N g(y)}{y^N} d y ]$ for some constant $c_0$. Using the initial condition we get $c_0 = \frac{b \gamma(b)-c}{b^N}$ and hence a continuous function $f(x) = x^{N-1} [\frac{b \gamma(b)-c}{b^N} - \int_{x}^{b} \frac{(y \gamma(y))'}{y^N} d y + N \int_{x}^{b} \frac{g(y)}{y^N} d y]$.
\end{proof}

\begin{lemma} \label{lemma:tech2}
Let $K>1$ be an integer. For $k\in[K]$ and $x\in[0,1]$, define $\alpha_k(x) := x^{k-1} \sum_{\ell=k}^{K} \binom{\ell-1}{k-1} (1-x)^{\ell-k}$. Then for all $x\in(0,1)$ we have the following.
\begin{compactitem}
\item[(a)]
$(x\alpha_k(x))' > 0$ for $1\leq k \leq K$;

\item[(b)]
$\alpha_{k}(x) > \alpha_{k+1}(x)$ for $1 \leq k < K$.
\end{compactitem}
\end{lemma}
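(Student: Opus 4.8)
The plan is to first recast $x\,\alpha_k(x)$ in a form where differentiation telescopes, and then read off both inequalities from a single explicit formula. The key identity I would establish is
\[
x\,\alpha_k(x)\;=\;\sum_{m=k}^{K}\binom{K}{m}x^m(1-x)^{K-m}\;=\;\Pr[\,\mathrm{Bin}(K,x)\ge k\,],
\]
i.e.\ the negative-binomial tail $\sum_{\ell=k}^{K}\binom{\ell-1}{k-1}x^k(1-x)^{\ell-k}$ equals the binomial upper tail with $K$ trials. This is standard; I would justify it in one line probabilistically: in a sequence of i.i.d.\ $\mathrm{Bernoulli}(x)$ trials, the $k$-th success occurs within the first $K$ trials if and only if the first $K$ trials contain at least $k$ successes, and the two sides above are exactly the probabilities of these two events. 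Write $F_k(x):=x\,\alpha_k(x)$ for this common value; note $F_k$ is a polynomial, so the probabilistic language can be dropped in favour of an induction on $K$ (or a direct reindexing) if a purely algebraic derivation is preferred.

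For part (a) I would differentiate $F_k$ in its binomial form. Using $m\binom{K}{m}=K\binom{K-1}{m-1}$ and $(K-m)\binom{K}{m}=K\binom{K-1}{m}$, the expansion
\[
F_k'(x)=\sum_{m=k}^{K}\binom{K}{m}\Bigl[m x^{m-1}(1-x)^{K-m}-(K-m)x^m(1-x)^{K-m-1}\Bigr]
\]
telescopes after reindexing $m\mapsto m-1$ in the first part: all interior terms cancel and only one term survives, giving
\[
F_k'(x)=K\binom{K-1}{k-1}x^{k-1}(1-x)^{K-k}.
\]
This is strictly positive for $x\in(0,1)$ and $1\le k\le K$ (for $k=1$ it is $K(1-x)^{K-1}$, for $k=K$ it is $Kx^{K-1}$, and in general both power factors are positive on the open interval), which is exactly $(x\alpha_k(x))'>0$. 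Alternatively, monotonicity of $F_k(x)=\Pr[\mathrm{Bin}(K,x)\ge k]$ in $x$ follows from stochastic monotonicity of the binomial in its success probability, but the explicit derivative is cleanest for getting strictness.

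For part (b), since $\alpha_k(x)=F_k(x)/x$, I would simply compute
\[
\alpha_k(x)-\alpha_{k+1}(x)=\frac{F_k(x)-F_{k+1}(x)}{x}=\frac{1}{x}\,\Pr[\mathrm{Bin}(K,x)=k]=\binom{K}{k}x^{k-1}(1-x)^{K-k},
\]
using that consecutive binomial tails differ by the point mass at $k$. For $1\le k<K$ and $x\in(0,1)$ we have $x^{k-1}>0$ and $(1-x)^{K-k}>0$ (here the restriction $k<K$, i.e.\ $K-k\ge 1$, is what keeps the second factor off the forced zero at $x=1$), so $\alpha_k(x)>\alpha_{k+1}(x)$. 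There is no real obstacle in this lemma; the only step needing care is justifying the negative-binomial/binomial-tail identity (equivalently, carrying out the telescoping bookkeeping correctly when differentiating $F_k$), after which both parts are one-line computations.
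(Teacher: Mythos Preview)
Your proof is correct and takes a genuinely different route from the paper's. The key difference is that you first recognize the closed form
\[
x\,\alpha_k(x)=\sum_{m=k}^{K}\binom{K}{m}x^m(1-x)^{K-m}=\Pr[\mathrm{Bin}(K,x)\ge k],
\]
after which both parts become explicit one-line computations: $(x\alpha_k(x))'=K\binom{K-1}{k-1}x^{k-1}(1-x)^{K-k}$ and $\alpha_k(x)-\alpha_{k+1}(x)=\binom{K}{k}x^{k-1}(1-x)^{K-k}$. The paper, by contrast, never invokes this identity. It works directly with $\beta_k(x):=\alpha_k(x)/x^{k-1}$, derives the differential relation $\beta_k'(x)=-k\,\beta_{k+1}(x)$, observes that part~(a) is equivalent to $\beta_k(x)>x\,\beta_{k+1}(x)$ (i.e.\ to part~(b)), and then proves the latter by a backward induction on $k$, showing that $\lambda_k(x):=\beta_k(x)-x\beta_{k+1}(x)$ satisfies $\lambda_k(1)=0$ and $\lambda_k'(x)=-(k+1)\lambda_{k+1}(x)<0$.

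Your approach is cleaner and more informative: it yields the exact values of the derivative and the gap, and it makes transparent the probabilistic meaning of $x\alpha_k(x)$ (which in the paper's setting is precisely $\Pr[W_x^{k_\geq}]$, the probability that the item at time $x$ has overall rank at most $K$ given it is a $k$-potential, up to the conditioning). The paper's approach is more self-contained in that it does not appeal to the negative-binomial/binomial-tail identity, but it pays for this with an induction and an indirect equivalence argument. One small remark: your probabilistic justification of the identity is perfectly adequate, but since the rest of the paper is written in an LP/analysis style, you might want to include the purely algebraic verification (e.g.\ by checking both sides satisfy the same first-order recursion $F_k(x)-F_{k+1}(x)=\binom{K}{k}x^k(1-x)^{K-k}$ with $F_K(x)=x^K$) to keep the exposition uniform.
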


\begin{proof}

Observe that $\alpha_K(x) = x^{K-1}$. Then $(x \alpha_K(x))' = K x^{K-1} > 0$. In what follows, we first show that $(x \alpha_k(x))' > 0$ if and only if $\alpha_{k}(x) > \alpha_{k+1}(x)$ for $1 \leq k < K$. Then we prove $\alpha_{k}(x) > \alpha_{k+1}(x)$ by induction starting with $k = K-1$.

Let $1 \leq k < K$. Define $\beta_k(x) := \frac{\alpha_k(x)}{x^{k-1}}$; note $\beta_K(x) = 1$. We show $(x^{k} \beta_k(x))' > 0$ if and only if $\beta_{k}(x) > x \beta_{k+1}(x)$. By definition we have
\begin{align*}
\beta'_{k}(x) & = \left( \txts\sum_{\ell=k+1}^{K} \binom{\ell-1}{k-1} (1-x)^{\ell-k} \right)'
= - \txts\sum_{\ell=k+1}^{K} \binom{\ell-1}{k-1} (\ell-k) (1-x)^{\ell-k-1} \\
& = - k \txts\sum_{\ell=k+1}^{K} \binom{\ell-1}{k} (1-x)^{\ell-k-1}
= - k \beta_{k+1}(x).
\end{align*}
It follows that
\begin{align*}
(x^{k} \beta_k(x))' > 0
\Longleftrightarrow k \beta_k(x) + x \beta'_k(x) > 0
\Longleftrightarrow \beta_{k}(x) > x \beta_{k+1}(x).
\end{align*}

Next we prove $\beta_{k}(x) > x \beta_{k+1}(x)$ by backward induction
starting at $k=K-1$. Note that
we have $\beta_{K-1}(x) = 1 + (K-1)(1-x) > x \beta_K(x)$. Suppose $1\leq k < K-1$ and the inequality holds for $k+1$, i.e., $\beta_{k+1}(x) - x \beta_{k+2}(x) > 0$.

Define $\lambda_k(x) := \beta_{k}(x) - x \beta_{k+1}(x)$. Note that $\beta_k(1) = \alpha_k(1) = 1$ and hence $\lambda_k(1) = 0$ for all $1\leq k < K$. Moreover,
 we have
\begin{align*}
\lambda'_k(x) & = \beta'_k(x) - \beta_{k+1}(x) - x \beta'_{k+1}(x) \\
& = - k \beta_{k+1}(x) - \beta_{k+1}(x) + (k+1) x \beta_{k+2}(x) \\
& = - (k+1) (\beta_{k+1}(x) - x \beta_{k+2}(x)) < 0,
\end{align*}
where the last inequality follows from the induction hypothesis. Therefore, we have $\lambda_k(x) > \lambda_k(1) = 0$, which implies $\beta_{k}(x) > x \beta_{k+1}(x)$.
\end{proof}

\begin{lemma} [Existence of Feasible Dual Satisfying Complementary Slackness] \label{lemma:slack_jk}
There is a procedure to find appropriate thresholds $(\tau_{j,k})_{j\in [J], k\in[K]}$ and a collection $(q_{j|k})_{j\in[J],k\in[K]}$ of non-negative functions that satisfy~\eqref{eq:slack_jk}.
\end{lemma}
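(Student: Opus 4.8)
The plan is to prove Lemma~\ref{lemma:slack_jk} by a double induction, mirroring the structure of the $K=1$ proof in Lemma~\ref{lemma:slack} but now tracking a whole vector $(q_{j|1},\ldots,q_{j|K})$ at each level $j$. I would induct on $j$ (the quota index), and within each level construct the functions $q_{j|k}$ and thresholds $\tau_{j,k}$ in decreasing order of $k$, i.e.\ from $k=K$ down to $k=1$. The base case $j=1$ has $q_{0|k}\equiv 0$, so~\eqref{eq:slack_jk} reduces to $q_{1|k}(x)+\frac1x\int_x^1\sum_\ell q_{1|\ell}(y)\,dy=\alpha_k(x)$ on $[\tau_{1,k},1]$; this is exactly the type of integral equation solved in closed form by Lemma~\ref{lemma:tech1} (with $N=1$, $b=1$, $g$ the running sum of the already-determined $q_{1|\ell}$ for $\ell>k$, and $\gamma=\alpha_k$), and I would peel off $k=K$ first (where $\alpha_K(x)=x^{K-1}$ and the equation is simplest), then descend.

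The inductive step: assume $(q_{j-1|k})_k$ and $(\tau_{j-1,k})_k$ are built, with $q_{j-1|k}$ continuous, non-negative, supported on $[\tau_{j-1,k},1]$, and piecewise given by nice functions (polynomials in $\ln x$ times powers of $x$, or whatever closed form Lemma~\ref{lemma:tech1} produces). To build $q_{j|k}$, rewrite the second line of~\eqref{eq:slack_jk} as
\begin{align*}
q_{j|k}(x)+\frac1x\int_x^1\Bigl[q_{j|k}(y)-g_k(y)\Bigr]dy+\frac{c_{j,k}}{x}=\alpha_k(x),
\end{align*}
where $g_k$ collects the contributions of $q_{j|\ell}$ ($\ell\neq k$, already known since we go $k=K,K-1,\ldots$) together with $-\sum_\ell q_{(j-1)|\ell}$, and $c_{j,k}$ is a constant absorbing the part of the integral over a region where $q_{j|k}$ is already pinned to $0$ — exactly the device that made $\theta_{j+1}=1+d_j$ rational in the $K=1$ case. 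Apply Lemma~\ref{lemma:tech1} to get the closed-form $q_{j|k}$ valid on $[\tau_{j,k},1]$; then, as in the $K=1$ argument, observe that for $x$ below the smallest relevant threshold the formula becomes monotone in $x$ and tends to $-\infty$ as $x\to 0$ while being positive at the top, so a unique zero $\tau_{j,k}$ exists, and set $q_{j|k}\equiv 0$ below it.

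The genuinely new difficulty — and where I expect to spend the real effort — is verifying the two structural properties that make the induction self-sustaining and make the ``extra condition $q_{j|k}(\tau_{j,k})=0$ suffices for feasibility'' claim go through: (i) the monotonicity/ordering $\tau_{j,1}\le\tau_{j,2}\le\cdots\le\tau_{j,K}$ and $\tau_{J,k}\le\cdots\le\tau_{1,k}$ required by the Threshold Algorithm's definition, and (ii) a positivity comparison $\sum_\ell q_{j|\ell}(x) > \sum_\ell q_{(j-1)|\ell}(x)$ (or an appropriate pointwise version) on the relevant interval, which is what guarantees $g_k$ moves the right way and keeps the next-level functions non-negative. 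This is precisely the role played by Lemma~\ref{lemma:tech2}: the facts $(x\alpha_k(x))'>0$ and $\alpha_k(x)>\alpha_{k+1}(x)$ feed into the ``monotone dependence'' clause of Lemma~\ref{lemma:tech1} (replacing $c$ by a smaller $\widehat c$ and $g$ by a larger $\widehat g$ strictly increases the solution) to propagate both the ordering of thresholds and the strict domination $q_{j|k}>q_{(j-1)|k}$ from one level to the next. So the proof skeleton is: set up the closed form via Lemma~\ref{lemma:tech1}; use Lemma~\ref{lemma:tech2} plus the monotone-comparison clause to establish existence of each $\tau_{j,k}\in(0,\tau_{j-1,k})$ with the correct orderings; check non-negativity and continuity of $q_{j|k}$; and finally re-run the $K=1$ argument (both $q_{j|k}$ and $q_{(j-1)|k}$ vanish below $\tau_{j,k}$, so on $[0,\tau_{j,k})$ the dual constraint's left side equals $\frac1x$ times a constant exceeding its value at $x=\tau_{j,k}$, which is $\alpha_k(\tau_{j,k})\le\alpha_k(x)$ — here using $(x\alpha_k)'>0$) to conclude the full dual constraint holds everywhere. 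The bookkeeping of which $q_{j|\ell}$'s are ``already known'' when solving for $q_{j|k}$, and making sure the constants $c_{j,k}$ and the comparison inequalities interlock correctly across both indices, is the main obstacle.
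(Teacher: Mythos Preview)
Your high-level plan matches the paper's: double induction on $j$ (outer) and $k$ (inner, from $K$ down to $1$), drive the construction with Lemma~\ref{lemma:tech1}, and use Lemma~\ref{lemma:tech2} together with the monotone-comparison clause of Lemma~\ref{lemma:tech1} to propagate the threshold orderings and the domination $\sum_\ell q_{j|\ell} > \sum_\ell q_{(j-1)|\ell}$. Your dual-feasibility argument via $(x\alpha_k(x))'>0$ is also exactly what the paper does (its Lemma~\ref{lemma:dual_feas}).

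There is, however, a genuine gap in your decoupling. In~\eqref{eq:slack_jk} the integral term is $\frac{1}{x}\int_x^1 \sum_{\ell=1}^K q_{j|\ell}(y)\,dy$ (minus the level-$(j-1)$ sum), and this involves \emph{all} $\ell$, not just $\ell \geq k$. When you write ``$g_k$ collects the contributions of $q_{j|\ell}$ ($\ell\neq k$, already known since we go $k=K,K-1,\ldots$)'', you are treating the $q_{j|\ell}$ with $\ell < k$ as known, but in the order $K \to 1$ they are precisely the ones not yet constructed. Even the very first case $j=1$, $k=K$ is already coupled: $q_{1|K}(x)+\frac{1}{x}\int_x^1\sum_{\ell=1}^K q_{1|\ell}(y)\,dy=x^{K-1}$ contains $q_{1|1},\ldots,q_{1|K-1}$ in the integrand, so you cannot apply Lemma~\ref{lemma:tech1} with $N=1$ and a known $g$.

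The paper breaks this coupling by working interval by interval and with \emph{partial sums} $r_{j|k}(x):=\sum_{\ell\le k}q_{j|\ell}(x)$. On $[\tau_{j,k},\tau_{j,k+1}]$ one has $q_{j|\ell}\equiv 0$ for all $\ell>k$, so $\sum_{\ell=1}^K q_{j|\ell}=r_{j|k}$ there; summing~\eqref{eq:slack_jk} over $\ell\in[k]$ yields a \emph{single} integral equation for $r_{j|k}$ alone, to which Lemma~\ref{lemma:tech1} applies with $N=k$ (not $N=1$). Each individual function is then recovered on that interval as $q_{j|\ell}(x)=\frac{r_{j|k}(x)-\gamma_k(x)}{k}+\alpha_\ell(x)$, where $\gamma_k:=\sum_{\ell\le k}\alpha_\ell$, and Lemma~\ref{lemma:tech2}(b) immediately gives $q_{j|1}>q_{j|2}>\cdots>q_{j|k}$ on that interval. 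Once you replace your single-function peeling by this summation-then-recover trick, the remainder of your outline (existence of a unique zero giving $\tau_{j,k}$, the comparison $r_{j|K}>r_{j-1|K}$ carried by the monotone clause of Lemma~\ref{lemma:tech1}, and the feasibility check on $[0,\tau_{j,k})$) goes through essentially as you describe and as the paper does it.
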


\begin{proof}
We show the result by induction; our induction proof gives a method to generate the thresholds $\tau_{j,k}$'s and the functions $q_{j|k}$'s. For convenience we denote $q_{0|k}(x) \equiv 0$, and set $\tau_{0,k} := 1$ and $\tau_{j,K+1} := 1$ for all $k\in[K+1]$ and $j\in[J]$. 
Also, define $r_{j|k}(x) := \sum_{\ell=1}^{k} q_{j|\ell}(x)$ and $\gamma_k(x) = \sum_{\ell=1}^{k} \alpha_\ell(x)$ for all $j\in[J]$ and $k\in[K]$. Observe that $\gamma_K(x) \equiv K$ and $\gamma_k(0) = K$ for $k\in[K]$. The induction process is over $j\in[J]$. For each $j$, since each constraint involves the functions $q_{j|k}$ for all $k\in[K]$, we do not find $q_{j|k+1}$ on the whole interval $[0,1]$ before going to $q_{j|k}$; instead, we consider the intervals $[\tau_{j,k}, \tau_{j,k+1}]$ one by one and study the behavior of all functions $(q_{j|\ell})_{\ell\in[K]}$ within each interval.

\noindent\textbf{Base Case ($j=1$).} First consider the base case with $j=1$. To find thresholds $\tau_{1,k}$'s and non-negative functions $q_{1,k}$'s for $k\in[K]$ satisfying~\eqref{eq:slack_jk}, we use another induction procedure on $k$. Suppose $k=K$ and $x\in[\tau_{1,K}, 1]$. Summing up the equalities $q_{1|k}(x) + \txts\frac{1}{x} \int_{x}^{1} \sum_{\ell=1}^{K} q_{(j)|\ell}(y) d y = \alpha_k(x)$ over $k$ we get $r_{1|K}(x) + \frac{K}{x} \int_{x}^{1} r_{1|K}(y) d y = K$. By Lemma~\ref{lemma:tech1} we have $r_{1|K}(x) = \frac{K^2}{K-1} x^{K-1} - \frac{K}{K-1}$. Then it follows that $q_{1|k}(x) = \alpha_k(x) + \frac{K}{K-1} x^{K-1} - \frac{K}{K-1}$. From the equation $q_{1|K}(x) + \frac{1}{x} \int_{x}^{1} r_{1|K}(y) d y = \alpha_K(x)$ and Lemma~\ref{lemma:tech2} we have $r_{1|K}(x) \geq K q_{1|K}(x)$ and hence $q'_{1|K}(x) = \frac{(x\alpha_K(x))' + r_{1|K}(x) - q_{1|K}(x)}{x} \geq 0$. Recall $\alpha_K(x) = x^{K-1}$. Setting $q_{1|K}(\tau_{1,K}) = 0$ yields $\tau_{1,K} = \sqrt[K-1]{\frac{K}{2K-1}}$. Then $q_{1|K}(x) \geq 0$ for $x\in[\tau_{1,K}, 1]$. By Lemma~\ref{lemma:tech2}(b) we have $q_{1|k}(x) > q_{1|K}(x) \geq 0$ for $1\leq k \leq K-1$ and $x\in[\tau_{1,K}, 1)$.

Suppose for some $k \leq K-1$ we have found $\tau_{1,k+1} < \tau_{1,k+2}$ and constructed continuous $q_{1|\ell}$ for all $\ell\in [K]$ in $[\tau_{1,k+1}, 1]$, where $q_{1|\ell}$ can be positive only in $[\tau_{1,\ell}, 1]$ for $\ell>k$ and $q_{1|\ell}(\tau_{1,k+1}) > q_{1|\ell+1}(\tau_{1,k+1})$ for $\ell<k+1$. Consider the case $x < \tau_{1,k+1}$. Set $d := \int_{\tau_{1,k+1}}^{1} r_{1|K}(x) d x$. Then by $q_{1|k+1}(\tau_{1,k+1}) + \frac{d}{\tau_{1,k+1}} = \alpha_{k+1}(\tau_{1,k+1})$ and $q_{1|k+1}(\tau_{1,k+1}) = 0$ we get $d = \tau_{1,k+1} \alpha_{k+1}(\tau_{1,k+1})$.
If there exist $\tau_{1,k}<\tau_{1,k+1}$ and $q_{1|\ell}$ for $\ell\in[k]$ that satisfy~\eqref{eq:slack_jk}, then we must have the following:
\begin{align*}
& q_{1|\ell}(x) + \txts\frac{1}{x} \int_{x}^{\tau_{1,k+1}} r_{1|\ell}(y) d y + \frac{d}{x} = \alpha_\ell(x), \qquad \ell\in[k] \\
& r_{1|k}(x) + \txts\frac{k}{x} \int_{x}^{\tau_{1,k+1}} r_{1|k}(y) d y + \frac{kd}{x} = \gamma_k(x).
\end{align*}
By Lemma~\ref{lemma:tech1} and the above equations we must have
\begin{align}
& r_{1|k}(x) = x^{k-1} [\txts\frac{\gamma_k(\tau_{1,k+1}) - k\alpha_{k+1}(\tau_{1,k+1})}{\tau^{k-1}_{1,k+1}} - \int_{x}^{\tau_{1,k+1}}\frac{(y \gamma_k(y))'}{y^k}d y] \label{eq:r1} \\
& q_{1|\ell}(x) = \txts\frac{r_{1|k}(x) - \gamma_k(x)}{k} + \alpha_\ell(x), \qquad \ell\in[k]. \label{eq:q1}
\end{align}
From~\eqref{eq:r1} and~\eqref{eq:q1} the function $q_{1,k}$ must agree on $[\tau_{1,k},\tau_{1,k+1}]$ with $q(x)$, where $q(x)$ is given by $q(x) =\frac{r(x) - \gamma_k(x)}{k} + \alpha_k(x)$ and $r(x)$ is given by $r(x) = x^{k-1} [\txts\frac{\gamma_k(\tau_{1,k+1}) - k\alpha_{k+1}(\tau_{1,k+1})}{\tau^{k-1}_{1,k+1}} - \int_{x}^{\tau_{1,k+1}}\frac{(y \gamma_k(y))'}{y^k}d y]$.
Observe that $q(\tau_{1,k+1}) + \frac{d}{\tau_{1,k+1}} = \alpha_k(\tau_{1,k+1})$ and hence $q(\tau_{1,k+1}) = q_{1|k}(\tau_{1,k+1}) > q_{1|k+1}(\tau_{1,k+1}) = 0$.
On the other hand, we have $r(x) > k q(x)$ and thus $q'(x) > 0$. In~\eqref{eq:r1}, the term $(y\gamma_k(y))'$ is a positive polynomial in $y$ and takes value $K$ when $y=0$; thus $r(x)$ is negative when $x$ is close to $0$. It follows that $q(x)$ is also negative when $x$ is close to $0$ and hence the equation $q(x)=0$ has a unique solution in $(0,\tau_{1,k+1})$. Let $\tau_{1,k}\in(0,\tau_{1,k+1})$ be such that $q(\tau_{1,k})=0$. Then, we set $q_{1|\ell} := \frac{r(x) - \gamma_k(x)}{k} + \alpha_\ell(x)$ for $\ell\in[k]$ and $x\in[\tau_{1,k},\tau_{1,k+1}]$. By Lemma~\ref{lemma:tech2}(b) for $\ell < k$ we have $q_{1|\ell}(x)>q_{1|k}(x)$ and in particular $q_{1|\ell}(\tau_{1,k})>0$. It can be easily checked that the functions $q_{1|\ell}$'s are continuous in $[\tau_{1,k}, 1]$.
The base case with $j=1$ is completed.

\noindent \textbf{Inductive Step $(j,k)$.} Let $j \geq 2$ and $k \leq K$.
We state the induction hypothesis; the base case we proved above corresponds
to $j=2$ and $k = K$.
For each of the conditions, we also state (in parentheses) what we need to prove
for the inductive step.

\noindent\textbf{Induction Hypothesis.} Let $j \geq 2$ and $k \leq K$. Suppose we have constructed the following thresholds and functions satisfying~\eqref{eq:slack_jk}: (1) Thresholds $\tau_{i,\ell}$ for all $i<j$, $\ell\in[K]$ and $\tau_{j,\ell}$ for $\ell>k$, where $\tau_{i,\ell} < \tau_{i,\ell+1}$ and $\tau_{i,\ell}<\tau_{i-1,\ell}$ for appropriate $i,\ell$. (2) Functions $q_{i,\ell}$ for all $i<j$, $\ell\in[K]$ and $q_{j,\ell}$ for $\ell>k$. 
Moreover, the following holds.
\begin{compactitem}
\item[1.]
The functions $q_{i|\ell}$'s for all $i<j$, $\ell\in[K]$ are non-negative and continuous in $[0,1]$. The functions $q_{j|\ell}$'s for $\ell>k$ are non-negative and continuous in $[0,1]$.  
(We shall find $\tau_{j,k} < \min\{\tau_{j,k+1}, \tau_{j-1,k}\}$
and complete the definition of the function $q_{j|k}$.)

\item[2.]
In interval $[\tau_{j,k+1},1]$, the functions $q_{j|\ell}$'s for $\ell\in[k]$ are determined and they are non-negative and continuous . Moreover, they satisfy equation~\eqref{eq:q_rec}.
(We shall extend these functions to the range $[\tau_{j,k}, 1]$.)

\item[3.]
It holds that $r_{j|K}(x) > r_{j-1|K}(x)$ for $x\in(\tau_{j,k+1}, 1)$, and $r_{i|K}(x) > r_{i-1|K}(x)$ for $1\leq i < j$ and $x\in(\tau_{i,1},1)$.
(We shall show that $r_{j|K}(x) > r_{j-1|K}(x)$ for $x \in (\tau_{j,k},1)$.)


\item[4.]
In interval $[\tau_{j,\ell}, 1]$, where $k<\ell\leq K$, it holds that $q_{j|m}(x)>q_{j|m+1}(x)$ for $1\leq m < \ell$.
(We shall show that in the interval $[\tau_{j,k},1]$,
for $1 \leq m < k$,  $q_{j|m}(x)>q_{j|m+1}(x)$.)
\end{compactitem}


\noindent\textbf{Dual Feasibility.} Before we prove the
inductive step, we
first show that the above hypothesis implies dual feasibility.

\begin{lemma}[Dual Feasibility] \label{lemma:dual_feas}
Consider dual functions and thresholds that satisfy (\ref{eq:slack_jk}),
and in particular suppose for some $i > 0$, we have $r_{i|K}(x) > r_{i-1|K}(x)$
on $(\tau_{i,1}, 1)$ (Condition 3).
Then, for all $k \in [K]$, the $(i,k)$-th constraint in $\CD(J,K)$ is satisfied;
moreover, strict inequality holds for $x\in[0,\tau_{i,k})$.
\end{lemma}

\begin{proof}
From (\ref{eq:slack_jk}), the $(i,k)$-th constraint is 
equality for $x\in[\tau_{i,k}, 1]$. Since~\eqref{eq:slack_jk} holds at $x=\tau_{i,k}$, we have $\frac{1}{\tau_{i,k}}\int_{\tau_{i,k}}^{1}[r_{i|K}(y)-r_{i-1|K}(y)] d y = \alpha_k(\tau_{i,k})$. Suppose $x\in[0,\tau_{i,k})$, then by Lemma~\ref{lemma:tech2}(a) we have $\tau_{i,k} \alpha_k(\tau_{i,k}) > x \alpha_k(x)$. Observe that condition 3 implies that $r_{i|K}(x) \geq r_{i-1|K}(x)$,
as the two functions are equal outside $(\tau_{i,1},1)$. It follows that
$ q_{i|k}(x) + \txts\frac{1}{x}\int_{x}^{1}[r_{i|K}(y)-r_{i-1|K}(y)] d y
\geq \frac{1}{x}\int_{\tau_{i,k}}^{1}[r_{i|K}(y)-r_{i-1|K}(y)] d y
 = \txts\frac{\tau_{i,k} \alpha_k(\tau_{i,k})}{x} > \alpha_k(x)$.
\end{proof}



\noindent \textbf{Conditions 1 and 2.}
Now we wish to determine the threshold $\tau_{j,k}$ and define functions $q_{j,\ell}$'s for $\ell\in[k]$ in $[\tau_{j,k},\tau_{j,k+1}]$. Define $d_j := \int_{\tau_{j,k+1}}^{1} [r_{j|K}(y) - r_{j-1|K}(y)]d y$. If such $\tau_{j,k}$ and $q_{j,\ell}$ for $\ell\in[k]$ exist, then by~\eqref{eq:slack_jk} we must have the following for $\ell\in[k]$ and $x\in[\tau_{j,k},\tau_{j,k+1}]$:
\begin{align}
q_{j|\ell}(x) + \txts\frac{1}{x} \int_{x}^{\tau_{j,k+1}} [r_{j|K}(y) - r_{j-1|K}(y)] d y + \frac{d_j}{x} = \alpha_\ell(x). \label{eq:q_rec}
\end{align}
Since $q_{j|k+1}$ satisfies~\eqref{eq:slack_jk} at point $\tau_{j,k+1}$, we have $d_j = \tau_{j,k+1} \alpha_{k+1}(\tau_{j,k+1})$. 
Summing up the above equation over $\ell\in[k]$,
and observing that $r_{j|K}(x) = r_{j|k}(x)$ for $x \leq \tau_{j,k+1}$, we get
\begin{align}
r_{j|k}(x) + \txts\frac{k}{x} \int_{x}^{\tau_{j,k+1}} [r_{j|k}(y) - r_{j-1|K}(y)] d y + \frac{k \tau_{j,k+1} \alpha_{k+1}(\tau_{j,k+1})}{x} = \gamma_k(x). \label{eq:r_rec}
\end{align}

By Lemma~\ref{lemma:tech1}, we can conclude that
$r_{j|k}(x)$ must agree on $[\tau_{j,k}, \tau_{j,k+1}]$
with the following function $r(x)$:
\begin{align}
r(x) = x^{k-1} \txts\left[\txts\frac{\gamma_k(\tau_{j,k+1})-k \alpha_{k+1}(\tau_{j,k+1})}{\tau_{j,k+1}^{k-1}} - \int_{x}^{\tau_{j,k+1}} \frac{(y \gamma_k(y))' - k r_{j-1|K}(y)}{y^k} d y \right]. \label{eq:r_from_rec}
\end{align}

Comparing equations~\eqref{eq:q_rec} and~\eqref{eq:r_rec}, we
conclude that for $\ell\in[k]$, for $x \in [\tau_{j,k}, \tau_{j,k+1}]$,
\begin{align}
q_{j|\ell}(x) = \txts\frac{r_{j|k}(x) - \gamma_k(x)}{k} + \alpha_\ell(x). \label{eq:q_from_r}
\end{align}

%
%

In particular, $q_{j|k}$ must agree on $[\tau_{j,k},\tau_{j,k+1}]$ with the function 
$q(x) = \frac{r(x) - \gamma_k(x)}{k} + \alpha_k(x)$.

%


\noindent Let $\widehat{\tau} := \min\{\tau_{j-1,k}, \tau_{j,k+1}\}$. We
wish to extend the $q_{j|l}$'s to $[\widehat{\tau}, \tau_{j,k+1}]$ first.
For  $\widehat{\tau} = \tau_{j,k+1} < \tau_{j-1,k}$,
this is done and by induction hypothesis,
we have $q(\widehat{\tau}) = q_{j|k}(\tau_{j,k+1}) > q_{j|k+1}(\tau_{j,k+1}) = 0$.
Next, we assume $\widehat{\tau} = \tau_{j-1,k} \leq \tau_{j,k+1}$.
Suppose $x \in[\tau_{j-1,k},\tau_{j,k+1}]$. Note that $\tau_{j,k+1} < \tau_{j-1,k+1}$. Thus we have
\begin{align*}
r_{j-1|k}(x) + \txts\frac{k}{x} \int_{x}^{\tau_{j,k+1}} [r_{j-1|k}(y) - r_{j-2|K}(y)] d y + \frac{k \tau_{j-1,k+1} \alpha_{k+1}(\tau_{j-1,k+1})+kc}{x} = \gamma_k(x),
\end{align*}
where $c = \int_{\tau_{j,k+1}}^{\tau_{j-1,k+1}} [r_{j-1|K}(y) - r_{j-2|K}(y)] d y > 0$, because by the hypothesis $r_{j-1|K}(y) > r_{j-2|K}(y)$
for $y \geq \tau_{j-1,k} > \tau_{j-1,1}$.


Comparing the above equation with~\eqref{eq:r_rec} and using Lemma~\ref{lemma:tech1} and Lemma~\ref{lemma:tech2}(a) we have $r(x) > r_{j-1|k}(x)$ for all $x\in[\tau_{j-1,k},\tau_{j,k+1}]$. Also, we have $q(x) = \frac{r(x) - \gamma_k(x)}{k} + \alpha_k(x) > \frac{r_{j-1|k}(x) - \gamma_k(x)}{k} + \alpha_k(x) = q_{j-1|k}(x) \geq 0$. Now, for $x\in[\tau_{j-1,k},\tau_{j,k+1}]$, we set $q_{j|\ell}(x) := \frac{r(x) - \gamma_k(x)}{k} + \alpha_\ell(x)$ for $\ell\in[k]$. Then
\begin{align}
r_{j|K}(x) = r_{j|k}(x) = r(x) > r_{j-1|K}(x). \label{eq:r_ineq}
\end{align}
Also $q_{j|\ell}(x)>0$ for $\ell\in[k]$ by Lemma~\ref{lemma:tech2}(b). It can be easily checked that the functions $q_{j|\ell}$'s are continuous. In particular, we also have $q(\widehat{\tau}) > 0$.

Hence, we have show that for $x \in [\widehat{\tau}, \tau_{j,k+1}]$,
$q(x) > 0$.
We next analyze the behavior of $r(x)$ and $q(x)$ as $x$ tends to $0$. In the definition of $r(x)$, the function $r_{j-1|K}(y)$ has value 0 when $y<\tau_{j-1,1}$; the term $(y\gamma_k(y))'>0$ is by definition a polynomial of $y$, which takes value $K$ when $y=0$. Then $\frac{(y \gamma_k(y))' - k r_{j-1|K}(y)}{y^k}>0$ is unbounded and hence $r(x)$ is negative when $x$ is close to $0$. Since $q(x) < k r(x)$, it holds that $q(x)$ is also negative when $x$ is close to $0$. Since $q$ is continuous in $(0,\widehat{\tau}]$, there exists $x\in(0,\widehat{\tau})$ such that $q(x) = 0$. Let $\tau_{j,k}$ be the largest value in $(0,\min\{\tau_{j,k+1}, \tau_{j-1,k} \})$ such that $q(\tau_{j,k}) = 0$. Then $q(x)>0$ for $x\in(\tau_{j,k}, \widehat{\tau}]$.

Now, for $x\in[\tau_{j,k}, \widehat{\tau}]$, we set $q_{j|\ell}(x) := \frac{r(x) - \gamma_k(x)}{k} + \alpha_\ell(x)$ for $\ell\in[k]$. 
It can be easily checked that the functions $q_{j|\ell}$'s are continuous.

\noindent \textbf{Condition 4.}
By Lemma~\ref{lemma:tech2}(b), we have $\alpha_m(x) > \alpha_{m+1}(x)$ for $x \in (0,1)$, and so we have $q_{j|m}(x)>q_{j|m+1}(x)$ for $1 \leq m < k$.  

\noindent \textbf{Condition 3.}
If $\tau_{j-1,k} \leq \tau_{j,k+1}$,
then from~\eqref{eq:r_ineq} we already have
$r_{j|K}(x) > r_{j-1|K}(x)$ for $x \in [\tau_{j-1,k}, \tau_{j,k+1}]$; we are left to show $r_{j|K}(x)>r_{j-1|K}(x)$ for $x\in(\tau_{j,k}, \widehat{\tau})$. If $\tau_{j-1,k} > \tau_{j,k+1}$, then we are left to show $r_{j|K}(x) >r_{j-1|K}(x)$ for $x\in(\tau_{j,k}, \widehat{\tau}]$. We next show that $r_{j|K}(x) = r_{j|k}(x)>r_{j-1|K}(x)$ for all $x\in \mathcal{I}$, where $\mathcal{I}$ is defined as
\begin{align*}
\mathcal{I} =
\begin{cases}
x\in(\tau_{j,k}, \widehat{\tau}), & \tau_{j-1,k} \leq \tau_{j,k+1} \\
x\in(\tau_{j,k}, \widehat{\tau}], & \tau_{j-1,k} > \tau_{j,k+1}.
\end{cases}
\end{align*}


%
%

Suppose on the contrary $r_{j|K}(x) \leq r_{j-1|K}(x)$ for some $x \in\mathcal{I}$. Since $r_{j|K}$ and $r_{j-1|K}$ are continuous, there exists $z\in\mathcal{I}$ with $z\geq x$ such that $0 < r_{j|K}(z) = r_{j-1|K}(z)$. Let $m<k$ be the integer such that $\tau_{j-1,m} \leq z < \tau_{j-1,m+1}$. Then $r_{j|k}(z) = r_{j-1|m}(z)$. By Lemma~\ref{lemma:dual_feas} we have
\begin{align*}
& q_{j-1|\ell}(z) + \txts\frac{1}{z}\int_{z}^{1} [r_{j-1|K}(y)-r_{j-2|K}(y)] d y = \alpha_\ell(z), \qquad 1\leq \ell \leq m \\
& q_{j-1|\ell}(z) + \txts\frac{1}{z}\int_{z}^{1} [r_{j-1|K}(y)-r_{j-2|K}(y)] d y > \alpha_\ell(z), \qquad m < \ell \leq k.
\end{align*}
Since at least one strict inequality holds for $\ell\in [k]$, we have
\begin{align*}
r_{j-1|k}(z) + \txts\frac{k}{z}\int_{z}^{1} [r_{j-1|K}(y)-r_{j-2|K}(y)] d y > \gamma_k(z).
\end{align*}
Then, for all $l \in [m]$, we get $q_{j-1|\ell}(z) < \frac{r_{j-1|k}(z) - \gamma_k(z)}{k} + \alpha_\ell(z) = \frac{r_{j|k}(z) - \gamma_k(z)}{k} + \alpha_\ell(z)$, because we have $r_{j-1|k}(z) = r_{j-1|m}(z) = r_{j|k}(z)$. From~\eqref{eq:q_from_r}, we have $q_{j-1|\ell}(z) < q_{j|\ell}(z)$ for $\ell\in[m]$. But this implies $r_{j-1|m}(z) < r_{j|m}(z) \leq r_{j|k}(z)$, a contradiction.

This completes the induction proof.


%
%
%

\end{proof}

\ignore{
\vspace{10cm}
(Old)

In the $J$-choice $K$-best secretary problem, proposed by Buchbinder et al.~\cite{Buchbinder09}, the algorithm is allowed to make at most $J$ choices and every item with rank at most $K$ has value $1$ while all other items have value $0$. In this section we use the $(J,K)$ problem to denote the $J$-choice $K$-best secretary problem. The outcome of an algorithm is the value obtained and the goal is to maximize the expected outcome (payoff). Similar as the $J$-choice secretary problem, we can use a continuous LP to give a lower bound on the maximum payoff. Let $(CP^{J,K})$ be a continuous LP described as follows.

\begin{align*}
(CP^{J,K}) \qquad \max \qquad & \txts\sum_{j=1}^{J} \sum_{k=1}^{K} \int_{0}^{1} \left(\sum_{\ell=k}^{K} \binom{\ell-1}{k-1} (1-x)^{\ell-k} \right) x^{k-1} p_{j|k} (x) dx \\
\text{s.t.} \qquad & p_{1|k}(x) \leq 1 - \txts\int_{0}^{x} \frac{1}{y} \sum_{\ell=1}^{K} p_{1|\ell}(y) d y, \qquad \forall x\in[0,1], k\in[K] \\
& p_{j|k}(x) \leq \txts\int_{0}^{x} \frac{1}{y} \sum_{\ell=1}^{K} [p_{(j-1)|\ell}(y) - p_{j|\ell} (y)] d y, \\
& \qquad\qquad \forall x\in[0,1], k\in[K], 2\leq j \leq J \\
& p_{j|k} (x) \geq 0, \qquad \forall x\in[0,1], k\in[K], j\in[J].
\end{align*}


For each $x\in[0,1]$, we say $x$ is a $k$-potential if $x$ is the $k$-th best item among all items in $[0,x]$. For each $x$ and $j$ and $k$, define events $E_x^j$, $Z_x^j$, $V_x^k$ and $W_x^k$ as follows. Let $E_x^j$ be the event that time $x$ is selected as the $j$-th choices. Let $Z_x^j$ be the event that at least $j$ items have been selected before time $x$. Let $V_x^k$ be the event that time $x$ is a $k$-potential. Let $W_x^k$ be the event that time $x$ is the $k$-th best item overall. Moreover, let $p_{j|k}(x)=\Pr(E_x^j | V_x^k)$ be the probability that time $x$ is selected as the $j$-th choice conditioned on that $x$ is a $k$-potential. Then an algorithm for the $(J,K)$ problem can be represented by $p=(p_{j|k})_{j\in[J],k\in[K]}$.

\subsection{An Optimal Algorithm for the $(1,2)$ Problem}
Consider the following threshold algorithm $\calR$ for the special case when $J=1$ and $K=2$. Let $\tau_1\leq \tau_2$ be two constant times determined later. In time interval $[0,\tau_1)$, the algorithm observes and does not select any items. In time $[\tau_1,\tau_2)$, the algorithm selects an item if it is a $1$-potential and no item has been selected. In time $[\tau_2,1]$, the algorithm selects an item if it is a $1$- or $2$-potential and no item has been selected.

\noindent\begin{tabular}{|p{\textwidth}|}
\hline
\textbf{The Threshold Algorithm $\calR = \calR(\tau_1,\tau_2)$} \\
\textbf{Input:} Times $\tau_1$ and $\tau_2$ with $0 \leq \tau_1 \leq \tau_2 \leq 1$. \\
\begin{compactitem}
\item[1.]
In time interval $[0,\tau_1)$, observe arriving items (but select none).

\item[2.]
In time interval $[\tau_1,\tau_2)$, upon arrival of an item at time $x$, select $x$ if 
\begin{compactitem}
\item[(1)]
$x$ is a $1$-potential; and
\item[(2)]
no item has been selected.
\end{compactitem}

\item[3.]
In time interval $[\tau_2,1]$, upon arrival of an item at time $x$, select $x$ if 
\begin{compactitem}
\item[(1)]
$x$ is a $1$-potential or $2$-potential; and
\item[(2)]
no item has been selected.
\end{compactitem}
\end{compactitem} \\
\hline
\end{tabular}

We analyze the optimality of $\calR$ (for some $\tau_1$ and $\tau_2$) via the dual of $(CP^{J,K})$, which in general can be described as follows.
\begin{align*}
(CD^{J,K}) \qquad \min \qquad & \txts\sum_{k=1}^{K} \int_{0}^{1} q_{J|k}(x) d x \\
\text{s.t.} \qquad & q_{1|k} (x) + \txts\frac{1}{x} \int_{x}^{1} \sum_{\ell=1}^{K} q_{1|\ell}(y) d y \geq \alpha^k(x), \qquad \forall x\in[0,1], k\in[K] \\
& q_{j|k}(x) + \txts\frac{1}{x} \int_{x}^{1} \sum_{\ell=1}^{K} [q_{(j)|\ell}(y) - q_{(j-1)|\ell}(y)] d y \geq \alpha^k(x) \\
& \qquad\qquad \forall x\in[0,1], k\in[K], 2 \leq j \leq J \\
& \alpha^k(x) = \txts\sum_{\ell=k}^{K} \binom{\ell-1}{k-1} (1-x)^{\ell-k} x^{k-1} \\
& q_{j|k}(x) \geq 0, \qquad \forall x\in[0,1], k\in[K], j\in[J].
\end{align*}

\begin{theorem} [Optimality of $\calR$] \label{th:opt_kbest}
There exist $\tau_1$ and $\tau_2$ with $0 < \tau_1 \leq \tau_2 < 1$ such that the threshold algorithm $\calR(\tau_1,\tau_2)$ is optimal for the $1$-choice $2$-best secretary problem under the infinite model. In particular, the optimal payoff is achieved at $\tau_1 = - W(-\frac{2}{3e})$ and $\tau_2 = \frac{2}{3}$, where $W$ is the Lambert W function, and the optimal payoff is $2 \tau_1-\tau_1^2 \approx 0.573567$.
\end{theorem}

\begin{proof}
For the $(1,2)$ problem, we have $\alpha^1(x) = 2 - x$ and $\alpha^2(x) = x$. Let $p=(p_{1|1},p_{1|2}) = (p_1,p_2)$ be the functions representing $\calR(\tau_1,\tau_2)$. Then for $0\leq x \leq 1$ we have the following
\begin{align*}
&
\begin{cases}
p_1(x) = 0, & 0 \leq x < \tau_1 \\
p_1(x) = 1 - \int_{x}^{1} \frac{p_1(y)+p_2(y)}{y} d y, & \tau_1 \leq x \leq 1
\end{cases} \\
&
\begin{cases}
p_2(x) = 0, & 0 \leq x < \tau_2 \\
p_2(x) = 1 - \int_{x}^{1} \frac{p_1(y)+p_2(y)}{y} d y, & \tau_2 \leq x \leq 1
\end{cases}
\end{align*}

Note that $p$ is always feasible for $(CP^{1,2})$. It can be checked that $q=(q_1,q_2)$ is dual feasible and satisfies the complementary slackness conditions together with $p$ if the following three conditions are satisfied:
\begin{enumerate}[(1)]
\item
both $q_1(x)$ and $q_2(x)$ are continuous in $[0,1]$;

\item
both $q_1(x)$ and $q_2(x)$ are differentiable and have non-negative derivatives in $(0,1)\setminus\{\tau_1,\tau_2\}$;

\item
the following holds
\begin{align*}
&
\begin{cases}
q_1(x)=0, & 0 \leq x < \tau_1 \\
q_1(x) + \frac{1}{x} \int_{x}^{1} [q_1(y)+q_2(y)] d y = 2 - x, & \tau_1 \leq x \leq 1 \\
\frac{1}{\tau_1} \int_{\tau_1}^{1} [q_1(y)+q_2(y)] d y = 2 - \tau_1
\end{cases} \\
&
\begin{cases}
q_2(x)=0, & 0 \leq x < \tau_2 \\
q_2(x) + \frac{1}{x} \int_{x}^{1} [q_1(y)+q_2(y)] d y = x, & \tau_2 \leq x \leq 1 \\
\frac{1}{\tau_2} \int_{\tau_2}^{1} [q_1(y)+q_2(y)] d y = \tau_2.
\end{cases}
\end{align*}
\end{enumerate}

We show that there exists $q$ and $\tau_1$ and $\tau_2$ such that conditions (1), (2) and (3) are satisfied. Then the primal $p$ is optimal for $(CP^{1,2})$ and hence $\calR(\tau_1,\tau_2)$ is optimal for the $(1,2)$ problem.

Suppose $\tau_2\leq x \leq 1$. Define $r(x) := q_1(x) + q_2(x)$. Solving the equation $r(x) + \frac{2}{x}\int_{x}^{1} r(y) d y = 2$ with $r(1) = 2$ (by condition (3)) we have $r(x) = 4x - 2$. Again by condition (3) we obtain $q_1(x) = x$ and $q_2(x) = 3x-2$ for $\tau_2 \leq x \leq 1$. Note that $q_2(\tau_2)=0$ (by condition (1)) and $\frac{1}{\tau_2} \int_{\tau_2}^{1} [q_1(y)+q_2(y)] d y = \tau_2$ (by condition (3)) can be satisfied by setting $\tau_2 := \frac{2}{3}$.

Suppose $\tau_1 \leq x < \tau_2$. Solving the equation $q_1(x) + \frac{1}{x}\int_{x}^{\tau_2} q_1(y) d y + \frac{1}{x} \int_{\tau_2}^{1} (4y-2) d y = 2 - x$ with $\tau_2=\frac{2}{3}$ (by condition (3)) we have $q_1(x) = 2 \ln \frac{3x}{2} - 2 x + 2$. Note that $q_1(\tau_1)=0$ (by condition (1)) and $\frac{1}{\tau_1} \int_{\tau_1}^{1} [q_1(y)+q_2(y)] d y = 2 - \tau_1$ (by condition (1)) can be satisfied by setting $\tau_1 := - W(-\frac{2}{3e})$. 

In conclusion, the conditions (1), (2) and (3) are satisfied when $\tau_1=- W(-\frac{2}{3e})$ and $\tau_2 = \frac{2}{3}$, and
\begin{align*}
\begin{cases}
q_1(x) = q_2(x) = 0, & 0\leq x < \tau_1 \\
q_1(x) = 2 \ln \frac{3x}{2} - 2 x + 2 \text{ and } q_2(x) = 0, & \tau_1\leq x < \tau_2 \\
q_1(x) = x \text{ and } q_2(x) = 3x - 2, & \tau_2 \leq x \leq 1.
\end{cases}
\end{align*}
The optimal payoff is $\int_{0}^{1}[q_1(x)+q_2(x)] d x = 2 \tau_1 \ln\frac{\tau_2}{\tau_1} - 3 \tau_1 \tau_2 + 2 \tau_1 + \tau_1^2 \approx 0.573567$.
\end{proof}

\subsection{A General Approach to the $(J,K)$ Problem}


In the optimal algorithm for the $(J,1)$ case, a new quota is released at each threshold. In the optimal algorithm for the $(1,2)$ case, only $1$-potential is allowed to select after threshold $\tau_2$, while both $1$-potential and $2$-potential are allowed to select after threshold $\tau_1$. We generalize the ideas to the $(J,K)$ case.

\noindent\textbf{A Threshold Algorithm for the $(J,K)$ Problem.} Let $(\tau_{j,k})_{j\in[J],k\in[K]}$ be $J\cdot K$ times such that $\tau_{J,k} \leq \tau_{J-1,k} \leq \cdots \leq \tau_{1,k}$ for all $1\leq k \leq K$ and $\tau_{j,1} \leq \tau_{j,2}\leq \cdots \leq \tau_{j,K}$ for all $1 \leq j \leq J$. We say an item is at least $k$-potential if it is an $\ell$-potential for some $1\leq \ell \leq k$. For each $j = J,\ldots,1$, a quota $Q_j$ is assigned to the algorithm at time $\tau_{j,1}$; for $2\leq k \leq K$, starting from time $\tau_{j,k}$ all quotas $Q_{j'}$ with $j'\geq j$ can be used to select an item that is at least $k$-potential. Since $\tau_{j',1} \leq \tau_{j,1} \leq \tau_{j,k}$ for $j'\geq j$, the algorithm is well defined.

We apply the above algorithm to the $(2,2)$ case, and find the optimal payoff via the continuous LP. In particular, each optimal primal $p_{j|k}$ and dual $q_{j|k}$ corresponds to time $\tau_{j,k}$. The values of $\tau_{j,k}$ and the optimal payoff are included in the following proposition.

\begin{proposition} \label{prop:opt_22}
The threshold algorithm is optimal for the $(2,2)$ problem when the thresholds $\tau_{j,k}$ are defined as follows.

\begin{itemize}
\item
$\tau_{1,2} = \frac{2}{3} \approx 0.666667$;

\item
$\tau_{1,1} = - W(-\frac{2}{3e}) \approx 0.346982$;

\item
$\tau_{2,2} \approx 0.517291$ is the solution of the function $x \ln x + \ln x - (2+3\ln\frac{2}{3}) x + 1 - \ln\frac{2}{3}=0$;

\item
$\tau_{2,1} = - W(-e^{-c/2}) \approx 0.227788$, where $c := -(\ln \tau_{1,1})^2 + 2 \ln\frac{2}{3}\ln \tau_{1,1} + (\ln \tau_{2,2})^2 - 2\ln\frac{2}{3}\ln \tau_{2,2} - 2 \tau_{2,2} + 4 - 2\ln\frac{2}{3}$.
\end{itemize}

The optimal payoff is $2 \tau_{1,1} - \tau_{1,1}^2 - 2 \tau_{2,1} \ln \tau_{2,1}+2 \tau_{2,1} + \tau_{2,1}^2 - c \tau_{2,1} \approx 0.977256$.
\end{proposition}
}

{
\bibliographystyle{plain}
\bibliography{secretary}
}

\end{document}